\newcommand{\blind}{0}
\newcommand{\Mean}{{\mathbb{E}}}
\newcommand{\Var}{{\mbox{Var}}}
\newcommand{\Cov}{{\mbox{Cov}}}
\newcommand{\Corr}{{\mbox{corr}}}
\newcommand{\bas}{\begin{eqnarray*}}
\newcommand{\eas}{\end{eqnarray*}}
\newtheorem{thm}{Theorem}
\newtheorem{prop}{Proposition}
\newtheorem{asmp}{Assumption}
\newtheorem{cor}{Corollary}
\theoremstyle{definition}
\newtheorem{example}{Example}
\begin{document}

\def\spacingset#1{\renewcommand{\baselinestretch}%
{#1}\small\normalsize} \spacingset{1}

%%%%%%%%%%%%%%%%%%%%%%%%%%%%%%%%%%%%%%%%%%%%%%%%%%%%%%%%%%%%%%%%%%%%%%%%%%%%%%

\if0\blind
{
  \title{\bf   Causal Inference in Biomedical Imaging via Functional Linear Structural Equation Models  }
    \author{Ting Li
    \hspace{.2cm}\\
    School of  Statistics and Data Science,  \\  Shanghai University of Finance and Economics, Shanghai, China\\
   Ethan Fan \\
    Department of Biostatistics, Duke University, Durham, USA \\
   Tengfei Li and    Hongtu Zhu\thanks{ Address for correspondence: Hongtu Zhu, Ph.D., E-mail: htzhu@email.unc.edu. %\\ 
 % This research has been conducted using the UK Biobank resource
 % (application number 22783), subject to a data transfer agreement.  
 %    We thank the participants in the UKB study for their contribution and the research teams for the work in collecting, processing, and disseminating these datasets for analysis. We thank the University of North Carolina at Chapel Hill and the Research computing groups for providing computational resources and support that have contributed to the research results. \\ 
 %  This paper was primarily completed during Dr. Ting Li's visit to UNC Chapel Hill. Dr. Fan's work  was partially supported by the National Science Foundation (NSF) grants DMS-2230795 and DMS-2230797. Dr. Zhu's work was partially supported by the Gillings Innovation Laboratory on generative AI and by grants from the National Institute on Aging (NIA) of the National Institutes of Health (NIH), including   1R01AG085581, and RF1AG082938, the National Institute of Mental Health (NIMH) grant 1R01MH136055,  and the NIH grants R01AR082684, and 1OT2OD038045-01. The content of this paper is solely the responsibility of the authors and does not necessarily represent the official views of these institutions.
  } \\
    Departments of Radiology, Computer Science, Genetics, and Biostatistics, \\ University of North Carolina at Chapel Hill, Chapel Hill, USA}
    \date{}
  \maketitle
} \fi

\if1\blind
{
  \bigskip
  \bigskip
  \bigskip
  \begin{center}
    {\LARGE \bf Causal Inference in Biomedical Imaging via Functional Linear Structural Equation Models }
\end{center}
  \medskip
} \fi

\bigskip
\begin{abstract} 
Understanding the causal effects of organ-specific features from medical imaging on clinical outcomes is essential for biomedical research and patient care. We propose a novel Functional Linear Structural Equation Model (FLSEM) to capture the relationships among clinical outcomes, functional imaging exposures, and scalar covariates like genetics, sex, and age. 
Traditional methods struggle with the infinite-dimensional nature of exposures and complex covariates. Our FLSEM overcomes these challenges by establishing identifiable conditions using scalar instrumental variables. We develop the Functional Group Support Detection and Root Finding (FGSDAR) algorithm for efficient variable selection, supported by rigorous theoretical guarantees, including selection consistency and accurate parameter estimation. 
We further propose a test statistic to test the nullity of the functional coefficient, establishing its null limit distribution. 
Our approach is validated through extensive simulations and applied to UK Biobank data, demonstrating robust performance in detecting causal relationships from medical imaging.
\end{abstract}

\noindent%
{\it Keywords:} Causal Effect;  Functional Linear Structural Equation Model;  Identification;  Imaging Genetics;  Instrumental Variable.

\spacingset{1.9} % DON'T change the spacing!
\section{Introduction}

 This paper proposes the Causal-Genetic-Imaging-Clinical (CGIC) framework to uncover causal pathways linking genetic factors, organ-level imaging markers, and clinical outcomes for complex phenotypes such as Alzheimer’s Disease (AD), while accounting for environmental influences and unobserved confounders \citep{zhu2023statistical, le2019mapping, taschler2022causal}. Imaging modalities like functional Magnetic Resonance Imaging (fMRI) play a central role in capturing anatomical and functional properties of organs such as the brain \citep{zhou2021review}. The CGIC framework extends Jack’s influential hypothetical model of AD progression \citep{jack2010hypothetical, jack2013tracking}, offering a more comprehensive view of disease etiology (Figure \ref{fig:illustration}(a)). Large-scale studies such as the Alzheimer's Disease Neuroimaging Initiative (ADNI) provide critical multimodal data for investigating the interplay among genetics, imaging, and clinical outcomes.

A central challenge in establishing CGIC pathways is the presence of unmeasured or poorly controlled confounders \citep{vandenbroucke2004observational}. Mendelian Randomization (MR) offers a powerful approach to infer causal effects by leveraging genetic variants as instrumental variables (IVs), under the assumption that these variants affect the outcome only through the exposure of interest \citep{lawlor2008mendelian, burgess2017review, sanderson2022mendelian}. However, applying MR in high-dimensional genetic settings is complex. Genetic variables can play diverse roles—some serve as valid instruments, others act as confounders, precision covariates, or irrelevant noise.
Moreover, the number of genetic variants often far exceeds the sample size, posing significant statistical and computational challenges.

%In line with MR principles, we use genetic variants as potential IVs to explore causal relationships between exposures and clinical outcomes. However, methodological challenges arise from the infinite-dimensional nature of functional exposures and the high-dimensional mixed effects of genetic IVs. Environmental factors and lifestyle influences further complicate causal inference due to their simultaneous effects on functional data (e.g., brain structure) and health outcomes. Addressing these complexities requires advanced statistical methods capable of handling the intricate dynamics within the CGIC pathway. 
%Our work aims to enhance the robustness of causal inferences within the CGIC framework, providing a more nuanced understanding of disease mechanisms and advancing the application of MR to high-dimensional, functional genetic data.

Aligned with the principles of MR, we leverage genetic variants as potential IVs to investigate causal relationships between functional exposures and clinical outcomes. However, unique methodological challenges arise in this context due to the infinite-dimensional nature of functional exposures (e.g., brain imaging data), the high-dimensional and mixed roles of genetic variants, and the confounding influence of environmental and lifestyle factors on both exposures and outcomes. These complexities call for advanced statistical frameworks capable of rigorously modeling the CGIC pathway.

\begin{figure}
	\centering
	% Requires \usepackage{graphicx}
\includegraphics[width=0.9\textwidth]
 {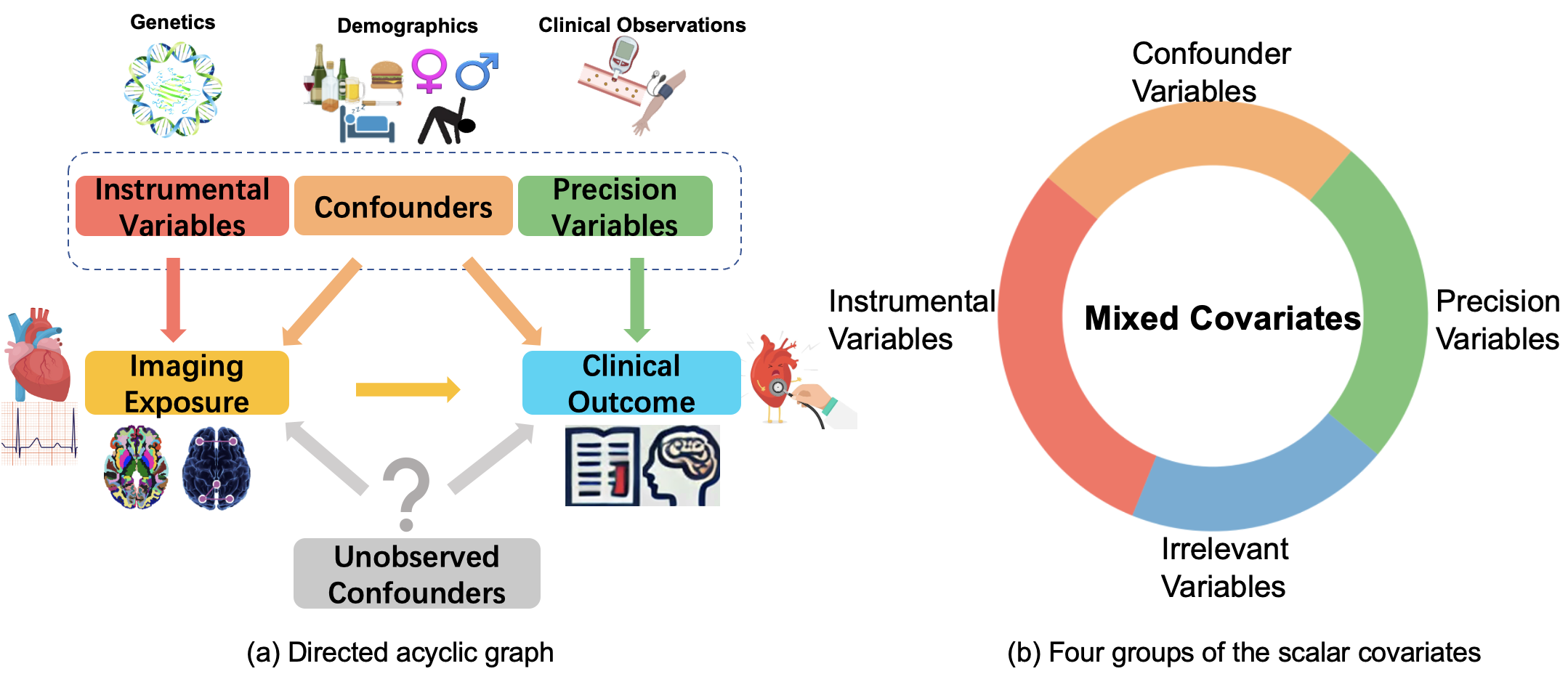}
	%\vspace{-0.1 in}
	\caption{\small (a) Directed acyclic graph showing the causal genetic-imaging-clinical (CGIC) pathway that links from genetic factors to organ 
imaging measures  to clinical outcomes confounded with environmental factors (e.g., lifestyle factors)  and  possible unobserved confounders. (b) Four groups of  mixed  covariates.
}
	\label{fig:illustration}
	\vspace{-0.1 in}
\end{figure}

%To address these challenges, we propose the Functional Linear Structural Equation Model (FLSEM), which comprises two interconnected models. The first model characterizes the relationship between the scalar outcome, the endogenous functional exposure, and the scalar covariates. The second model captures the relationship between the functional exposure and the instrumental scalar covariates.
%Compared with the existing literature discussed above, our major contributions include
%

To enhance the robustness and interpretability of causal inference in this setting, we propose the Functional Linear Structural Equation Model (FLSEM). This model consists of two interconnected components:
(i) a structural equation linking a scalar outcome to an endogenous functional exposure and scalar covariates; and
(ii) an exposure model describing the functional predictor in terms of scalar instrumental variables.

Our main contributions are as follows:

 (C1)  We propose a three-step procedure for selecting the desired instruments and important control variables using $L_0$ penalty.
First, we identify the relevant variables using
the functional on scalar model under the reproducing kernel Hilbert space (RKHS) framework using the $L_0$ penalty.
 Second, we replace the treatment variable with its post-adaptive predicted value and select useful controls. Third, we use the selected controls and the predicted treatment variable to obtain the treatment effect estimator following the partial functional linear model. 

(C2)  Different from previous studies that use the functional instrumental variables,
we establish the identifiable conditions for the FLSEM, where both the instrumental covariates and useful controls are scalar.
To the best of our knowledge, this is the first identifiable condition for the functional exposure model with a mixed set of scalar instruments.

(C3)  We propose the Functional Group Support Detection and Root Finding (FGSDAR) algorithm to select relevant instrumental covariates for the functional exposure.

(C4)  Theoretically, we establish the selection consistency of the true instruments and control,  characterize the estimation error of both functional and
scalar estimates, and   propose a test statistic to test the nullity of the functional coefficient.  

(C5)  We conduct extensive simulation studies and analyze a real dataset from UK Biobank to demonstrate the finite sample performance of the proposed methods.

%Our advancements significantly extend the existing literature by providing a robust framework for handling complex data structures in causal analysis, offering both theoretical and practical enhancements to the field.

\subsection{Related Work}

Endogeneity in functional data has attracted growing attention, yet most econometric solutions still hinge on Tikhonov regularization for ill-posed inverse problems. A seminal example is \cite{florens2015instrumental}, which establishes IV estimation for functional linear models, convergence rates, and a notion of “instrumental strength.” Subsequent work includes spline‐based two-step estimators with measurement error \citep{tekwe2019instrumental}, comparable two-step procedures with nullity tests \citep{jadhav2022function}, and fully Bayesian error correction \citep{zoh2022fully}. Shifting to scalar instruments, \cite{babii2022high} extend Florens et al.’s framework. Additional advances feature a GMM estimator for serially correlated predictors \citep{chen2022functional}, an exogeneity test with asymptotic guarantees \citep{dorn2022testing}, FPCA-based IV methods \citep{seong2022functional}, and Bayesian mediation models for spatial imaging data that exploit latent-confounder structure \citep{xu2023bayesian,xu2024bayesian}. Yet none of these methods tackles the high-dimensional, mixed sets of scalar instruments and controls typical of modern genetics and neuroimaging studies. Our work closes this gap.

Our work advances function-on-scalar regression with high-dimensional scalar covariates—a setting where most methods first expand each functional coefficient in a fixed basis and then impose group penalties to achieve sparsity. Representative examples include group MCP \citep{chen2016variable} and group LASSO \citep{barber2017function}. Robust variants handle outliers via exponential-squared loss plus group SCAD \citep{cai2022robust}, or by combining least-squares loss with an adaptive LASSO \citep{fan2017high}. Other directions promote simultaneous smoothness and selection through RKHS‐based penalties \citep{parodi2018simultaneous} or a smoothing elastic net \citep{mirshani2021adaptive}. Related matrix-response work employs the trace norm for selection \citep{kong2019l2rm}. 
Departing from these LASSO-type frameworks, we adopt an  $L_0-$penalty in the exposure model, which enforces exact sparsity without the shrinkage bias inherent to convex penalties. More importantly, whereas prior studies focus on predictive function-on-scalar models, our goal is causal: we estimate the effect of an endogenous functional exposure on a scalar outcome. To do so, we embed the $L_0-$based selection within a functional-on-scalar exposure model that explicitly corrects for endogeneity, thereby extending the existing literature to high-dimensional causal inference with functional data.

 The remainder of the paper is organized as follows.  
Section \ref{sec:Identification} introduces the FLSEM and its identifiability conditions.  
Section \ref{sec:estimation} outlines the estimation procedure and proposes a test statistic to test the nullity of the functional coefficient.  
Section \ref{sec:theoretical} establishes the consistency and convergence rates of our estimators and the null-limit theorem.  
Section \ref{sec:simulation} assesses the finite-sample performance via simulations.  
Section \ref{sec:realData} applies the proposed methods to UK Biobank data.  
Proofs and additional results are included in the supplementary materials.

%Nonparametric instrumental variable regression.

	\vspace{-0.1 in}
\section{Identifiability Conditions for FLSEM}
\label{sec:Identification}

 In this section, we begin by discussing the identifiability of general exposure models, laying the groundwork for examining the specific identifiability conditions of our proposed FLSEM. We consider the dataset $\{X_i, \bm{Z}_i, Y_i\}_{i=1}^n$, where for each sample $i$, $X_i=(X_{i\ell}) \in \mathbb{R}^p$ represents the scalar covariates, $\bm{Z}_i \in L^2(\mathcal{T})$ denotes the functional exposure, and $Y_i$ is the corresponding response.  
The outcome and exposure processes are, respectively, given by 
\begin{equation}
  \vspace*{-0.1 in}
Y_i  = \sum_{l=1}^{p} X_{i\ell}\beta_\ell+ \int_{\mathcal T} \bm{Z}_i(t)\bm{B}(t) dt + \epsilon_{i}~~~~\mbox{and}~~~~    
\bm{Z}_i(t) = \sum_{l=1}^{p} f_\ell (X_{i\ell}, t) +  \bm{E}_{i}(t),  \label{FLSEM} 
\end{equation}   
where 
$\beta_\ell$ captures the effect of the $\ell$-th scalar covariate $X_{i\ell}$ on the response $Y_i$, 
$\bm{B}(t)$ quantifies the impact of the functional exposure $\bm{Z}_i$ on $Y_i$, and
$f_\ell(X_{i\ell}, t)$ is a nonparametric function describing how the $\ell$-th covariate influences the functional exposure $\bm{Z}_i$. Moreover,  
$\epsilon_i$ represents the random error in the outcome model and 
$\bm{E}_i(t)$, a random process with mean zero, accounts for additional variability in the functional exposure.

A critical aspect of these models in (\ref{FLSEM}) is the potential correlation between $\bm{E}_i(t)$ and $\epsilon_i$,  introducing endogeneity. To address this, we set the following assumptions: (i) 
$E(\epsilon_i | X_i) = 0$ and $E\{\bm{E}_i(t) | X_i\} = 0$, ensuring that the error terms are conditionally zero mean independent of the covariates.
(ii) $E(\epsilon_i | \bm{Z}_i) \neq 0$ and $E\{\epsilon_i \bm{E}_i(t)\} \neq 0$, reflecting the influence of unobserved confounders. 
These model specifications and assumptions form the basis for our subsequent analysis, where we focus on the identifiability challenges posed by endogeneity and develop methods to address them within the FLSEM framework.

%{\bf confounder variables} that relate to both the outcome and the exposure, {\bf precision variables} that correlate with the outcome but not the exposure, {\bf instrumental variables} that correlate with the treatment but not the outcome, and {\bf irrelevant variables}  that are not related to  the treatment and the outcome.

% Let $ \mathcal{A} =  \{ 1,\ldots, {p}\}$.  Similar to  \cite{yu2022mapping}, we  divide all covariates into four distinct groups:      $\mathcal{C}=  \{\ell \in \mathcal{A}| \beta_\ell \neq 0 \textrm{ and } f_\ell \neq 0\}$, $\mathcal{P}=  \{\ell \in \mathcal{A}| \beta_\ell \neq 0 \textrm{ and } f_\ell = 0\}$, $\mathcal{I} =  \{\ell \in \mathcal{A}| \beta_\ell = 0 \textrm{ and } f_\ell \neq 0\}$,  and $\mathcal{S} =  \{\ell \in \mathcal{A}| \beta_\ell = 0 \textrm{ and } f_\ell = 0\}$,  representing 
% the indices of confounders, those of precision variables, those of instrumental variables, and those of irrelevant variables, respectively. 
% Following the definition of invalid instruments in  \cite{guo2018confidence}, the set $\mathcal{I}$ contains valid instruments, while set~$\mathcal{C}$ contains invalid instruments.

 Following   \cite{yu2022mapping}, we organize all covariates into four distinct groups in  $\mathcal{A} = \{1, \ldots, p\}$ (Figure \ref{fig:illustration} (b)).  Let $\mathcal{C} = \{\ell \in \mathcal{A} \mid \beta_\ell \neq 0 \text{ and } f_\ell \neq 0\}$ denote the indices of confounders, which are associated with both the outcome and the exposure. The set $\mathcal{P} = \{\ell \in \mathcal{A} \mid \beta_\ell \neq 0 \text{ and } f_\ell = 0\}$ includes precision variables that affect the outcome but not the exposure. The set $\mathcal{I} = \{\ell \in \mathcal{A} \mid \beta_\ell = 0 \text{ and } f_\ell \neq 0\}$ represents IVs that influence the exposure but not the outcome. Lastly, $\mathcal{S} = \{\ell \in \mathcal{A} \mid \beta_\ell = 0 \text{ and } f_\ell = 0\}$ corresponds to irrelevant variables with no association to either. According to   \cite{guo2018confidence}, $\mathcal{I}$ consists of valid instruments, whereas $\mathcal{C}$ contains invalid instruments due to their dual association with both outcome and exposure.

In our outcome model, $\bm{Z}_i(t)$ may be correlated with the error term $\epsilon_i$, violating the exogeneity assumption in classical partial functional linear regression models \citep{kong2016partially, yu2022mapping}. Here, the coefficient $\bm{B}(t)$ captures the causal effect of changes in $\bm{Z}_i(t)$ on $Y_i$, rather than mere association. Standard approaches \citep{kong2016partially, li2020inference} may yield biased estimates by ignoring potential endogeneity in $\bm{Z}_i(t)$. To address this, we incorporate the exposure model and IVs derived from $X_i$, establishing identifiability of $\bm{B}(t)$ and enabling valid causal estimation. This strategy clarifies the $\bm{Z}_i(t)$–$Y$ relationship and strengthens the validity of our causal inferences.

 \subsection{Identifiability Condition for General $f$}

The identifiability of parameters with scalar exposures has been well studied. For example, \cite{kang2016instrumental} propose the “50\% rule,” requiring a majority of instruments to be valid, while \cite{guo2018confidence} relax this to a “plurality rule” for broader applicability. Building on these insights, we extend identifiability conditions to the more complex setting of functional exposures. The infinite-dimensional nature of functional treatments poses unique challenges, particularly when instruments may be invalid. To our knowledge, this is the first study to address identifiability under such conditions.

 Following model \eqref{FLSEM}, let $f(X_{i}, t) = \sum_{\ell =1}^{p} f_\ell (X_{i\ell}, t)$,
  we  have the  moment condition that
\begin{eqnarray}
	\label{eq:reduced form}
g(X_i) =  E(Y_i | X_i) = X_i^\top \beta +  \int_\mathcal{T} f(X_{i}, t) \bm{B}(t) dt,
\end{eqnarray}
 leading to 
 $M_X g(X_i) =  \int_\mathcal{T} M_X  f(X_{i}, t) \bm{B}(t) dt $, where $M_X$ represents the orthogonal space  of the linear space spanned by $X$.
 %The following theorem characterizes the condition for identifying $\bm{B}(t)$.
Let ${\mathcal{K} } \bm{B}(t) = \int_\mathcal{T} E\{ M_X  f(X_{i}, s) M_X  f(X_{i}, t) \} \bm{B}(s) ds  $.
 Using the inverse of linear integral equations \citep{hsing2015theoretical},
 we present a straightforward and intuitive condition for identifying $\bm{B}(s)$.
 
 \begin{prop}
 	\label{prop:invalid IV identification}
 	Consider the reduced form \eqref{eq:reduced form}, if the operator ${\mathcal{K} }$ is injective, and the null space of the operator  ${\mathcal{K} }$ only contains 0 such that $\mathcal{N} ( \mathcal{K} )=\{0\}$, then $\bm{B}(t)$ is identifiable.
 \end{prop}

Proposition \ref{prop:invalid IV identification} follows directly from Theorem 3.5.1 of  \cite{hsing2015theoretical}.
Specifically, if there exists $\bm{B}_1(s)$ and $\bm{B}_2(s)$ satisfying
${\mathcal{K}} \bm{B}_1(s) = {\mathcal{K}} \bm{B}_2(s)$, or equivalently ${\mathcal{K}} ( \bm{B}_1(s) -  \bm{B}_2(s) )=0$, it directly implies that $\bm{B}_1(s) =  \bm{B}_2(s)$.
 This proposition extends the relevance rank condition \citep{liang2022selecting} commonly employed in linear IV models with scalar exposures to functional exposure.
 Notably, it differs from prior studies such as \cite{florens2015instrumental} and \cite{babii2022high}, where IVs are either provided or constructed based on an injective operator derived from the logistic cumulative distribution function. Instead, we consider infinite-dimensional IVs  $f(X, t)$.
 %, aligning with the model of $\bm{Z}(t)$.
Furthermore, Proposition \ref{prop:invalid IV identification} considers the  linear effect of $X$ on $Y$, and nonlinear effect of $X$ on $\bm{Z}$,
which separates the effects of the instruments and the confounders. 

 For the integral operator $\mathcal{K}$, the kernel $K(s, t) =  E\{   f(X, s)  f(X, t) \}$ plays a fundamental role.
Mercer's theorem provides the spectral decomposition 
$K(s, t)= \sum_{k=1}^\infty \lambda_k \varphi_k(s) \varphi_k(t)$,
where $\{ \varphi_k(\cdot), \lambda_k \geq 0  \}_{k \geq 1}$ forms an orthonormal basis in $L^2( \mathcal{T} )$. %with respect to the $L^2$-norm
Any $ \bm{B}(t) \in L^2( \mathcal{T} )$ can be expressed as $ \bm{B}(t) = \sum_{k=1}^\infty b_k \varphi_k(t) $
and ${\mathcal{K}} \bm{B}(s) = \sum_k \lambda_k b_k  \varphi_k(s) $.
The condition in Proposition \ref{prop:invalid IV identification} requires that $\lambda_k >0$ for all $k \geq 1$.
Consequently, if ${\mathcal{K}} \bm{B}(s) =0$, it implies $b_k=0$ for all $k$ and then $\bm{B}(s) =0$.
 
%We present several illustrative examples of kernel functions, of which the assumptions hold.  The Ornstein-Uhlenbeck covariance function, where $K(s,t )= \exp(-|t-s|)$, and the covariance function associated with the standard Brownian motion process, where $K(s, t )= \min(-|t-s|)$, are notable instances that satisfy our assumptions. Furthermore, a variety of reproducing kernels such as polynomial splines, periodic splines, thin-plate splines, exponential splines, and logistic splines \citep{wang2011smoothing} are all injective mappings.
%Specifically, the eigenvalues of the polynomial decay kernel satisfy $\lambda_k \asymp k^{-2r}$; eigenvalues associated with the exponential decay kernel satisfy $\lambda_k \asymp \exp( - \alpha k^{r})$, and the assumptions are satisfied.

We present several examples of kernel functions that satisfy our assumptions. The  {Ornstein-Uhlenbeck kernel} \( K(s,t) = \exp(-|t - s|) \) and the  {Brownian motion kernel} \( K(s,t) = \min(s, t) \) both meet the required conditions. In addition, a range of reproducing kernels—such as polynomial splines, periodic splines, thin-plate splines, exponential splines, and logistic splines \citep{wang2011smoothing}—are injective mappings. 
These kernels also exhibit appropriate eigenvalue decay behaviors: for polynomial decay kernels, \( \lambda_k \asymp k^{-2r} \), while for exponential decay kernels, \( \lambda_k \asymp \exp(-\alpha k^r) \), both satisfying our theoretical assumptions.

\subsection{Identifiability Condition for linear $f$} 
We investigate the linear case when $f$ is linear in $X$. In this case, model \eqref{FLSEM} simplifies to 
\begin{equation}
\label{eq:FLSEM_linear} 
Y_i  = \sum_{\ell=1}^{p} X_{i\ell}\beta_\ell+ \int_{\mathcal T} \bm{Z}_i(t)\bm{B}(t) dt + \epsilon_{i}~~~~\mbox{and}~~~~    
\bm{Z}_i(t) = \sum_{\ell=1}^{p} X_{i\ell} C_\ell (t) +  \bm{E}_{i}(t).
\end{equation} 
Although there is one exposure, identifying the causal effect 
  $\bm{B}(t)$ remains challenging due to a limited number of instruments. 
For simplicity, we consider valid IVs such that 
$Y_i  =  \int_{\mathcal T} \bm{Z}_i(t)\bm{B}(t) dt + \epsilon_{i}$
and
$\bm{Z}_i(t) = X^\top_{i}\bm{C}(t) + \bm{E}_{i}(t) = \widetilde{\bm{Z} }_i(t) + \bm{E}_{i}(t)$.
Projecting  $\bm{Z}(t)$ onto the space of $X$ leads to 
 $      
\widetilde{\bm{Z} }  = X^\top \bm{C}(t).
$
It follows that
\begin{eqnarray*}
	E[  Y | X  ] = X^\top \int_{\mathcal T} C(t) \bm{B}(t) dt +0 = \int_{\mathcal T}  \widetilde{\bm{Z} } (t)  \bm{B}(t) dt.
\end{eqnarray*}
Here, the linear operator takes the form ${\mathcal{K} } \bm{B}(t) = \int_\mathcal{T} E\{ \widetilde{\bm{Z} } (s)  \widetilde{\bm{Z} } (t) \} \bm{B}(s) dt  $. A naive identifiable condition of $\bm{B}(s)$ hinges on  $\mathcal{N} ( { \mathcal{K} } )=\{0\}$.
However,  $\widetilde{\bm{Z} }(t) $ 
 typically contains less information compared to $\bm{Z}(t)$, making it challenging to satisfy the injectivity condition. 
Specifically, $\mathcal{K}\bm{B}(s)$ has the form
$$
\mathcal{K}\bm{B}(s) =  \int_{\mathcal{T}}   \bm{C}(t)^\top E(X X^\top) \bm{C}(s) \bm{B}(t) dt  =    \int_{\mathcal{T}}   \bm{C}(t)^\top E(X X^\top) \bm{B}(t) dt  \bm{C}(s). 
$$
Thus, the null space of $\mathcal{K}$ is $\mathcal{N} ( \mathcal{K} )=\{\bm{B}: \int_{\mathcal{T}}   \bm{C}(t)^\top E(X X^\top) \bm{B}(t) dt =0\}$.
Then $\bm{B}(t) $ is not identifiable for $\bm{B} \in \mathcal{N} ( \mathcal{K} )$ because the solutions to the functional normal equation are not unique.
To address this issue, methods like the Moore-Penrose generalized inverse \citep[see e.g., Definition 3.5.7 in][]{hsing2015theoretical} and regularization techniques such as Tikhonov regularization are popular \citep{engl1996regularization}.

%  While these methods ensure a unique solution, they only capture partial information encoded within $\bm{B}(t)$. Specifically, these approaches recover the component of $\bm{B}(t)$ that resides within the subspace $\mathcal{N}(\mathcal{K})^\perp$, the orthogonal complement of the null space $\mathcal{N}(\mathcal{K})$. 
% It is important to note that the dimensionality of the null space $\mathcal{N}(\mathcal{K})$ is inversely related to the number of valid instruments. As the number of valid instruments increases, the dimensionality of $\mathcal{N}(\mathcal{K})$ decreases, correspondingly enhancing the identifiability of $\bm{B}(t)$. This relationship underscores how the richness of instrumental information can positively impact the accuracy of estimating the leading components of the causal effect $\bm{B}(t)$ under certain conditions.

While these methods guarantee a unique solution, they recover only the component of $\bm{B}(t)$ that lies in the subspace $\mathcal{N}(\mathcal{K})^\perp$, the orthogonal complement of the null space of $\mathcal{K}$. The dimension of $\mathcal{N}(\mathcal{K})$ is inversely related to the number of valid instruments: as the number of valid instruments increases, the null space shrinks, improving the identifiability of $\bm{B}(t)$. This highlights how richer instrumental information enhances the estimation of the leading components of the causal effect $\bm{B}(t)$ under appropriate conditions.

Before detailing the identifiable conditions, we first establish some mild assumptions regarding the functional forms of $\bm{B}(t)$ and $\bm{C}(t)$. Assume that
$\bm{B}(t) = \sum_{k=1}^\infty b_k \varphi_k(t)$ and $ \bm{C} (t) = \sum_{k=1}^\infty \bm{c}_ k \varphi_k(t) $, where $\{\varphi_k(t)\}$ represents an orthogonal basis.  
Model \eqref{eq:FLSEM_linear} implies
$$
E[ X_i \{Y_i - X_i^\top \beta- \int_{\mathcal T} \bm{Z}_i(t)\bm{B}(t) dt  \}]=0~~\mbox{and} \quad
E[X_i \{ \bm{Z}_i(t) - X_i^\top \bm{C}(t)  \}]=0.
$$
Letting $\Gamma^*=(\Gamma_\ell^*) = E( X_i X_i^\top )^{-1} E(X_i Y_i) $, we have that
\begin{eqnarray}
	\label{eq:decomposition_identification}
\Gamma^* = \beta+ \int_{\mathcal T} C(t) \bm{B}(t) dt = \beta+  \sum_{k=1}^\infty b_k \bm{c}_ k \approx 
\beta+  \sum_{k=1}^R b_k \bm{c}_ k, 
\end{eqnarray}
where $R$ is a positive integer. 
If $\int_{\mathcal T} C(t) \bm{B}(t) dt $ can be well approximated  by $ \sum_{k=1}^R b_k \bm{c}_ k$   \citep{he12000extending, cardot2003spline,shin2009partial}, then the leading coefficients $\{b_k\}_{k=1}^R$ can be identified under some conditions. If the $l$-the covariate $X_l$ is a valid instrument, then $\beta_l=0$. The estimates of $\Gamma^*$ and $\{\bm{c}_k\}$ can be obtained by examining the relationship between the outcome and the potential instruments, and that between the exposure and the potential instruments. Theorem \ref{thm:invalid_linear} below gives the identification condition at the presence of possible invalid instruments.

Recall that \(\mathcal{I}\) denotes the set of valid instruments, while \(\mathcal{C}\) represents the set of invalid instruments. Let the total number of instruments be \(L = p_\mathcal{I} + p_\mathcal{C}\), where \(p_\mathcal{I} = |\mathcal{I}|\) and \(p_\mathcal{C} = |\mathcal{C}|\) denote the number of valid and invalid instruments, respectively. 
We impose a mild assumption that the number of invalid instruments is bounded above: $p_\mathcal{C} < U$ for some $ U \leq L.$  
Importantly, this condition does not require knowledge of the exact identities or number of invalid instruments. Under this setup, the following theorem establishes that a meaningful coefficient can still be identified.

\begin{thm}
	\label{thm:invalid_linear}
	Suppose that $\int_{\mathcal T} \bm{C}(t) \bm{B}(t) dt $ can be approximated by 
	$\sum_{k=1}^R b_k \bm{c}_ k$ for some vectors $\bm{c}_ k$'s  such that $\int_{\mathcal T} \bm{C}(t) \bm{B}(t) dt = \sum_{k=1}^R b_k \bm{c}_ k + o(R^{-2r})$ for some $r\geq 1$.
	Consider all sets $S_m \subseteq \{1,\dots, L\}$ of size $|S_m| = L - U + 1 > R$ for all $\ m =1, \dots, M$ with the property that there exists a unique $\{b^{(m)}_{k}\}_{k=1}^R$ satisfying $ \sum_{k=1}^R b^{(m)}_{k} \bm{c}_ {\ell k} =  \Gamma_\ell^*$ for  $\ell \in  S_m$.
	There is a unique solution $\beta$ and $\{b_k\}_{k=1}^R$ to \eqref{eq:decomposition_identification} if and only if the subspace-restriction condition holds, that is $b^{(m)}_{k}=b^{(m^\prime)}_{k}$ for all 
	$m, m^\prime \in \{1, \dots, M\} $ and $k=1, \ldots, R$. 
\end{thm}

  Theorem~\ref{thm:invalid_linear} shows that while the full recovery of   \(\bm{B}(t)\) may not be possible in the presence of invalid instruments, its leading components can be consistently identified under suitable subspace restrictions. This extends the identifiability framework for a single endogenous variable \citep{kang2016instrumental,guo2018confidence} to the more complex and less explored setting with \(K\) endogenous variables. 
Specifically,  Theorem~\ref{thm:invalid_linear} states that   \(\{b_k\}_{k=1}^R\) is identifiable if {\it no two distinct instrument subsets of size at least \(L - U + 1\) yield the same coefficient estimates on any common set of \(R\) instruments while producing inconsistent results when combined.}  
This condition ensures that valid instruments provide sufficient identifying power and that no subset of invalid instruments can generate indistinguishable projections.

We illustrate Theorem \ref{thm:invalid_linear} with two examples. (i) Suppose there are five relevant instruments with $L=5$, 
\(R = 2\), \(U = 3\), and \(\Gamma^* = (2,3,3,8,5)^T\). Let 
\((c_{11}, \dots, c_{51}) = (1,1,2,1,2)\) and \((c_{12}, \dots, c_{52}) = (1,2,1,1,3)\).
According to Theorem \ref{thm:invalid_linear}, we have \(M = 2\) and two instrument subsets 
\(S_1 = \{1,2,3\}\) and \(S_2 = \{2,3,5\}\). One can verify that \(b^{(1)}_{1} = 1\) and \(b^{(1)}_{2} = 1\) satisfy 
$\sum_{k=1}^2 b^{(1)}_{k} \, c_{\ell k} = \Gamma_{\ell}^*
\quad\text{for all}\; \ell \in S_1,
$
while \(b^{(2)}_{1} = 1\) and \(b^{(2)}_{2} = 1\) hold similarly for \(S_2\). As a result, there is a unique solution 
for \(\beta\) and \(\{b_k\}_{k=1}^R\) with $p_\mathcal{I} =4$ and $p_\mathcal{C} = 1$.
(ii) Consider a different setting with \(R = 2\), \(U = 4\), \(\Gamma^* = (2,3,6,8,10)^T\), 
\((c_{11}, \dots, c_{51}) = (1,1,1,2,2)\), and \((c_{12}, \dots, c_{52}) = (1,2,3,2,3)\). 
In this case, \(M = 2\) and the subsets are \(S_1 = \{1,2\}\) and \(S_2 = \{4,5\}\) with 
\(b^{(1)}_{1} = 1\), \(b^{(1)}_{2} = 1\), \(b^{(2)}_{1} = 2\), and \(b^{(2)}_{2} = 2\). However, these values fail to satisfy 
the consistency requirement, so there is no unique solution for the system. We cannot distinguish between valid and invalid instruments.
% These two examples illustrate how Theorem \ref{thm:invalid_linear} can be used to determine whether solutions 
% for \(\bm{B}(t)\) are identifiable under particular subspace restrictions and instrument configurations.

While it may be difficult to check the condition in Theorem \ref{thm:invalid_linear} in general, the following corollary presents a sufficient condition that is easy to check in practice.

\begin{cor}
\label{cor:number of invalid IV}
Suppose the conditions in Theorem \ref{thm:invalid_linear} hold,
% that $\int_{\mathcal T} \bm{C}(t) \bm{B}(t) dt $ can be approximated by $\sum_{k=1}^R b_k \bm{c}_ k$ for some vectors $\bm{c}_ k$'s   such that $\int_{\mathcal T} \bm{C}(t) \bm{B}(t) dt = \sum_{k=1}^R b_k \bm{c}_ k + o(R^{-2r})$ for some $r\geq 1$ and 
% any set of $K$  valid instruments  identifies a unique $ \{b_k\} $. 
if the maximal number of invalid instruments $U \leq (L-R+1)/ 2$, there is a unique solution to \eqref{eq:decomposition_identification}.
\end{cor}

%Corollary \ref{cor:number of invalid IV} provides a computationally feasible way to check the identifiable conditions. Instead of looking at all possible subsets in Theorem \ref{thm:invalid_linear}, we check if the number of invalid instruments is less than  $ (L-K+1)/ 2$. This generalizes the majority rule \citep{kang2016instrumental} to accommodate the endogenous functional variable. We remark that if $K=1$, it reduces to the majority rule targeting for cases with one scalar endogenous variable.

%Corollary \ref{cor:number of invalid IV} presents a computationally feasible approach for verifying identifiability conditions stipulated in Theorem \ref{thm:invalid_linear}. 
%The condition that any set of $K$ valid instruments identifies a unique solution requires that
%$\{b_k\}_{k=1}^R$ can be uniquely solved by the $K$ equations $ \sum_{k=1}^R b_{mk} \bm{c}_ {lk} =  \Gamma_l^*, l=1, \dots, K$.
%Instead of examining all possible subsets of instruments, this method simplifies the process by ensuring that the number of invalid instruments is less than $(L-K+1)/2$. This approach extends the majority rule \citep{kang2016instrumental} to settings involving endogenous functional variables. 
%It is important to note that when $K=1$, this corollary effectively simplifies to the traditional majority rule, which is typically applied in scenarios with a single scalar endogenous variable. This adaptation provides a practical method for assessing identifiability in more complex models while maintaining the foundational principles of the majority rule.

Corollary~\ref{cor:number of invalid IV} provides a computationally feasible criterion for verifying the identifiability conditions in Theorem~\ref{thm:invalid_linear}. The key requirement is that any set of \(R\) valid instruments must yield a unique solution for the coefficient vector \(\{b_k\}_{k=1}^R\), which amounts to uniquely solving the \(R\) equations  
$
\sum_{k=1}^R b^{(m)}_{k} \bm{c}_{lk} = \Gamma_l^*, \quad l = 1, \dots, R.
$   
Rather than exhaustively checking all subsets of instruments, the corollary simplifies verification by requiring that the number of invalid instruments be strictly less than \((L - R + 1)/2\). This generalizes the classical majority rule of \citet{kang2016instrumental} to the more complex setting involving endogenous functional variables. 
Notably, when \(R = 1\), the result reduces to the standard majority rule for scalar endogenous variable. 
% Thus, the corollary provides a practical and scalable means of assessing identifiability in high-dimensional or functional settings, while preserving the foundational logic of the majority rule.

In biomedical research, many genetic variants identified through Genome-Wide Association Studies (GWAS) are relevant to downstream analyses. While researchers may have a rough estimate of the maximum number of invalid instruments \(U\), the precise identities of these instruments are typically unknown. The effects of an endogenous functional exposure can still be identified, as long as the number of invalid instruments is less than \((L - R + 1)/2\).

 \section{Estimation}
 \label{sec:estimation}
% We adopt a three-step approach to estimate parameter $\beta$ in  \eqref{FLSEM}.
%First, in the presence of a large number of covariates in $X_i$, such that $p\gg n$, we apply a fast screening procedure  for models in  \eqref{FLSEM}, which is crucial for downstream analysis since $p\gg n$.  Specifically, we apply the marginal screening procedure through sure screening to $(Y_i, X_i)$ \citep{fan2008sure}
%and apply the screening procedure through distance correlation to $(Z_i(t), X_i)$ \citep{li2024partially},  leading to the joint screening set.

We estimate the coefficient $\bm{B}(t)$ in \eqref{eq:FLSEM_linear} using a three‐step strategy, starting with a fast screening procedure designed for the \(p\gg n\) regime. In this initial step, we reduce the covariate set by applying two complementary filters: (i) sure independence screening on \((Y_i, X_i)\) \citep{fan2008sure}, and (ii) distance‐correlation screening on the functional pair \((Z_i(t), X_i)\) \citep{li2024partially}. Taking the union of these selected predictors produces a joint screening set that retains the most informative variables while ensuring that subsequent estimation remains both statistically sound and computationally feasible.

Second, we estimate   the fitted value of \(\bm{Z}_i(t)\) by fitting the model in \eqref{FLSEM} with separate controls for smoothness and sparsity. Building on \citet{mirshani2021adaptive}, we replace their elastic-net sparsity penalty with an \(L_0\) constraint, yielding the first function‐on‐scalar regression with exact sparsity. We solve $\widehat{\mathbf{C}}$  by minimizing 
\begin{eqnarray}
\label{eq:functional response loss}
	L_\lambda (\mathbf{C}) =
\frac{1}{2n}\bigl\|\mathbf{Z} - \mathbf{X}^\top \mathbf{C}(t)\bigr\|^2_{2}
+ \frac{\lambda_K}{2}\sum_{\ell=1}^p \bigl\|C_\ell\bigr\|^2_{\mathcal{K}}
+ \lambda_0 \,\bigl\|\mathbf{C}\bigr\|_0
\end{eqnarray} 
over $\mathbf{C}(t)\in\mathcal{H}(K)$, 
where \(\mathbf{C}(t)=(C_1(t),\dots,C_p(t))^\top\), 
$\bm{X}$ stacks the $n$ observations where the $i$-th row is $X_i^\top$ and $\bm{Z} = (\bm{Z}_1 (t_{1} ),  \dots,  \bm{Z}_1 (t_{m} ), \dots,  \bm{Z}_n (t_{1} ),  \dots,  \bm{Z}_n (t_{m} ) )$.
By the Representer Theorem \citep{wahba1990spline}, each estimator \(\widehat C_\ell(t)\) satisfies 
\[
\widehat C_\ell(t) = C_\ell^\top K(t), 
\quad
C_\ell\in\mathbb{R}^m,\;
K(t) = \bigl(K(t,t_1), \dots, K(t,t_m)\bigr)^\top. 
\]
Smoothness is enforced by the RKHS norm \(\|C_\ell\|_{\mathcal{K}}=\sqrt{C_\ell^\top \Sigma\,C_\ell}\), where $\Sigma=(\Sigma_{ii'}) = (K(t_i, t_{i'})  $, while the \(L_0\) term selects at most \(J\) nonzero functions. This finite‐dimensional formulation has been well studied in the RKHS literature \citep{zhang2022high}. 
Under the condition that $p \ll n$, the functional estimates have the following form with $\check{\mathbf{X}} = \mathbf{X} \otimes \Sigma $,
$$
\widehat{\mathbf{C}}=(\widehat C_1^\top, \dots, \widehat C_p^\top )^\top = ( \check{\mathbf{X}}^\top \check{\mathbf{X}} + nm \lambda_K \Sigma )^{-1} \check{\mathbf{X}}^\top   \bm{Z}.
$$

%Denote$ P_{\lambda_K} = diag(  ( \widetilde X_1^\top \widetilde X_1 + nm \lambda_K \Sigma )^{-1}, \dots,  ( \widetilde X_{p}^\top \widetilde X_{p} + nm \lambda_K \Sigma )^{-1}     ) $.

%Based on the finite-dimensional representation of functional estimates, we propose the Functional Group Support Detection and Root Finding (FGSDAR) algorithm (Algorithm~\ref{alg:estimation}) for efficient optimization of \eqref{eq:functional response loss}. 
%FGSDAR uses blockwise coordinate descent, updating the active set (nonzero coefficients) and inactive set (zero coefficients) at each iteration. Estimation is performed on the active set, significantly reducing model complexity. The algorithm extends the support detection and root-finding method of \cite{huang2018constructive} to the functional response setting.
%In the FGSDAR, 
%for any $A \subseteq \{1, 2, \ldots, p\}$ with length $| A |$ , denote $\bm{C}_A =( \bm{C}_\ell (t), \ell \in A ) $. Similarly, we can define %$\check{\mathbf{X}}_{A}$.  
%Let $\bm{C} |_A $ be a vector with its $\ell$-th element denoted as $( \bm{C} |_A )_\ell=\bm{C}_\ell(t) \mathbbm{1} (\ell \in A)$, where
%$ \mathbbm{1}(\cdot)$ is the indicator function.  
%Let $\|\bm{C} \|_{J, \infty}$ be the $J$-th largest element in $L_2$ norm.
%The validity of the algorithm in identifying a local minimizer is established in Lemma \ref{lem:minimizer} of the supplement.

Building on the finite-dimensional representation of the functional coefficients, we propose the  {Functional Group Support Detection and Root Finding (FGSDAR)} algorithm (Algorithm~\ref{alg:estimation}) to efficiently optimize the loss function in \eqref{eq:functional response loss}. FGSDAR extends the support detection and root-finding framework of \cite{huang2018constructive} to the functional response setting. 
The algorithm employs a blockwise coordinate descent strategy, iteratively updating the active set (indices with nonzero coefficients) and the inactive set (zero coefficients). Estimation is performed only on the active set, thereby reducing model complexity and improving computational efficiency. 
For any index set \( A \subseteq \{1, 2, \ldots, p\} \) with size \( |A| \), let \( \bm{C}_A = \{ \bm{C}_\ell(t) : \ell \in A \} \), and similarly define \( \check{\mathbf{X}}_A \). Let \( \bm{C}|_A \) denote the vector where the \(\ell\)-th component is given by \( (\bm{C}|_A)_\ell = \bm{C}_\ell(t) \mathbbm{1}(\ell \in A) \), with \( \mathbbm{1}(\cdot) \) denoting the indicator function. Let \( \| \bm{C} \|_{J, \infty} \) represent the \(J\)-th largest \(L_2\)-norm among the coefficient functions. 
The validity of FGSDAR in identifying a local minimizer is established in Lemma~\ref{lem:minimizer} of the supplementary materials.

\begin{algorithm}[t] 
	\caption{Functional group support detection and root finding (FGSDAR)} 
	\label{alg:estimation}
	\begin{algorithmic}[1] 	
		\Require 
		An initial $\mathbf{C}^0$ and the group sparsity level $J$; set $k=0$.
		\State Calculate $d^0 = \check{\mathbf{X}}^\top ( \bm{Z} - \widetilde{\mathbf{X}} \mathbf{C}^0  )  - nm \lambda_{K}  \Sigma  \mathbf{C}^0 $ for a given smoothness parameter $\lambda_K$;
		\For{$k=0,1,2,\ldots$}
		\State $A_R^k = \{ \ell: \|\mathbf{C}_\ell^k+d_\ell^k\|^2 \geq \|\mathbf{C}^k + d^k\|^2_{J,\infty} \}, ~~ I_R^k = (A_R^k)^c$;
		\State $\mathbf{C}_{A_R^k}^{k+1} = ( \check{\mathbf{X}}_{A_R^k}^\top \check{\mathbf{X}}_{A_R^k} + nm \lambda_K \Sigma )^{-1}  \check{\mathbf{X}}_{A_R^k}^\top \bm{Z}$;
		\State  $d_{I_R^k}^{k+1} =  \check{\mathbf{X}}_{I_R^k}^\top ( \bm{Z}_R - \check{\mathbf{X}}_{A_R^k}  \mathbf{C}_{A_R^k}^k  )  - nm \lambda_K  \Sigma  \mathbf{C}_{I_R^k}^{k} $;
		\If{$A_R^{k+1} =A_R^k$} Stop.
		\Else $~k=k+1$; 
		\EndIf		
		\EndFor
		\State Select $\lambda_K$ by minimizing the generalized cross-validation (GCV) criterion;
		\Ensure 
		$\widehat{\mathbf{C}},~ A_R^k, ~\mbox{and}~I_R^k$. 
	\end{algorithmic} 
\end{algorithm}

Third,
 we write the outcome generating model as 
  \begin{eqnarray}
   Y_i  = \sum_{\ell=1}^{p} X_{i\ell}\beta_\ell+ \int_{\mathcal T} E\{\bm{Z}_i(t)|X_i\}\bm{B}(t) dt + \nu_{i} =
    \sum_{l=1}^{p} X_{i\ell}\beta_\ell+ \int_{\mathcal T} \widehat{\bm{Z}}_i(t)\bm{B}(t) dt + \hat\nu_{i},  \label{eq:plugged in form}
    \end{eqnarray} 
   where $\nu_i=Y_i-E(Y_i|X_i)= \int_{\mathcal T}[\bm{Z}_i(t)-  E\{\bm{Z}_i(t) |X_i\}]\bm{B}(t)dt+\epsilon_i$ satisfies 
   $E[\nu_i E\{\bm{Z}_i(t)|X_i\}]=0$. 
  Assume that $\bm{B}(t) $  resides in an RKHS $\mathcal{H}$. Consider the minimization problem
  \begin{eqnarray}
  \label{eq:loss_final}
  \min_{\beta \in \mathbb{R}^{p},  \bm{B}  \in \mathcal{H} } 
  \left [
  {1 \over 2n} \sum_{i=1}^n \Big\{  Y_i -  \Big( \sum_{l=1}^{p} X_{il}\beta_l+ \int_{\mathcal T} \widehat{\bm{Z}_i}(t)\bm{B}(t) dt  \Big)  \Big\}^2
  +\tau \|\beta \|_0 + {\lambda \over 2} \| \bm{B}  \|^2_\mathcal{K}
   \right],      
  \end{eqnarray}
 where $\tau$ and $\lambda$ are tuning parameters.
Replacing $ \bm{Z}_i(t) $ with $\widehat{\bm{Z}_i}(t)$ partials out the effect of $\widehat{\bm{Z}_i}(t)$ when selecting important variables for the outcome. 
The algorithm proposed in \cite{li2024partially} is adopted to solve \eqref{eq:loss_final}, deriving $\widehat{\mathcal{P} \cup \mathcal{C}  }=\{\ell: |\widehat\beta_\ell|>0 \}$ and $\widehat{ \bm{B} } (t) $.

%\subsection{Distributed Region-based Estimation Algorithm}

%Divide and conquer is a commonly used tool 

%The computational burden for solving problem \eqref{eq:functional response loss} becomes heavy for a large sample size~$n$ or a large number of grid points $m$.
%For example,  the large-scale UK Biobank dataset has a sample size $n=20,174$, and the 2D hippocampus imaging data's dimension is $100 \times 150$ ($m$=15000).
%It is computationally infeasible  to calculate \eqref{eq:functional response loss} using all the samples using standard solvers.
%To reduce the computational burden, we propose the following  region-based estimation algorithm for large-scale problems.

The computational cost of solving problem \eqref{eq:functional response loss} grows prohibitively large when the sample size \(n\) or the number of grid points \(m\) increases. For example, in the UK Biobank imaging cohort, \(n\) can reach 100,000, and a single hippocampus image with \(100\times150\) resolution yields \(m = 15,000\) grid points. Evaluating \eqref{eq:functional response loss} over all samples and grid locations is therefore infeasible using standard optimization routines.

To address this challenge, we propose a region-based estimation algorithm tailored for large-scale problems. To handle the large sample size, we adopt a divide-and-conquer strategy \citep{zhang2015divide, lian2018divide, hong2022divide}, which partitions the dataset into \(\widetilde{n}\) subsets of equal size, each containing \(n / \widetilde{n}\) subsamples. This allows us to process smaller, tractable subproblems independently. In parallel, to accommodate the high spatial resolution of the imaging data, we introduce a sliding-window approach inspired by techniques in genetic data analysis \citep{hudson1988coalescent}. Specifically, the window moves across the imaging grid to generate overlapping subregions \(\{\mathcal{T}_1, \dots, \mathcal{T}_m\}\) that together cover the entire imaging domain \(\mathcal{T} = \bigcup_{k=1}^{m} \mathcal{T}_k\). 
% Let each square window have side length \(2b\), containing \(4b^2\) grid points, with adjacent windows overlapping by \(b\) grid points. For a \(150 \times 100\) image, the total number of regions is approximately \(m = ({150}/{b} - 1)({100}/{b} - 1)\).
%These overlapping regions are denoted by \(\{\mathcal{T}_1, \dots, \mathcal{T}_m\}\), such that \(\mathcal{T} = \bigcup_{k=1}^{m} \mathcal{T}_k\).
 This dual strategy—dividing subjects and sliding over functional domains—yields significant computational savings and enables scalable analysis of large, high-dimensional imaging datasets.

After partitioning the data into \(\widetilde n\) subsamples and regions \(\{\mathcal{T}_k\}\), we apply the following distributed, region-based estimation:
(i) \textbf{Local Estimation.}  
    For each subsample \(\zeta=1,\dots,\widetilde n\) and region \(\mathcal{T}_k\), solve \eqref{eq:functional response loss} to obtain
   $       \widehat{\bm{C}}^{\zeta k}_\ell(t), \quad \ell=1,\dots,p,\; t\in\mathcal{D}_k.$  
  (ii) \textbf{Regional Aggregation.}  
    Within each region \(\mathcal{T}_k\), average across subsamples:
$
      \overline{\bm{C}}^k_\ell(t)
      = {\widetilde n}^{-1}\sum_{\zeta=1}^{\widetilde n} \widehat{\bm{C}}^{\zeta k}_\ell(t).
    $
  (iii)  \textbf{Function Reconstruction.}  
    For each subject \(i\) and \(t\in\mathcal{T}_k\), reconstruct
    $
      \widehat{\bm{Z}}_i(t)
      = \sum_{\ell=1}^p X_{i\ell}\,\overline{\bm{C}}^k_\ell(t).
    $ 
  (iv) \textbf{Overlap Synthesis.}  
    Since regions overlap, compute the final estimate at each grid point by averaging all predictions covering that point.

%Besides the estimation of the causal effect of functional exposure on the outcome, we are also interested in testing the nullity of the causal %effect.
%Therefore, we propose to test
%$$
%H_0: \bm{B}^*(t)=0  \quad \text{vs.}
%H_1: \bm{B}^*(t) \neq 0 \text{~~ for some t}.
%$$
%Notice that there exists $\bm{F}^*(t) \in L_2( \mathcal{T} )$ such that $\bm{B}^*(t)= K^{1/2}\bm{F}^*(t) $. The null hypothesis is equivalent to $\bm{F}^*(t)=0$.

In addition to estimating the true \(\mathbf{B}^*(t)\), we also test whether this effect is identically zero.  Concretely, we consider the hypotheses  
$
H_0:\;\mathbf{B}^*(t)=0\quad\forall\,t$ v.s. $
H_1:\;\exists\,t\ \text{s.t.}\ \mathbf{B}^*(t)\neq 0.
$ 
Recall that by construction,  there exists \(\mathbf{F}^*(t)\in L^2(\mathcal{T})\) satisfying  
$
\mathbf{B}^*(t)=K^{1/2}\mathbf{F}^*(t).
$ 
Because \(K^{1/2}\) is invertible on its support, testing \(\mathbf{B}^*(t)\equiv0\) is equivalent to testing  
$ 
H_0:\;\mathbf{F}^*(t)\equiv 0\quad\forall\,t,
$ 
which we use as the basis for our inferential procedure.

%Similar to \cite{cai2012minimax}, it can be shown that solution to \eqref{eq:loss_final} has a closed form 
%\begin{eqnarray}
%\label{eq:F_t}
%  \widehat{\bm{F}} (t) ={1 \over n}(R_n + \lambda I)^{-1}    ( K^{1/2} \widehat{\bm{Z}}(t) )^\top \mathcal{M}_{\widehat A}  Y,  
%\end{eqnarray}
%where $R_n = n^{-1}  K^{1/2} \widehat{\bm{Z}}^\top (s) \mathcal{M}_{\widehat A}  \widehat{\bm{Z}}(t)  K^{1/2} $ is the empirical estimate of the operator $R = n^{-1}  K^{1/2} E\big( \widetilde{\bm{Z}}(s) ^\top \mathcal{M}_{ A^*}  \widetilde{\bm{Z}}(t) \big) K^{1/2}  $, and $I$ is the identity operator. A direct way to test nullity of $\bm{F}^*(t)$ is to look at the square norm of $(R_n + \lambda I)^{-1} \widehat{\bm{F}} (t) $. Hence, we construct the test statistic to be 
%$$
%S_n =  { n \| (R_n + \lambda I)^{-1} \widehat{\bm{F}} (t) \|_2^2 \over \widehat \sigma^2 } = 
%{ Y^\top  \mathcal{M}_{\widehat A} \int_{\mathcal{T} }  \int_{\mathcal{T} } \widehat{\bm{Z}}(s) ^\top K(s,t) \widehat{\bm{Z}}(t) ds dt %\mathcal{M}_{\widehat A} Y  \over n \widehat  \sigma^2 } ,
%$$
%where $\widehat \sigma^2$ is the variance of the residuals of the outcome model.

Similar to \citet{cai2012minimax}, the solution to \eqref{eq:loss_final} admits the closed‐form
\begin{equation}\label{eq:F_t}
  \widehat{\mathbf{F}}(t)
  = {n}^{-1}\,(R_n + \lambda I)^{-1}\bigl(K^{1/2}\,\widehat{\mathbf{Z}}(t)\bigr)^\top \mathcal{M}_{\widehat A}\,Y,
\end{equation}
where 
$
R_n
= {n}^{-1}\,K^{1/2}\,\widehat{\mathbf{Z}}(s)^\top\,\mathcal{M}_{\widehat A}\,\widehat{\mathbf{Z}}(t)\,K^{1/2}
$ 
is the empirical analogue of the operator
$
R
= {n}^{-1}\,K^{1/2}\,E\bigl\{\widetilde{\mathbf{Z}}(s)^\top\,\mathcal{M}_{A^*}\,\widetilde{\mathbf{Z}}(t)\bigr\}\,K^{1/2},
$
with $\widetilde{\bm{Z} }  = X^\top \bm{C}(t)$,
and $I$ denotes the identity operator. The estimated active set of precision and confounder variables from \eqref{eq:plugged in form} is  denoted as $\widehat{A} = \widehat{\mathcal{P} \cup \mathcal{C}} = \{\ell: |\widehat{\beta}_\ell| > 0\}$, while $A^*$ represents the true useful control set. For any subset $A$, $\bm{X}_A$ consists of the columns $\{\ell: \ell \in A\}$ from $\bm{X}$. The operator
$\mathcal{M}_{A} = I - \bm{X}_{{A}} (\bm{X}_A^\top \bm{X}_A)^{-1} \bm{X}_A^\top$
projects out the influence of scalar covariates on the outcome.
To test \(H_0\!: \mathbf{F}^*(t)\equiv0\), we examine the squared norm of \((R_n+\lambda I)\widehat{\mathbf{F}}(t)\).  This yields the test statistic
\[
S_n
= {n\,\bigl\|\,(R_n+\lambda I)\widehat{\mathbf{F}}(t)\bigr\|_2^2}/{\widehat\sigma^{2}}
= {Y^\top\,\mathcal{M}_{\widehat A}\,\displaystyle\int_{\mathcal{T}}\!\!\int_{\mathcal{T}}
  \widehat{\mathbf{Z}}(s)^\top\,K(s,t)\,\widehat{\mathbf{Z}}(t)\,ds\,dt\,\mathcal{M}_{\widehat A}\,Y}
  /(n\,\widehat\sigma^2),
\]
where \(\widehat\sigma^2\) is the estimated variance of the outcome model residuals.

\section{Theoretical Properties}
\label{sec:theoretical}
We now turn to the theoretical analysis of our estimators and test statistic. 
Denote $\| \bm{C} \|_2^2 = \sum_{j=1}^{p} \| {C}_j \|_2^2 = \sum_{j=1}^{p} \int_\mathcal{T}C^2_j(t) dt$.
	Let $\delta^* = ( \delta^*_1, \ldots,  \delta^*_p )^\top \in (  {L_2} (\mathcal{T}) )^p$ be  
$
\delta_j^* \;=\;\arg\min_{\delta\in L^2(\mathcal{T})}
\;E\bigl(X_j - \!\!\int_{\mathcal{T}} \bm{Z}(t)\,\delta(t)\,dt\bigr)^2
$
for $j=1,\dots,p,$
corresponding to the projection onto the space spanned by \(\{\bm{Z}(t)\}_{t\in\mathcal{T}}\) for each covariate \(X_j\).
Using these population projections, we form the residualized covariates  
$
\widetilde X_{ij}
= X_{ij} - \int_{\mathcal{T}} \bm{Z}_i(t)\,\delta_j^*(t)dt$ and $
\widetilde{X}_i = (\widetilde X_{i1},\dots,\widetilde X_{ip})^\top,
$
and stack them across subjects into  
$
\widetilde{\mathbf{X}} = (\widetilde{X}_1^\top,\dots,\widetilde{X}_n^\top)^\top.
$

	\begin{asmp}
	\label{assum: J}
	The input sparsity levels for Algorithm \ref{alg:estimation} and for \eqref{eq:loss_final}
    satisfy $J \geq \max(J_y^*, J_z^*)$, 
	where $J_y^*$ and $J_z^*$ are the true sparsity levels for the scalar and imaging responses.
\end{asmp}

%\begin{assumption}
%	\label{assum: p}
%	The dimension of the scalar covariates satisfy $p = O( \exp(n ^\kappa) )$ for some $\kappa \in (0, 1/2)$.
%\end{assumption}

\begin{asmp}
	\label{assum: SRC}
	For the input sparsity levels $J$,
	$\bm{X}$ satisfies the sparse Riese condition \citep{huang2018constructive}  that $\widetilde{\mathbf{X}} ,\bm{X}  \sim \text{SRC}\{2J,c_-(2J), c_+(2J) \}$  
	for $A, B \subseteq S,$ $|A| \leq J,$ $|B| \leq J,$  and $A \cap B = \emptyset$ and $\forall u \neq 0 \in \mathbb{R}^{ |A| } $,  
	$
	0 < c_-(J) \leq  { \| \bm{X}_A u  \|_2^2  }/(n \| u \|_2^2) < c_+(J)$ and  
	$\theta_{J,J} \geq { \| \bm{X}^\top_B  \bm{X}_A  u  \|_2  }/(n \| u \|_2),$
where $c_-(J)$ is an increasing function of $J$, $c_+(J)$ is a decreasing function of $J$, 
	and $\theta_{J, J} =  \max\{( 1 - c_-(2J) ) ,  (  c_+(2J) -1 )\} $ is an increasing function of $J$.
%	Also, $\tmbX$  and $ \langle \mb{Z}, \delta_A^*  \rangle $
%	satisfy the sparse Riese condition such that $\tmbX  \sim \text{SRC}\{2J,c_-(2J), c_+(2J) \}$ 
%	and $\langle  \mb{Z}, \delta_A^* \rangle  \sim \text{SRC}\{2J,c_-(2J), c_+(2J) \}$.
\end{asmp}

\begin{asmp}
	\label{assum: 4-th order}
	For any function $ \bm{B}\in \mathcal{H} $, there exists some positive constant $c_1$ satisfying 
	$
	E \big\{  \int_0^1 \bm{Z} (t)  \bm{B}(t) dt \big\}^4 \leq c_1 \Big( E \big\{ 
	\int_0^1 \bm{Z} (t)  \bm{B}(t) dt 
	\big\}^2 \Big)^2
	$	
\end{asmp}

\begin{asmp}
	\label{assum: tilde x error}
	For $ j=1, \ldots, p$, $\widetilde{X}_{1j}, \ldots, \widetilde{X}_{nj}$ are independently and identically distributed with mean zero
	and $\sigma_x^2 = \max\{ Var( \widetilde{X}^2_{ij} ),  j=1, \ldots, p \}$ is finite.
\end{asmp}

\begin{asmp}
	\label{assum: conditional expectation}
	The coefficients $\delta_j^* \in \mathcal{H}$ for $j=1, \ldots, p$ and $ \delta_{\max}=\max_j{  \| \delta_j^*\|_\mathcal{H} }  < \infty$.
\end{asmp}

\begin{asmp}
	\label{assum: error}
	The random errors $\epsilon_1, \ldots, \epsilon_n$ are independently and identically distributed with mean zero, variance $\sigma^2$ and sub-Gaussian tails.
\end{asmp}

\begin{asmp}
	\label{assump:kernel assumption}
	There exists some constant $C_K > 0$ such that $\sup_t K(t,t) \leq C_K$.  
\end{asmp}

%\begin{asmp}
%	\label{assump:covariates}
%	For any $k=1, \dots, p$, the eigenvalues of $E(X X^\top)$ are bounded from 0 and infinity.
%\end{asmp}

\begin{asmp}
\label{assump:error of imaging}
The errors $E_i(t)$ are identically independent and satisfy that $E\{E_i(t)\}=0$, 
$\sup_t E\{E_i(t)\}^2 < \sigma_E^2 < \infty$,  and the covariance function  Cov$(E_i(t), E_i(s))=G(s,t)$ for $s, t \in [0, 1]$
satisfies $0 <c_G \leq G(s,s) \leq C_G \leq \infty$.
\end{asmp}

Assumptions \ref{assum: J} and \ref{assum: SRC} mirror conditions in \citet{huang2018constructive}. In particular, Assumption \ref{assum: J} requires that our chosen sparsity level exceed the true model’s sparsity, ensuring all nonzero coefficients can be recovered. Assumption \ref{assum: SRC} imposes restricted eigenvalue–type bounds on the diagonal blocks of both \(\mathbf{X}^\top\mathbf{X}/n\) and \(\widetilde{\mathbf{X}}^\top\widetilde{\mathbf{X}}/n\), thereby controlling multicollinearity among predictors. 
Assumption \ref{assum: 4-th order}, standard in functional regression theory (e.g., \citet{yuan2010reproducing}), governs the fourth‐moment behavior of the functional covariates. Assumption \ref{assum: tilde x error} then bounds the projection error of scalar covariates onto the functional space, while Assumption \ref{assum: conditional expectation} requires that the projection functions \(\delta_j^*\) lie in the same RKHS as the true \(\mathbf{B}(t)\) 
%so that both can be estimated at the same convergence rate 
(see also \citet{shin2009partial,li2020inference}). 
Finally, Assumption \ref{assum: error} is the usual sub‐Gaussian noise condition (\citet{huang2018constructive}), Assumption \ref{assump:kernel assumption} places regularity conditions on the RKHS kernel \citep{zhang2015divide}, and Assumption \ref{assump:error of imaging} controls the mean and variance structure of the imaging‐error process.

We begin by quantifying the approximation error for our solution sequence  of \(\mathbf{C}(t)\). Let \(\mathbf{C}^*(t)\) denote the true functional coefficient vector, and define two true active index sets: $A^*_R = \{j: \bm{C}_j(t) \neq 0\}$ and 
  $A^* = \{j: \beta_j \neq 0\}$ for the imaging and the scalar outcome.

\begin{thm}
	\label{thm:imaging_approximation_error}
Let \(J\) be the user–specified sparsity level in FGSDAR, and let \(A_R^{k+1}\) denote the active set estimated at iteration \(k+1\).  Under Assumptions \ref{assum: J}, \ref{assum: SRC}, \ref{assump:kernel assumption}, and \ref{assump:error of imaging}, and for a contraction constant \(\gamma<1\) defined in \eqref{eq:gamma_C_def} of the supplement, the following bounds hold:
\[
\bigl\|\mathbf{C}^*\bigr|_{A_R^*\setminus A_R^{k+1}}\bigr\|_2
\;\le\;\gamma^{k+1}\|\mathbf{C}^*\|_2
\;+\;\frac{\gamma}{1-\gamma}\,h(J,\lambda_K),
\quad 
\|\mathbf{C}^{k+1}-\mathbf{C}^*\|_2
\;\le\;b_1\,\gamma^k\|\mathbf{C}^*\|_2
\;+\;b_2\,h(J,\lambda_K),
\]
where
$
h(J,\lambda_K)
=\max_{\lvert A\rvert=J}
\Bigl\|(T_{nm}^{A}+\lambda_K I)^{-1}
\bigl((nm)^{-1}\sum_{i=1}^n\sum_{j=1}^m
E_i(t_j)\,K_{t_{ij}}X_{i,A}
+\lambda_K\,\mathbf{C}^*_{A}\bigr)\Bigr\|_2,
$ 
and \(T_{nm}^{A}:\mathcal{K}\to\mathcal{K}\) is the empirical operator
$
T_{nm}^{A}f
= (nm)^{-1}\sum_{i=1}^n\sum_{j=1}^m
\langle X_{i,A}^\top f,\,K_{t_{ij}}\rangle_{\mathcal K}
\,K_{t_{ij}}\,X_{i,A}^\top.
$ 
The constants \(b_1,\) and \(b_2\) are given in \eqref{eq:b_1_b_2} of the Supplement.  Moreover, for any \(\nu\in(0,1/2)\), 
$
h(J,\lambda_K)
\le\sigma_E\sqrt{J\log\!\bigl({p}/{\nu}\bigr)}
\;\sqrt{{\mathrm{tr}\bigl(K(K+\lambda_KI)^{-1}\bigr)}/(nm)
+\,{n}^{-1}} + J\sqrt{ \lambda_K/2} 
$
with probability at least \(1-2\nu\).
\end{thm}

Theorem~\ref{thm:imaging_approximation_error} establishes \(L_2\)–convergence of the FGSDAR iterates.  The first bound quantifies the remaining signal in \(A_R^*\) omitted from the active set after \(k+1\) iterations, while the second bound controls the estimation error 
\(\|\mathbf{C}^{k+1}-\mathbf{C}^*\|_2\), showing geometric decay at rate \(\gamma^k\) plus the residual term \(h(J,\lambda_K)\). 
The final estimation error is determined by \(h(J,\lambda_K)\).
These expressions delineate how the sparsity level \(J\), the sample dimensions \((n,m)\), and the penalty parameter \(\lambda_K\) jointly govern the residual approximation error.

Compared to existing function-on-scalar regression methods with high-dimensional covariates \citep{chen2016variable, barber2017function, cai2022robust}, which rely on predefined basis expansions and group Lasso-type penalties for variable selection, the proposed FGSDAR method adopts an $L_0$ penalty within the RKHS framework. Our theoretical analysis provides a general estimation error bound applicable to various RKHSs, which reduces to known results under specific conditions. For instance, if the kernel eigenvalues satisfy $\kappa_j \asymp j^{-2\alpha}$, then $\mathrm{tr}(K(K+\lambda_K I)^{-1}) = O(\lambda_K^{-1/(2\alpha)})$.
When $m$ is small, the optimal tuning is $\lambda_K = \left( {\log (p)}/(nmJ) \right)^{2\alpha / (2\alpha + 1)}$, which yields
$
h(J, \lambda_K) = O\left( J \left({\log (p)}/(nmJ) \right)^{\alpha / (2\alpha + 1)} \right).
$
For large $m$, it simplifies to $h(J, \lambda_K) = O\left( \sqrt{J \log (p) / n} \right)$, which recovers the existing results  \citep{barber2017function, parodi2018simultaneous} if $J=J_z^*$.

Second, we provide nonasymptotic error bounds for both the scalar and functional coefficient estimates in the outcome model. These results parallel those of \citet{li2024partially}, which analyzed a high‐dimensional, partially functional linear regression framework.

\begin{thm}
	\label{thm:consistency_revelant_controls}
    Under Assumptions \ref{assum: J}–\ref{assump:error of imaging}, and using independent splits to fit the exposure and outcome models, let \(\check\gamma<1\) be defined in terms of \(\theta_{J,J}\) and \(c_{-}(J)\).  Then for any \(\nu\in(0,1/5)\), with probability at least \(1-5\nu\), the following hold: 
	\bas
\|   \beta^* |_{A^* \backslash A^{k+1} }    \|_2 \leq \check{\gamma}^{k+1} \|\beta^* \|_2 + {\check{\gamma} \over (1 - \check{\gamma}) \theta_{J,J} } \varepsilon_1, \quad
\|    \beta^{k+1} - \beta^*  \|_2 \leq \big( 1 + {\theta_{J,J} \over c_{-}(J)} \big) \check{\gamma}^k \|\beta^* \|_2  + b \varepsilon_1
\eas
for some constant $b>0$,
 where  
$
\varepsilon_1
= \sqrt{{8J\,M_3\log(2p/\nu)}/{n}}
\;+\;\sigma_\epsilon\sqrt{{4J\log(2p/\nu)}/{n}},
$ 
and \(M_3,M_4\) are defined in \eqref{eq:M_3_M_4} of the supplement. 
  Next,  define $T=K^{1/2}CK^{1/2}$ and $C(s,t) = \Mean\{ \widetilde {\bm{Z}}(s) \widetilde {\bm{Z}}(t) \}$. 
  If $\widetilde{\bm{Z}}^*$ is an independent copy of $\widetilde{\bm{Z}}$ and $E^*$ is the expectation taken over $\widetilde{\bm{Z}}^*$, then 
$
\Mean^* \langle  \widehat{\bm{B}} - \bm{B}^*, \widetilde{\bm{Z}}^*   \rangle^2 =O( \lambda +  J \varepsilon^2_1  +  { tr( T(T + \lambda I)^{-1} )/n }  ) $.  Moreover, if \(B^*\) is smoother than the RKHS \(\mathcal K\), then 
$\| \widehat{\bm{B}} - \bm{B}^* \|^2_{\mathcal{K}} = O( \lambda + J \varepsilon^2_1 + { tr( T(T + \lambda I)^{-2} )/n }  )$.
\end{thm}

Theorem~\ref{thm:consistency_revelant_controls} quantifies the estimation error for the outcome‐model coefficients at each FGSDAR iteration. In particular, the error in estimating the scalar controls \(\beta\) scales as
$
O\!\bigl(\sqrt{J\,\log(p)/n}\bigr),
$ 
which attains the minimax optimal rate when \(J=J_y^*\) \citep{raskutti2011minimax}. For the functional causal effect \(\mathbf{B}(t)\), we operate within a general RKHS framework and derive a prediction‐error bound involving the term \( tr\bigl(T(T+\lambda I)^{-1}\bigr)\), often interpreted as the “effective dimension” in learning theory \citep{zhang2005learning}. Under additional smoothness conditions on \(\mathbf{B}(t)\), this prediction guarantee can be strengthened to convergence in the RKHS norm, echoing the classic result of \citet{yuan2010reproducing} that prediction in functional linear models is generally more tractable than full functional estimation.

%\section{Hypothesis Testing}
%\label{sec:testing}

% The next theorem gives the null limit distribution of the test statistic $S_n$.
% Before that, we introduce some notations. By Mercer's Theorem, the operator $R$ admits the spectral decomposition $R(s,t)=\sum_{j=1}^\infty %\widetilde s_j \varphi_j(s) \varphi_j (t)$, where $\widetilde s_1 > \widetilde s_2 >\dots$ and $\{\varphi_j\}$ are the eigenvalues and the eigenfunctions of $R$.

Before stating the null‐limit theorem for the test statistic \(S_n\), we introduce some spectral notation. By Mercer’s Theorem, the integral operator \(R\) admits the decomposition  $
R(s,t)
=\sum_{j=1}^\infty \widetilde s_j\,\varphi_j(s)\,\varphi_j(t), $  
where \(\widetilde s_1>\widetilde s_2>\cdots\) are its eigenvalues and \(\{\varphi_j\}\) are the corresponding orthonormal eigenfunctions.   
With this notation in hand, we can now characterize the limiting null distribution of \(S_n\).

\begin{thm}
    \label{thm:testing}
Suppose Assumptions \ref{assum: J}–\ref{assump:error of imaging} hold.  Under the null hypothesis \(\mathbf{B}^*(t)\equiv0\), the test statistic \(S_n\) converges in distribution to
$
\sum_{j=1}^\infty \widetilde s_j\,x_j^2,
$ 
where \(\{\widetilde s_j\}\) are the eigenvalues of the operator \(R\) and the \(x_j\) are independent \(N(0,1)\) random variables. 
\end{thm}

Theorem \ref{thm:testing} shows that under \(H_0\),
 $ S_n $ converges to 
a weighted sum of independent \(\chi^2_1\) variables in distribution. For practical use, we approximate this null distribution by a scaled chi‐square \(\kappa\,\chi^2_\zeta\) via Welch–Satterthwaite moment matching: 
$\zeta = ( \sum_j \widetilde s_j )^2 / \sum_j \widetilde s^2_j$ and  $\kappa = \sum_j \widetilde s^2_j /\sum_j \widetilde s_j$.
Rather than estimating each \(\widetilde s_j\) directly, we use the sample operator \(R_n\) to compute
$\widehat \zeta = tr(R_n)^2 / tr(R_n^2)$ and 
$\widehat \kappa = tr(R_n^2) / tr(R_n)$. 
When there are no scalar confounders (\(A^*=\emptyset\)), the operator \(R\) simplifies to
$R=K^{1/2} E\{ {\bm{Z}}_i(s) {\bm{Z}}_i(t) \} K^{1/2}$ 
 \citep{cai2011optimal,cai2022robust,li2024partially}. In that case, our test statistic also applies directly to the standard partial functional linear model by replacing \(\widehat{\bm{Z}}(t)\) with \( \bm{Z}(t)\) in \eqref{eq:F_t}.

\section{Simulation Studies}
\label{sec:simulation}

We assess the finite‐sample performance of our estimator in four settings as follows.  
%These examples illustrate how our method handles both simple and more complex functional exposures.

\begin{example}
	\label{ex:one_dimensional}
	%We evaluate the estimation and prediction performance for one-dimensional $\bm{B}(t)$.
    In the first, we consider a one‐dimensional functional effect 
$B(t).$ 
    We simulate each subject's data as follows.
    The functional covariate combines five scalar predictors and two random coefficients as 
$\bm{Z}_i(t)
=\sum_{\ell=1}^5 X_{i\ell}\,C_\ell(t)
\;+\;\widetilde\xi_{i1}\,\varphi_1(t)
\;+\;\widetilde\xi_{i2}\,\varphi_2(t),
$
where 
the scalar covariates \(\mathbf X_i\) follow a multivariate normal distribution with an AR(1) correlation \(\Corr(X_{ij},X_{ik})=\rho_1^{|j-k|}\) and 
the coefficient functions \(C_1,\dots,C_5\) are
	\begin{eqnarray*}
		C_1(t) &=& 2t^2, \quad C_2(t) = \cos( 3 \pi t/2+ \pi/2 ), \quad C_3(t) = \sqrt{2} \sin(\pi t /2 )+ 3 \sqrt{2} \sin(3 \pi t /2) \\
		C_4(t) &=& 25 \exp(-t), \quad C_5(t) = 5+ 7 t.
	\end{eqnarray*}	
Moreover,  the functional effect is
$
B(t)=\sum_{k=1}^{10}4(-1)^{k+1}k^{-2}\,\varphi_k(t),
$ 
where the orthonormal basis functions are $
\varphi_{2k-1}(t)=\sqrt{2}\cos\bigl((2k-1)\pi t\bigr)~~\mbox{and}~~ 
\varphi_{2k}(t)=\sqrt{2}\sin\bigl((2k-1)\pi t\bigr)$ for 
  $ k=1, \ldots, 5.$ 
  	We only  observe $\bm{Z}_i(t_{ij})$ at 100 equally spaced points in $[0,1]$.
  We then generate the response
$
Y_i
=\sum_{k=1}^p X_{ik}\,\beta_k
+\int_0^1 Z_i(t)\,B(t)\,dt
+\epsilon_i,
$ 
using the true coefficient vector $\beta=(7, 0, 0, 0, 0, 5.5, 4, 3.5, 5, 4.5,  \underbrace{0,\ldots,0}_{p-10})^\top$. 
The random components \(\widetilde\xi_{i1}\), \(\widetilde\xi_{i2}\), and \(\epsilon_i\) are jointly Gaussian with variances \(\Var(\widetilde\xi_{i1})=1\), \(\Var(\widetilde\xi_{i2})=0.64\), and \(\Var(\epsilon_i)=1\).  Their covariances are \(\Cov(\widetilde\xi_{i1},\widetilde\xi_{i2})=0\), \(\Cov(\widetilde\xi_{i1},\epsilon_i)=\rho_2\), and \(\Cov(\widetilde\xi_{i2},\epsilon_i)=0.8\,\rho_2\), ensuring \(\epsilon_i\) is correlated with the functional covariate.
Under this setup, we have   
	$\mathcal{C}=  \{1\}$, 
	  $\mathcal{P}=  \{6,7,8,9,10\}$, 
$\mathcal{I} =  \{ 2,3, 4,5\}$,  and the set of irrelevant variables $\mathcal{S} =  \{11,\ldots, p\}$.

 We conducted simulations with \(n=200\) observations under two scenarios for the scalar coefficient dimension: a low-dimensional case (\(p=20\)) and a high-dimensional case (\(p=500\)). In each scenario, we compared our  estimator to a baseline method that ignores endogeneity—denoted “PFLR”—which directly fits the partially functional linear model.

All results are averaged over 100 Monte Carlo replicates, implemented in R 3.6.0 on a Linux server (Intel Xeon E5-2640 v4 @ 2.40 GHz, 125 GB RAM). For each method, we evaluate $ 
\mathrm{MSE}_\beta = \|\widehat\beta - \beta\|_2^2$ and $ 
\mathrm{MSE}_{\mathbf B} = \|\widehat{\mathbf B} - \mathbf B\|_2^2$ 
to assess estimation accuracy of the scalar coefficients \(\beta\) and the functional effect \(\mathbf B\). We also record variable‐selection performance via 
 $\mbox{FZ}_Z $ 
and  $ \mbox{FZ}_Y,$  
 which are the  number of false zeros among scalar predictors for the functional covariate and the outcome, respectively, and 
  $\mbox{FN}_Z$ and $ \mbox{FN}_Y$, 
  which are the number of false nonzeros for those same sets. 
% We also calculate 
% 	the number of false zero scalar predictors (FZ) for the functional covariate FZ$_Z$ and the outcome FZ$_Y$,
% 	the number of false nonzero scalar predictors (FN) for the functional covariate FN$_Z$ and the outcome FN$_Y$. 
Finally, predictive accuracy is measured by the mean squared error (PMSE) on an independent test set of 200 observations.

 Table~\ref{table:MSEs_betaS} reports variable‐selection performance and estimation accuracy for the outcome model. When endogeneity is mild (small \(\rho_2\)), our method and the naive PFLR estimator perform similarly. As \(\rho_2\) increases, however, PFLR’s errors grow substantially, whereas our method maintains low mean squared errors for both scalar and functional coefficients.
In particular, our approach achieves far more reliable variable selection—fewer false nonzeros and essentially zero false exclusions of true predictors—demonstrating both higher accuracy and greater stability than PFLR.

Table~\ref{table:MSEs_Cs} presents the functional exposure‐model errors under varying covariate‐correlation settings. Once again, our estimator consistently delivers accurate surface estimates. 
Results for \(p=500\) (see Section~\ref{sec:additional_simu} of the Supplement) mirror these findings: in high‐dimensional settings with endogeneity, our method substantially outperforms PFLR while still providing precise estimates for both the outcome and exposure models.

	\begin{table}
		\begin{center}
			\caption{\small Monte Carlo averages and empirical standard errors in parentheses in Example \ref{ex:one_dimensional}.}
			\label{table:MSEs_betaS}
            %\vspace{-0.05in}
		\tabcolsep 1pt
			\renewcommand{\arraystretch}{0.6}
			\begin{tabular}{ccccccccccccc}
				\hline
				\hline
				$\rho_1$&      $\rho_2$ &   & FZ$_Z$  &  FN$_Z$   &  FZ$_Y$  &  FN$_Y$  & MSE$_{\bm{B} }$  & MSE$_\beta$  \\
				\hline 
0.3	&	0	&	FLSEM	&	0.0(0.0) & 	0.0(0.0) & 	0.0(0.0) & 	0.393(0.486) & 	0.009(0.004) & 	0.050(0.022) & 	\\
	&		&	PFLM	&	- & 	- & 	0.0(0.0) & 	2.538(1.291) & 	0.007(0.002) & 	0.077(0.028) & 	\\
\cline{2-9}												
	&	0.2	&	FLSEM	&	0.0(0.0) & 	0.0(0.0) & 	0.0(0.0) & 	0.345(0.472) & 	0.008(0.003) & 	0.051(0.026) & 	\\
	&		&	PFLM	&	- & 	-  & 	0.0(0.0) & 	3.178(0.999) & 	0.036(0.017) & 	0.130(0.053) & 	\\
\cline{2-9}												
	&	0.5	&	FLSEM	&	0.0(0.0) & 	0.0(0.0) & 	0.0(0.0) & 	0.332(0.468) & 	0.008(0.003) & 	0.065(0.029) & 	\\
	&		&	PFLM	&	- & 	-  & 	0.0(0.0) & 	3.894(0.444) & 	0.486(0.126) & 	3.618(1.495) & 	\\
\cline{2-9}												
	&	0.7	&	FLSEM	&	0.0(0.0) & 	0.0(0.0) & 	0.0(0.0) & 	0.338(0.470) & 	0.008(0.003) & 	0.073(0.031) & 	\\
	&		&	PFLM	&	- & 	-  & 	0.0(0.0) & 	3.961(0.396) & 	1.166(0.130) & 	9.634(1.079) & 	\\
\hline												
0.5	&	0	&	FLSEM	&	0.0(0.0) & 	0.198(0.397) & 	0.0(0.0) & 	0.399(0.488) & 	0.010(0.005) & 	0.060(0.027) & 	\\
	&		&	PFLM	&	- & 	-  & 	0.0(0.0) & 	2.339(1.266) & 	0.006(0.002) & 	0.091(0.035) & 	\\
\cline{2-9}												
	&	0.2	&	FLSEM	&	0.0(0.0) & 	0.176(0.379) & 	0.0(0.0) & 	0.292(0.452) & 	0.009(0.005) & 	0.065(0.035) & 	\\
	&		&	PFLM	&	- & 	-  & 	0.0(0.0) & 	3.080(1.059) & 	0.035(0.015) & 	0.148(0.056) & 	\\
\cline{2-9}												
	&	0.5	&	FLSEM	&	0.0(0.0) & 	0.213(0.408) & 	0.0(0.0) & 	0.309(0.459) & 	0.010(0.005) & 	0.076(0.037) & 	\\
	&		&	PFLM	&	- & 	-  & 	0.0(0.0) & 	3.855(0.490) & 	0.474(0.132) & 	3.531(1.545) & 	\\
\cline{2-9}												
	&	0.7	&	FLSEM	&	0.0(0.0) & 	0.216(0.410) & 	0.0(0.0) & 	0.281(0.446) & 	0.010(0.005) & 	0.087(0.044) & 	\\
	&		&	PFLM	&	- & 	-  & 	0.0(0.0) & 	3.961(0.396) & 	1.166(0.130) & 	9.622(1.075) & 	\\
				\hline \hline  		
			\end{tabular}
		\end{center}
	\end{table}

 	\begin{table}
		\begin{center}
			\caption{\small Simulation results for the functional response of Monte Carlo averages and empirical standard errors in parentheses in Example \ref{ex:one_dimensional}.}
			\label{table:MSEs_Cs}
			
			\vspace{-0.2in}
			\tabcolsep 4pt
			\renewcommand{\arraystretch}{0.6}
			\begin{tabular}{ccccccccccccc}
				\hline
				\hline
				$\rho_1$  &  $\rho_2$   & MSE$_{C_1}$ & MSE$_{C_2}$ & MSE$_{C_3}$ & MSE$_{C_4}$ & MSE$_{C_5}$ & PMSE$_{Z}$\\
				\hline 
0.3	&	0	&	0.005(0.004) & 	0.006(0.004) & 	0.006(0.004) & 	0.005(0.003) & 	0.005(0.004) & 	1.621(0.196) & 	\\
	&	0.2	&	0.005(0.004) & 	0.005(0.004) & 	0.005(0.004) & 	0.005(0.003) & 	0.005(0.003) & 	1.613(0.188) & 	\\
	&	0.5	&	0.005(0.004) & 	0.005(0.004) & 	0.005(0.003) & 	0.006(0.004) & 	0.005(0.004) & 	1.612(0.190) & 	\\
	&	0.7	&	0.005(0.003) & 	0.005(0.004) & 	0.006(0.004) & 	0.005(0.004) & 	0.005(0.004) & 	1.606(0.187) & 	\\
\hline										
0.5	&	0	&	0.006(0.004) & 	0.007(0.005) & 	0.007(0.005) & 	0.009(0.007) & 	0.005(0.004) & 	1.625(0.195) & 	\\
	&	0.2	&	0.006(0.004) & 	0.007(0.005) & 	0.008(0.006) & 	0.008(0.006) & 	0.006(0.005) & 	1.622(0.188) & 	\\
	&	0.5	&	0.005(0.004) & 	0.007(0.005) & 	0.008(0.006) & 	0.009(0.006) & 	0.006(0.004) & 	1.617(0.191) & 	\\
	&	0.7	&	0.005(0.004) & 	0.007(0.005) & 	0.008(0.006) & 	0.008(0.005) & 	0.006(0.005) & 	1.611(0.187) & 	\\
				\hline \hline  		
			\end{tabular}
		\end{center}
	\end{table}

\end{example}

%The next example evaluates the estimation and prediction performances for two dimensional $\bm{B}(t_1, t_2)$.

\begin{example}
	\label{ex:2D}
    In the second, we extend to a two‐dimensional surface 
$\bm{B}(t_1, t_2)$.  
%    We simulate each subject's data as follows. 
We generate 
two-dimensional $\bm{Z}_i(t_1,t_2)$ as 
$\bm{Z}_i(t_1,t_2) = \sum_{l=1}^5 X_{ij} C_l (t_1,t_2)+  \widetilde \xi_{i 1} \varphi_1(t_1,t_2) + \widetilde \xi_{i 2} \varphi_2(t_1,t_2)$ and 
set $\bm{B}(t_1,t_2) =  \exp(-(t_1-t_2))$, where   $C_1(t) = 2(t_1^2 +t_2^2),$ $  C_2(t) = 3 \cos( \pi t_1/2 )\cos( \pi t_2/2 ),$  
$$  
	C_3(t) = {\sqrt{2} \over 2} \{\sin(\pi t _1/2 )+ 3 \sin(3 \pi t_2 /2)\}, 
	C_4(t) =\exp(-(t_1-t_2)), \quad C_5(t) = 2+ t_1+t_2.
$$ 
The random bases are \(\varphi_1=1.588\sin(\pi t_1)\) and  \(\varphi_2=2.157\{\cos(\pi t_2)-0.039\}\).  Scalar covariates and errors are generated as in Example~\ref{ex:one_dimensional}.  We observe \(Z_i\) on a \(100\times150\) grid over \([0,1]^2\) and set \(n=400\) and \(p=20\), with true 
\(\beta=(2,0,0,0,0,5.5,4,3.5,5,4.5,0,\dots,0)^\top\).  
As before,   we have   
	$\mathcal{C}=  \{1\}$, 
	  $\mathcal{P}=  \{6,7,8,9,10\}$, 
$\mathcal{I} =  \{ 2,3, 4,5\}$,  and   $\mathcal{S} =  \{11,\ldots, p\}$.

%Table \ref{table:MSEs_2D} presents the selection accuracy, estimation errors, and prediction error for the two dimensional imaging data.
%The results share similar patterns as that in Example \ref{ex:one_dimensional} such that the proposed method show superior performance to PFLR in all scenarios.
%Table~\ref{table:MSEs_Cs_2D} further gives the estimation errors of the imaging model, which reveals that the proposed FGSDAR algorithm
%gives accurate estimates for the two dimensional imaging response under different correlation structures.

Table \ref{table:MSEs_2D} summarizes selection accuracy, estimation, and prediction errors for these   simulations. Similar to Example \ref{ex:one_dimensional},  our method outperforms PFLR across all settings. Table \ref{table:MSEs_Cs_2D} reports functional‐model errors under various correlation structures, confirming that FGSDAR yields accurate surface estimates in high‐dimensional imaging scenarios. 

	\begin{table}
	\begin{center}
		\caption{\small Simulation results of Monte Carlo averages for the outcome in Example \ref{ex:2D}.}
		\label{table:MSEs_2D}
		
		\vspace{0.8ex}
		\tabcolsep 2pt
		\renewcommand{\arraystretch}{0.6}
		\begin{tabular}{ccccccccccccc}
			\hline
			\hline
			$\rho_1$&      $\rho_2$ &   & FZ$_Z$  &  FN$_Z$   &  FZ$_Y$  &  FN$_Y$  & MSE$_B$ & MSE$_\beta$ \\
			\hline 
0.3	&	0	&	FLSEM	&	0.0(0.0) & 	7.120(1.996) & 	0.0(0.0) & 	0.080(0.274) & 	0.049(0.014) & 	0.027(0.019) & 	\\
%&		&	Plugged-In	&	- & 	-& 	0.0(0.0) & 	1.360(1.064) & 	0.053(0.017) & 	0.063(0.048) & 	\\
&		&	PFLM	&		- & 	-& 	0.0(0.0) & 	2.300(1.216) & 	0.053(0.021) & 	0.063(0.036) & 	\\
\cline{2-9}												
&	0.2	&	FLSEM	&	0.0(0.0) & 	6.960(2.194) & 	0.0(0.0) & 	0.040(0.198) & 	0.051(0.016) & 	0.033(0.016) & 	\\
%&		&	Plugged-In	&		- & 	-& 	0.0(0.0) & 	1.320(1.203) & 	0.052(0.016) & 	0.074(0.060) & 	\\
&		&	PFLM	&		- & 	- & 	0.0(0.0) & 	3.380(0.830) & 	0.235(0.074) & 	0.579(0.238) & 	\\
\cline{2-9}												
&	0.5	&	FLSEM	&	0.0(0.0) & 	7.380(2.108) & 	0.0(0.0) & 	0.120(0.385) & 	0.047(0.015) & 	0.037(0.026) & 	\\
%&		&	Plugged-In	&		- & 	-& 	0.0(0.0) & 	0.960(1.142) & 	0.052(0.029) & 	0.067(0.061) & 	\\
&		&	PFLM	&		- & 	-& 	0.0(0.0) & 	3.960(0.198) & 	0.644(0.123) & 	3.296(0.524) & 	\\
\cline{2-9}												
&	0.7	&	FLSEM	&	0.0(0.0) & 	7.260(2.068) & 	0.0(0.0) & 	0.020(0.141) & 	0.048(0.013) & 	0.034(0.024) & 	\\
%&		&	Plugged-In	&	- & 	-& 	0.0(0.0) & 	0.900(1.093) & 	0.050(0.012) & 	0.063(0.049) & 	\\
&		&	PFLM	&		- & 	-& 	0.0(0.0) & 	4.000(0.0) & 	0.945(0.015) & 	7.226(0.169) & 	\\
\hline												
0.5	&	0	&	FLSEM	&	0.0(0.0) & 	6.280(2.176) & 	0.0(0.0) & 	0.080(0.274) & 	0.046(0.012) & 	0.032(0.022) & 	\\
%&		&	Plugged-In	&		- & 	- & 	0.0(0.0) & 	1.360(1.241) & 	0.065(0.101) & 	0.180(0.862) & 	\\
&		&	PFLM	&		- & 	-& 	0.0(0.0) & 	2.600(1.355) & 	0.059(0.028) & 	0.084(0.075) & 	\\
\cline{2-9}												
&	0.2	&	FLSEM	&	0.0(0.0) & 	7.220(2.122) & 	0.0(0.0) & 	0.060(0.314) & 	0.047(0.013) & 	0.034(0.031) & 	\\
%&		&	Plugged-In	&		- & 	- & 	0.0(0.0) & 	1.060(0.913) & 	0.048(0.012) & 	0.061(0.044) & 	\\
&		&	PFLM	&		- & 	- & 	0.0(0.0) & 	3.560(0.733) & 	0.206(0.097) & 	0.552(0.288) & 	\\
\cline{2-9}												
&	0.5	&	FLSEM	&	0.0(0.0) & 	6.400(2.356) & 	0.0(0.0) & 	0.080(0.274) & 	0.042(0.010) & 	0.039(0.027) & 	\\
%&		&	Plugged-In	&		- & 	-& 	0.0(0.0) & 	0.940(1.132) & 	0.048(0.020) & 	0.077(0.065) & 	\\
&		&	PFLM	&		- & 	-& 	0.0(0.0) & 	3.960(0.198) & 	0.616(0.089) & 	3.503(0.522) & 	\\
\cline{2-9}												
&	0.7	&	FLSEM	&	0.0(0.0) & 	7.180(2.077) & 	0.0(0.0) & 	0.080(0.274) & 	0.047(0.014) & 	0.037(0.026) & 	\\
%&		&	Plugged-In	&	- & 	- & 	0.0(0.0) & 	0.740(0.876) & 	0.049(0.015) & 	0.060(0.046) & 	\\
&		&	PFLM	&		- & 	- & 	0.0(0.0) & 	4.000(0.0) & 	0.944(0.014) & 	7.217(0.172) & 	\\
			\hline \hline  		
		\end{tabular}
	\end{center}
\end{table}

	\begin{table}
	\begin{center}
		\caption{\small Estimation results for the two dimensional functional response in Example \ref{ex:2D}.}
		\label{table:MSEs_Cs_2D}
		
		\vspace{-0.2in}
		\tabcolsep 4pt
		\renewcommand{\arraystretch}{0.6}
		\begin{tabular}{ccccccccccccc}
			\hline
			\hline
$\rho_1$	&	$\rho_2$   &   MSE$_{C_1}$ & MSE$_{C_2}$ & MSE$_{C_3}$ & MSE$_{C_4}$ & MSE$_{C_5}$  &  PMSE$_Z$ \\
			\hline 
0.3	&	0	&	0.138(0.023) & 	0.139(0.024) & 	0.333(0.033) & 	0.068(0.028) & 	0.058(0.024) & 	3.381(0.159) & 	\\
&	0.2	&	0.132(0.021) & 	0.138(0.025) & 	0.334(0.041) & 	0.073(0.028) & 	0.070(0.044) & 	3.394(0.167) & 	\\
&	0.5	&	0.138(0.021) & 	0.140(0.028) & 	0.338(0.046) & 	0.068(0.018) & 	0.068(0.038) & 	3.389(0.159) & 	\\
&	0.7	&	0.137(0.025) & 	0.138(0.020) & 	0.336(0.039) & 	0.069(0.024) & 	0.063(0.029) & 	3.399(0.157) & 	\\
\cline{2-8}										
0.5	&	0	&	0.139(0.030) & 	0.136(0.025) & 	0.336(0.035) & 	0.067(0.023) & 	0.068(0.039) & 	3.410(0.175) & 	\\
&	0.2	&	0.135(0.023) & 	0.136(0.020) & 	0.332(0.029) & 	0.066(0.023) & 	0.070(0.041) & 	3.392(0.150) & 	\\
&	0.5	&	0.138(0.027) & 	0.141(0.018) & 	0.330(0.032) & 	0.074(0.027) & 	0.069(0.038) & 	3.393(0.129) & 	\\
&	0.7	&	0.136(0.026) & 	0.143(0.023) & 	0.329(0.031) & 	0.074(0.028) & 	0.070(0.048) & 	3.377(0.150) & 	\\
			\hline \hline  		
		\end{tabular}
	\end{center}
\end{table}
\end{example}

\begin{example}
    \label{ex:divide and conquer}
%In the third, we conduct additional simulations to further explore the proposed divide and conquer procedures. 
%The simulation settings are similar to those in Example \ref{ex:2D}, with the exception that we fix $\rho_1 = 0.2$ and $\rho_2 = 0.3$.

%First, we examine the performance of the estimates of the outcome model by dividing the samples into two subsamples. Figure \ref{fig:divide_y_z} shows the boxplots of the prediction error and the MSEs of the estimates for $\bm{B}(t_1,t_2)$ and $\beta$. The results demonstrate that the estimation and prediction performance of the outcome model using the divide and conquer procedure closely resembles the performance when we use the full dataset.

%Next, we investigate the performance of the estimates of the functional model using the region-based divide and conquer procedure. The samples are divided into two subsamples, with a side length of 10. Figure \ref{fig:divide_y_z} presents the boxplots of  prediction error and  MSEs of the estimates for $C_1(t_1,t_2), \dots, C_5(t_1,t_2)$. The findings illustrate that there is little loss in efficiency compared to using the full dataset.

In the third, we evaluate the proposed divide‐and‐conquer procedures under the two‐dimensional design of Example~\ref{ex:2D}. As shown in Figure~\ref{fig:divide_y_z}, the divide‐and‐conquer estimator closely matches the accuracy and predictive performance of the full‐sample fit. Due to page limits, detailed descriptions are deferred to Section~\ref{sec:additional_simu} of the supplement.

% First, we split the \(n\) observations into two equal subsamples to estimate the outcome model. Figure~\ref{fig:divide_y_z} (top row) displays boxplots of the outcome‐prediction error and the MSEs for \(\mathbf{B}(t_1,t_2)\) and \(\beta\). Even with half the data, the divide‐and‐conquer estimator matches the accuracy and predictive performance of the full‐sample fit.

% Next, we apply the region‐based divide‐and‐conquer scheme to estimate the functional coefficients \(C_1,\dots,C_5\). We tile the imaging domain into \(10\times10\) subregions and again split the sample into two subsets. Figure~\ref{fig:divide_y_z} (bottom row) shows boxplots of the MSE for each \(C_\ell(t_1,t_2)\). The subregion‐based estimates incur negligible efficiency loss compared to the full data, demonstrating that both divide‐and‐conquer strategies scale effectively without sacrificing accuracy.

%      \begin{figure}
%  	\centering
%  	% Requires \usepackage{graphicx}
%  	\includegraphics[height=4cm, width=5cm]{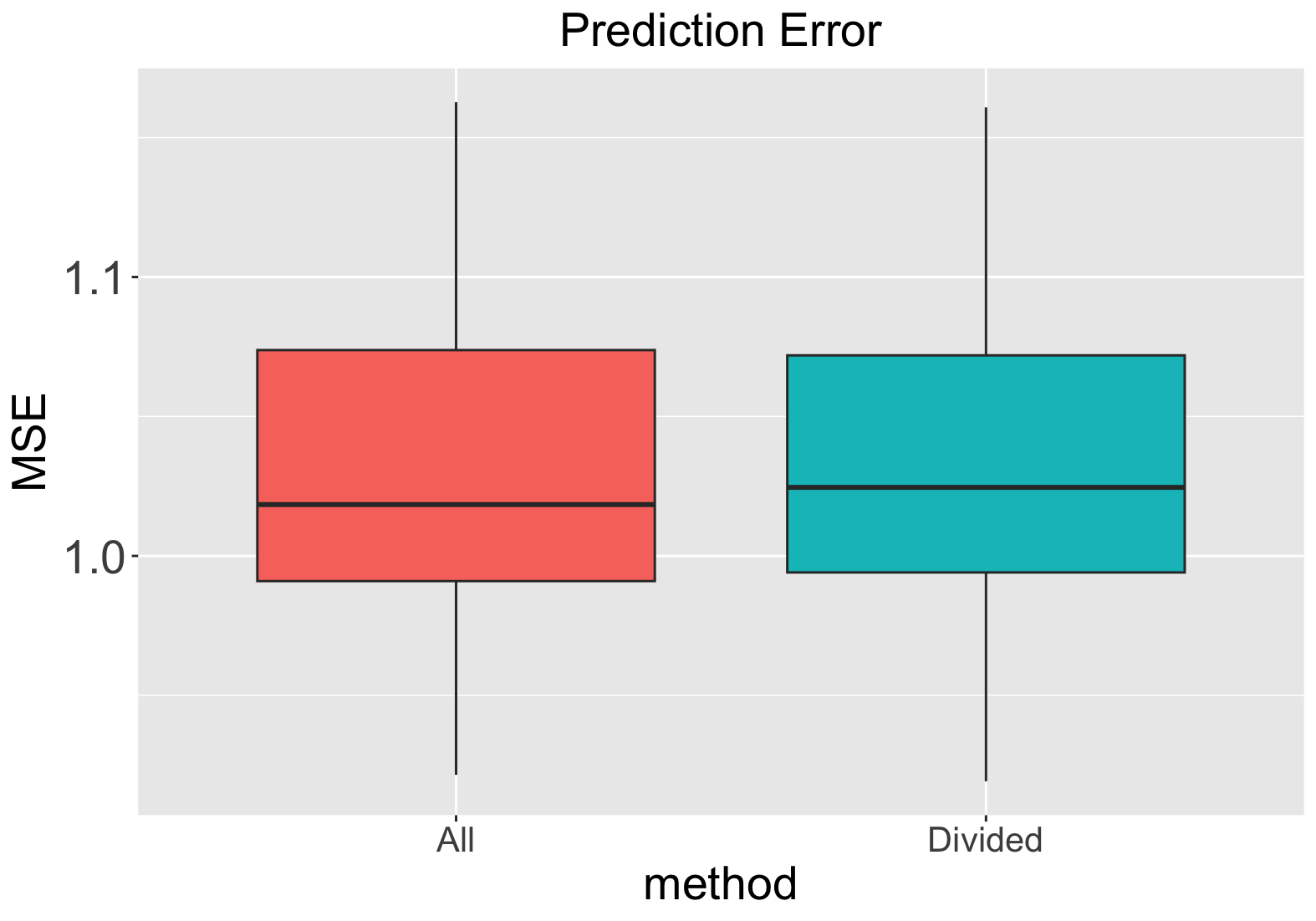}
%        \includegraphics[height=4cm, width=5cm]{./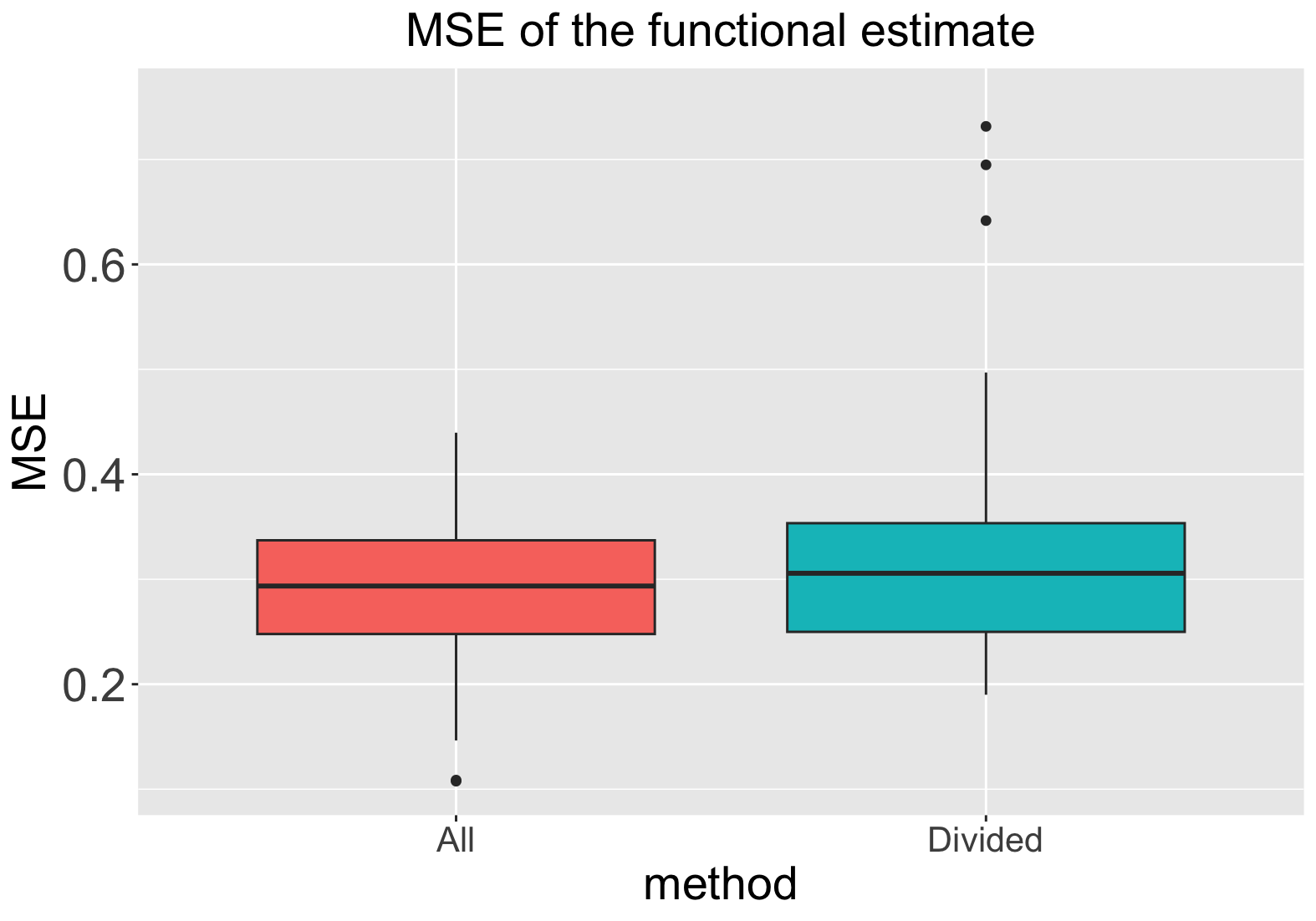}
%         \includegraphics[height=4cm, width=5cm]{./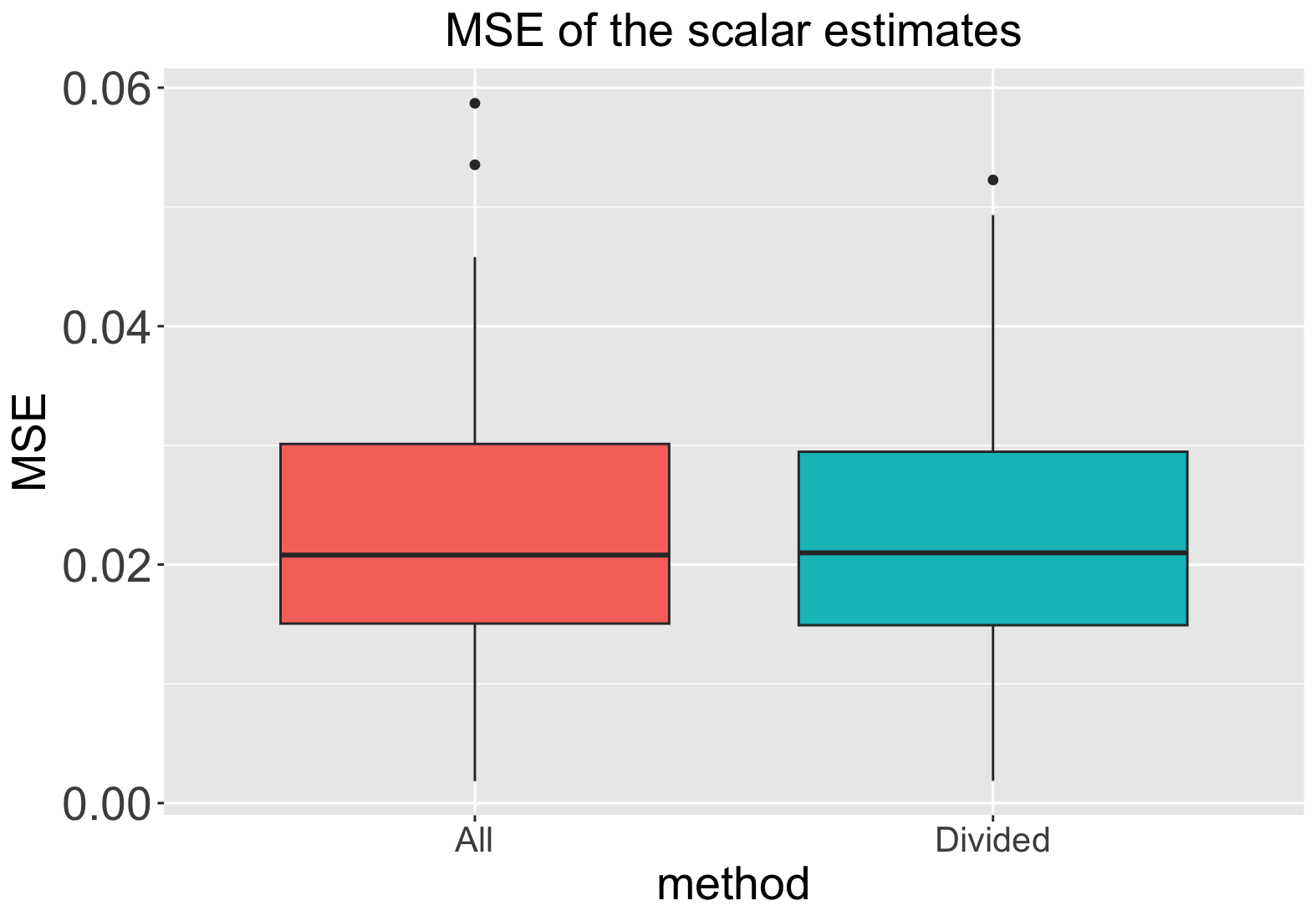} \\   	\includegraphics[height=4cm, width=5cm]{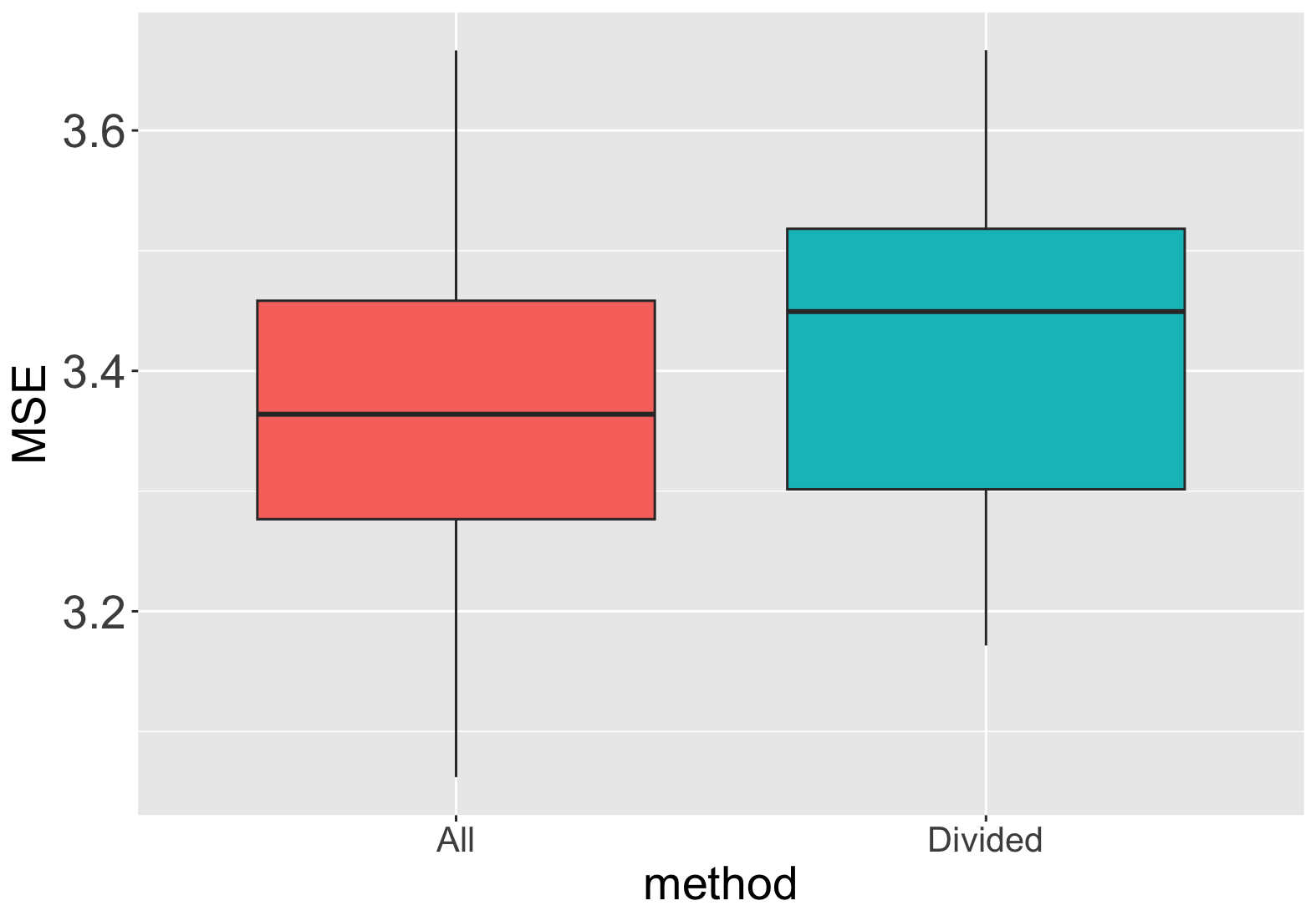}
%  	\includegraphics[height=4cm, width=10cm]{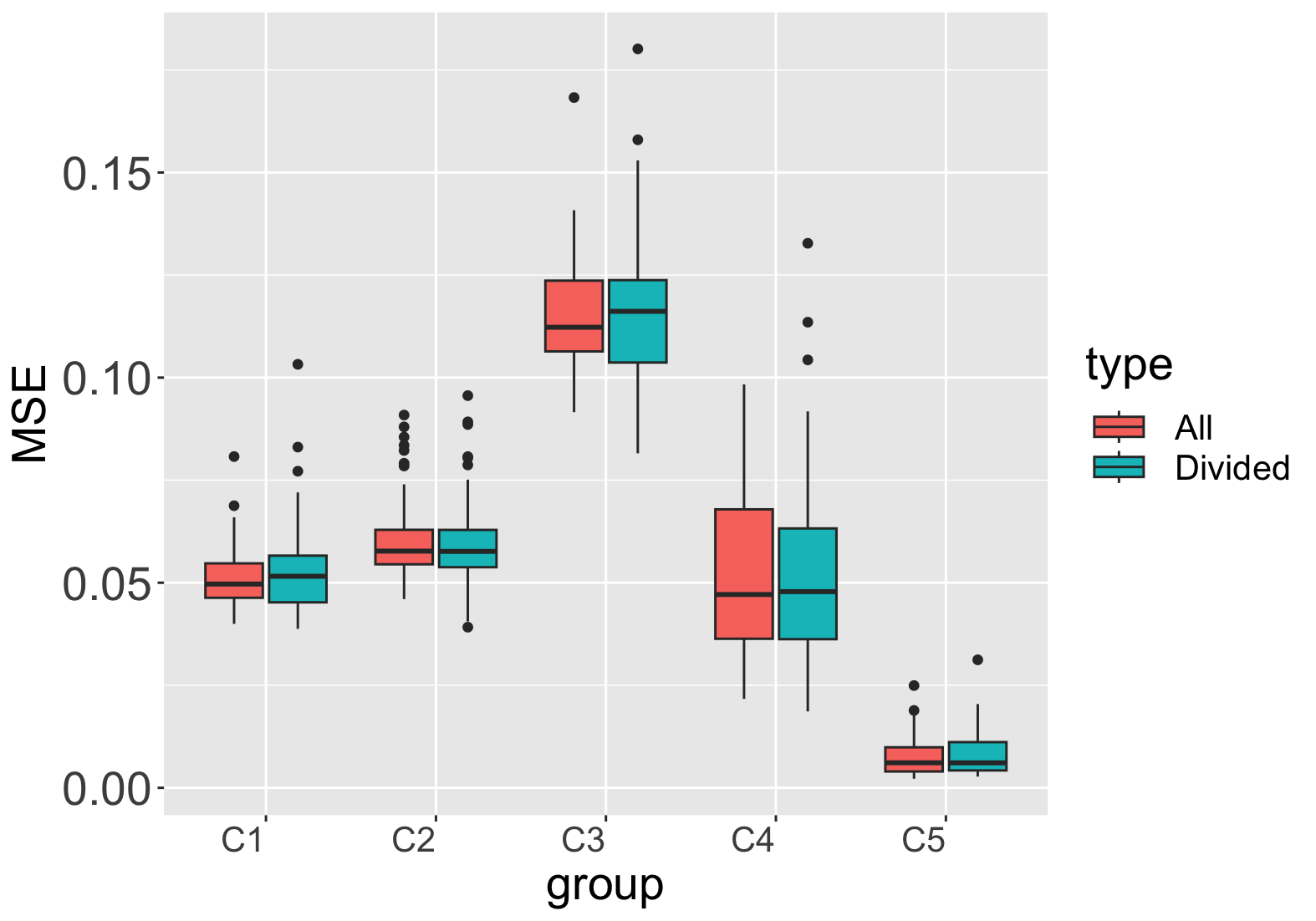}
%  	\vspace{-0.1 in}
%  	\caption{\small Boxplots of prediction errors and MSEs of Estimates for parameters in the outcome model (the first row), and for the functional model (the second row).  }
%  	\label{fig:divide_y_z}
%  	\vspace{-0.1 in}
%  \end{figure}
\end{example}

\begin{example}
In our fourth example, we evaluate the finite‐sample performance of the proposed hypothesis test. We retain the design of Example~\ref{ex:one_dimensional} but scale the true functional effect to
$
B(t)
= b\sum_{k=1}^5(-1)^{k+1}k^{-2}\varphi_k(t)$ with  
$ b\in\{0,0.04,0.08,0.12,0.16,0.20\}.$ 

Figure~\ref{fig:testing} reports empirical rejection rates at the 5\% significance level for our method and for a naive test that treats the functional covariate as observed without correcting for endogeneity, based on 1,000 simulations. When endogeneity is absent (\(\rho_2=0\)), the naive test correctly controls size. As \(\rho_2\) increases, however, its Type I error inflates substantially, making its results unreliable. In contrast, our test maintains size near 5\% across all \(\rho_2\) levels, and its power approaches 100\% as the signal strength \(b\) grows.

      \begin{figure}
 	\centering
 	% Requires \usepackage{graphicx}
 	\includegraphics[height=6cm, width=13cm]{./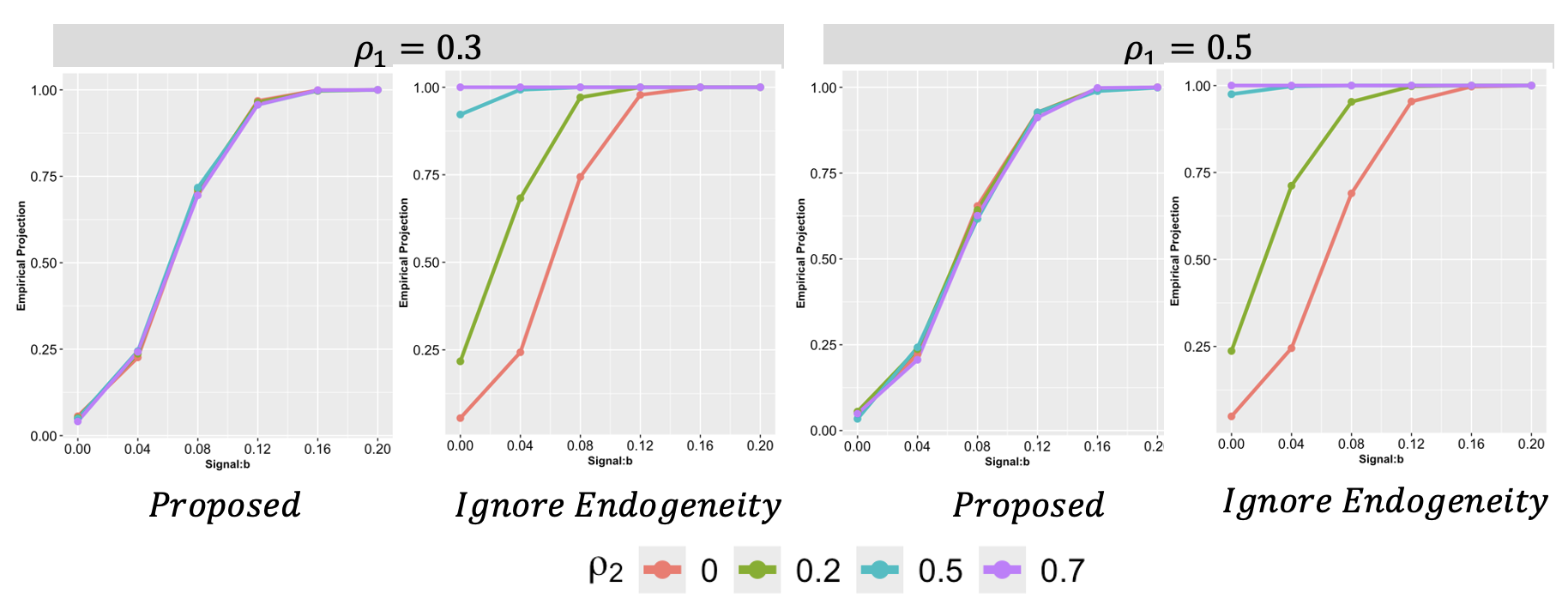}
 	\vspace{-0.1 in}
 	\caption{\small Empirical rejection of proposed testing method, ``Ignore endogeneity'' indicates using the observed functional data for testing.  }
 	\label{fig:testing}
 	\vspace{-0.1 in}
 \end{figure}
 
\end{example}

\section{Real Data Analysis of UK Biobank}
\label{sec:realData}
%Intelligence is associated with important economic and health-related life outcomes \citep{polderman2015meta}. 
%Fluid intelligence represents a pivotal aspect of human cognitive abilities, delineating the capacity to solve new problems, reason, and process complex information independently of acquired knowledge. 
%It's a subject of extensive research within psychology and neuroscience due to its fundamental role in learning, problem-solving, and adaptation \cite{gray2003neural,deary2010neuroscience,deary2021genetic}. Advancements in neuroimaging and genetics, particularly functional magnetic resonance imaging (fMRI) and single nucleotide polymorphisms (SNP) analysis, have opened new avenues for understanding the biological and genetic underpinnings of fluid intelligence. These methods provide a window into the brain's functional networks during cognitive tasks and the genetic variations associated with cognitive abilities.
%The heritability estimate was 51\% for fluid general intelligence \citep{davies2011genome}.
%We aim to systematically explore the relationship between structural and functional brain imaging and intelligence.

The UK Biobank is a large-scale, prospective cohort study that recruited over 500,000 participants aged 40–69 from across the United Kingdom between 2006 and 2010 \citep{sudlow2015uk}. One of its primary objectives is to collect comprehensive health and biological data prior to disease onset \citep{elliott2018genome}, thereby enabling the investigation of early predictors of complex traits and diseases.

Among the rich set of variables available, fluid intelligence scores are derived from participants' performance on 13 problem-solving questions, representing a core component of cognitive function. Meanwhile, brain function is typically assessed through task-based fMRI, which captures dynamic patterns of neural activation. Prior research has highlighted associations between structural and functional brain imaging and intelligence \citep{cheng2020large}, with findings suggesting that individuals with larger brain volumes tend to perform better on intelligence tests \citep{mcdaniel2005big, rushton2009whole}.

The central goal of our data analysis is to investigate how genetic variations and brain functional networks jointly influence individual differences in fluid intelligence. By integrating single-nucleotide polymorphism (SNP) data with fMRI measurements, we aim to (i) identify neural activation patterns and brain regions correlated with fluid intelligence, (ii) uncover genetic variants associated with these traits, and (iii) determine shared SNPs that may serve as confounders or mediators. This integrative approach offers critical insights into the neurobiological and genetic underpinnings of human intelligence. Ultimately, such findings could inform the design of targeted educational interventions, improve strategies for supporting cognitive development, and enhance early detection of cognitive impairments.

After rigorous quality control of both imaging and phenotypic data, we constructed an analytical dataset of  
n=11,400 participants. This dataset includes a comprehensive set of covariates: demographic factors (age, sex, education level, body mass index, social deprivation index), lifestyle variables (smoking, physical activity, healthy diet, moderate alcohol consumption), medical history (hypertension, diabetes), imaging-related technical variables (site, volumetric scaling, head position, brain position), and interaction terms (age·sex, age², sex·age²). To correct for population stratification, we additionally include the top 40 principal components from genome-wide SNP data \citep{price2006principal}.
Resting-state functional connectivity fMRI serves as the imaging data that measures brain activity, summarized by 78 grid points. Each grid point represents the functional connectivity between a pair of brain networks or regions. These regions are grouped into 12 canonical resting-state networks, with detailed labels shown in Figure~\ref{fig:pheno_gram}(b).

Given the ultrahigh dimensionality of the SNP data—over 96 million variants—we apply a two-step sure independence screening (SIS) procedure \citep{fan2008sure}, separately tailored for both the outcome (fluid intelligence) and imaging (fMRI) models. Specifically, we control for covariates and rank SNPs based on their marginal absolute correlations with the fluid intelligence score, retaining the top 2,000. For the imaging model, we conduct marginal screening across all fMRI voxels and rank SNPs by the aggregated absolute correlation across pixels, again retaining the top 2,000 SNPs. All SNPs and continuous variables are standardized for comparability.

This screening strategy aligns with practices from genome-wide association studies (GWAS), which aim to detect associations between genetic variants and phenotypes at scale. Table \ref{table:common snp} in the supplement presents the SNPs commonly selected in both the fluid intelligence and fMRI screening processes. Notably, these SNPs cluster within the same linkage disequilibrium (LD) block on chromosome 15 \citep{berisa2016approximately}, suggesting a shared genetic influence on both brain function and cognition.

Figure \ref{fig:pheno_gram}(a) provides a genomic ideogram of SNPs selected by our proposed FLSEM and PFLM. FLSEM selects 158 SNPs, PFLM selects 163, with 126 SNPs overlapping. Most of the shared SNPs are located on chromosomes 2, 6, and 14, highlighting potential key genomic regions that contribute to individual differences in cognitive ability through both direct and indirect pathways. 

Building on the leading SNPs identified through the sure independence screening, our analysis proceeds in three main steps. First, we fit an imaging-on-scalar model to estimate the imaging variables as a function of the selected genetic variants. This step captures the component of brain activation explained by the SNPs. Second, using the fitted imaging values as covariates, we apply a variable selection model to the fluid intelligence scores to identify SNPs that directly influence cognition after accounting for imaging-based effects. Finally, we estimate the direct effect of the imaging variables on fluid intelligence, adjusting for demographic, clinical, and genetic confounders, and the selected precision variables.

Figure \ref{fig:pheno_gram}(b) illustrates the estimated effects of fMRI-based brain connectivity on fluid intelligence derived from both the proposed FLSEM and the PFLM. The results reveal distinct patterns across methods. Notably, both models identify positive associations between connectivity in the language and ventral multimodal networks and higher fluid intelligence scores. However, connectivity involving the visual network displays divergent effects between the two models, suggesting method-dependent sensitivity to specific brain regions.
%These findings provide insights into neural mechanisms of cognition and highlight the value of integrative models for disentangling complex brain–behavior relationships.

To formally evaluate the role of brain function in cognitive performance, we test the association between fMRI-derived imaging exposures and fluid intelligence scores. The proposed testing procedure yields a $p$-value of 0.08, indicating marginal significance at the 10\% level. To further probe the generalizability of these associations, we extend our analysis to other cognitive phenotypes available in the UK Biobank, including: Reaction Time (assessing processing speed), Trail Making (executive function and cognitive flexibility), Numeric Memory (working memory), Matrix Completion (reasoning and pattern recognition), and Tower Rearranging (planning and problem-solving). 
% Table \ref{table:common snp} presents the $p$-values from our nullity tests for the effect of fMRI on these cognitive traits, controlling for both SNPs and clinical covariates. 
Significant associations are found for Reaction Time ($p = 0.003$), Trail Making ($p = 0.001$), and Tower Rearranging ($p = 0.009$), while Matrix Completion is marginally significant ($p = 0.087$). In contrast, Numeric Memory ($p = 0.268$) shows no significant relationship with the imaging data. These findings underscore the heterogeneous influence of brain function on different aspects of cognition and emphasize the need for targeted, phenotype-specific modeling.

 \begin{figure}
 	\centering
 	% Requires \usepackage{graphicx}
 	\includegraphics[height=7cm, width=15cm]{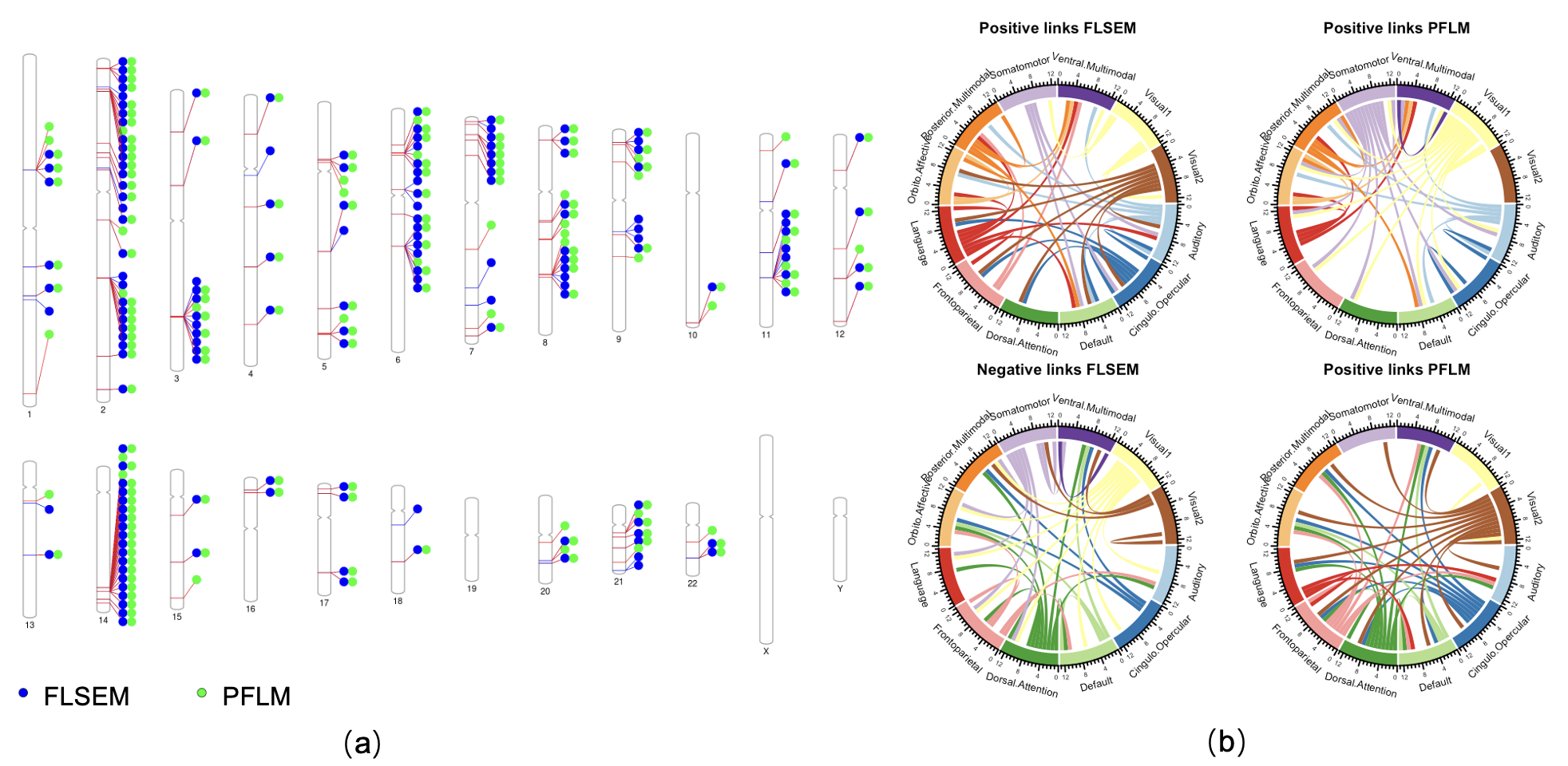}
 	\vspace{-0.1 in}
 	\caption{\small Panel (a) shows the positions of SNPs selected by FLSEM and PFLM; Panel (b) displays the positive and negative fMRI effect estimates on fluid intelligence from both methods.    }
 	\label{fig:pheno_gram}
 	%\vspace{-0.1 in}
 \end{figure}

% \begin{figure}
% 	\centering
% 	% Requires \usepackage{graphicx}
% 	\includegraphics[height=5cm, width=18cm]{}
% 	\vspace{-0.1 in}
% 	\caption{\small UK Biobank data analysis results: positive and negative estimates of the fMRI effects on the fluid intelligence via FLSEM and PFLM.  }
% 	\label{fig:fmr_est}
% 	\vspace{-0.2 in}
% \end{figure}

\section{Discussion}
%This paper aims to investigate the interplay between genetic, imaging, and clinical data in relation to clinical outcomes, while accounting for unobserved confounders. We propose a functional linear structural equation model that addresses the endogeneity, the complexities of high-dimensional genetic data, and infinite-dimensional functional data. Our three-step procedure, utilizing the 
%$L_0$
%  penalty, effectively identifies relevant instruments and control variables, while establishing identifiable conditions for functional exposure models with mixed scalar instruments. Additionally, the Functional Group Support Detection and Root Finding (FGSDAR) algorithm enhances the selection of instrumental covariates. Our theoretical findings, supported by extensive simulations and applications to the large-scale UK Biobank dataset, demonstrate the practical applicability of our methods.

 % \section{Discussion}

We introduced the FLSEM framework to uncover causal links among genetic, imaging, and clinical data under unobserved confounding.  Our \(L_0\)‐penalized, three‐step estimator—bolstered by the FGSDAR algorithm—simultaneously handles endogeneity, high‐dimensional scalars, and infinite‐dimensional functions.  We proved identification for mixed scalar–functional instruments, derived nonasymptotic error and testing guarantees, and demonstrated via simulations that our approach outperforms naive methods, especially under strong endogeneity.  An application to the UK Biobank confirms its practical value in large‐scale multimodal studies. 
The current framework can be extended to incorporate GWAS summary data for causal factor identification \citep{fang2025multivariate}, and to enable causal mediation analysis by modeling both direct and mediated treatment effects \citep{yang2024causal}.

\section{Acknowledgment}
% This research has been conducted using the UK Biobank resource
% (application number 22783), subject to a data transfer agreement.
This research has been conducted using the UK Biobank Resource under Application Number 22783, subject to a data transfer agreement. We thank the participants in the UKB study for their contribution and the research teams for the work in collecting, processing, and disseminating these datasets for analysis. We thank the University of North Carolina at Chapel Hill and the Research computing groups for providing computational resources and support that have contributed to the research results. 
  This paper was primarily completed during Dr. Ting Li's visit to UNC Chapel Hill. Dr. Fan's work  was partially supported by the National Science Foundation (NSF) grants DMS-2230795 and DMS-2230797. Dr. Zhu's work was partially supported by the Gillings Innovation Laboratory on generative AI and by grants from the National Institute on Aging (NIA) of the National Institutes of Health (NIH), including   1R01AG085581, and RF1AG082938, the National Institute of Mental Health (NIMH) grant 1R01MH136055,  and the NIH grants R01AR082684, and 1OT2OD038045-01. The content of this paper is solely the responsibility of the authors and does not necessarily represent the official views of these institutions.

 \section{Author Contributions} 
Ting Li proposed the statistical method, conducted the theoretical analysis and simulation studies, and drafted the manuscript. Ethan Fang assisted with the theoretical analysis and contributed to revising and editing the manuscript. Tengfei Li performed the real data analysis and contributed to the interpretation of empirical results. Hongtu Zhu, supervised the project, provided critical guidance throughout the research process, and revised the manuscript. All authors read and approved the final version of the manuscript.

\setlength{\baselineskip}{18pt}

\bibliographystyle{chicago} 
\bibliography{RefFLSEM}
\end{document}

% --- supplement: FLSEM_supp.tex ---

%%%%%%%%%%%%%%%%%%%%%%%%%%%%%%%%%%%%%%%%%%%%%%%%%%%%%%%%%%%%%%%%%%%%%%%%%%%%%%

\if1\blind
{
  \title{\bf  Causal Inference in Biomedical Imaging with  Functional Linear Structural Equation Models}
    \author{Ting Li
    \hspace{.2cm}\\
    School of  Statistics and Data Science,  \\  Shanghai University of Finance and Economics, Shanghai, China\\
   Ethan Fan \\
    Department of Biostatistics, Duke University, Durham, USA \\
   Tengfei Li and    Hongtu Zhu\thanks{ Address for correspondence: Hongtu Zhu, Ph.D., E-mail: htzhu@email.unc.edu. \\
  This paper was primarily completed during Dr. Ting Li's visit to UNC Chapel Hill. Dr. Fan's work  was partially supported by the National Science Foundation (NSF) grants DMS-2230795 and DMS-2230797. Dr. Zhu's work was partially supported by the Gillings Innovation Laboratory on generative AI and by grants from the National Institute on Aging (NIA) of the National Institutes of Health (NIH), including U01AG079847, 1R01AG085581, and RF1AG082938, and the NIH grants R01AR082684, and 1OT2OD038045-01. The content of this paper is solely the responsibility of the authors and does not necessarily represent the official views of these institutions.
  } \\
    Departments of Radiology, Computer Science, Genetics, and Biostatistics, \\ University of North Carolina at Chapel Hill, Chapel Hill, USA}
  \maketitle
} \fi

\if0\blind
{
  \bigskip
  \bigskip
  \bigskip
  \begin{center}
    {\LARGE \bf   Causal Inference in Biomedical Imaging with  Functional Linear Structural Equation Models}
\end{center}
  \medskip
} \fi

This supplementary material contains additional simulation results, additional real data analysis, auxiliary lemmas and proofs of the theorems in the main text.

\section{Additional Simulation Studies}
\label{sec:additional_simu}

In this section, we present additional simulation results to examine the finite sample performance of the proposed method in scenarios where the dimension of scalar covariates exceeds the sample size, and the proposed divide-and-conquer procedures. 

{\bf Example \ref{ex:one_dimensional} (Continued).}
The data settings are similar to those in Example \ref{ex:one_dimensional} in the main text, with the exception that the dimension \( p \) is set to 500. 

Tables \ref{table:MSEs_betaS_high} and \ref{table:MSEs_Cs_high} summarize the results, including variable selection accuracy, estimation accuracy for the outcome model, and the functional model. The findings are consistent with those observed under \( p = 20 \): the proposed method outperforms the PFLR method in the presence of endogeneity, while providing accurate estimates for both the outcome and exposure models.

	\begin{table}
		\begin{center}
			\caption{\small Simulation results of Monte Carlo averages and empirical standard errors in parentheses in Example \ref{ex:one_dimensional} for $p=500$.}
			\label{table:MSEs_betaS_high}
            \vspace{0.3ex}
		\tabcolsep 1pt
			\renewcommand{\arraystretch}{0.8}
			\begin{tabular}{ccccccccccccc}
				\hline
				\hline
				$\rho_1$&      $\rho_2$ &   & FZ$_Z$  &  FN$_Z$   &  FZ$_Y$  &  FN$_Y$  & MSE$_{\bm{B} }$  & MSE$_\beta$  \\
				\hline 
0.3	&	0	&	FLSEM	&	0.000(0.000) & 	1.193(0.686) & 	0.000(0.000) & 	1.875(0.896) & 	0.008(0.002) & 	0.158(0.065) & 	\\
	&		&	PFLM	&	- & 	-  & 	0.000(0.000) & 	3.913(0.428) & 	0.006(0.003) & 	0.194(0.036) & 	\\
\cline{2-9}												
	&	0.2	&	FLSEM	&	0.000(0.000) & 	1.250(0.702) & 	0.000(0.000) & 	1.866(0.904) & 	0.010(0.004) & 	0.184(0.076) & 	\\
	&		&	PFLM	&	- & 	-  & 	0.000(0.000) & 	3.894(0.444) & 	0.016(0.005) & 	0.189(0.039) & 	\\
\cline{2-9}												
	&	0.5	&	FLSEM	&	0.000(0.000) & 	1.257(0.677) & 	0.000(0.000) & 	2.041(0.905) & 	0.009(0.003) & 	0.248(0.094) & 	\\
	&		&	PFLM	&	- & 	-  & 	0.000(0.000) & 	3.942(0.405) & 	0.207(0.050) & 	0.470(0.111) & 	\\
\cline{2-9}												
	&	0.7	&	FLSEM	&	0.000(0.000) & 	1.243(0.686) & 	0.000(0.000) & 	2.094(0.966) & 	0.009(0.003) & 	0.275(0.107) & 	\\
	&		&	PFLM	&	- & 	-  & 	0.000(0.000) & 	3.923(0.421) & 	1.108(0.278) & 	8.955(3.187) & 	\\
\hline												
0.5	&	0	&	FLSEM	&	0.000(0.000) & 	0.476(0.497) & 	0.000(0.000) & 	1.777(1.130) & 	0.009(0.004) & 	0.153(0.079) & 	\\
	&		&	PFLM	&	- & 	-  & 	0.000(0.000) & 	3.856(0.508) & 	0.007(0.003) & 	0.202(0.039) & 	\\
\cline{2-9}												
	&	0.2	&	FLSEM	&	0.000(0.000) & 	0.449(0.495) & 	0.000(0.000) & 	1.790(1.037) & 	0.011(0.006) & 	0.192(0.091) & 	\\
	&		&	PFLM	&	- & 	-  & 	0.000(0.000) & 	3.836(0.537) & 	0.019(0.008) & 	0.201(0.040) & 	\\
\cline{2-9}												
	&	0.5	&	FLSEM	&	0.000(0.000) & 	0.471(0.497) & 	0.000(0.000) & 	1.909(0.918) & 	0.011(0.006) & 	0.249(0.101) & 	\\
	&		&	PFLM	&	- & 	-  & 	0.000(0.000) & 	3.875(0.477) & 	0.197(0.067) & 	0.493(0.194) & 	\\
\cline{2-9}												
	&	0.7	&	FLSEM	&	0.000(0.000) & 	0.439(0.494) & 	0.000(0.000) & 	1.925(0.943) & 	0.011(0.006) & 	0.281(0.112) & 	\\
	&		&	PFLM	&	- & 	-  & 	0.000(0.000) & 	3.913(0.428) & 	1.080(0.293) & 	8.595(3.449) & 	\\
    \hline \hline
			\end{tabular}
		\end{center}
	\end{table}

 	\begin{table}
		\begin{center}
			\caption{\small Simulation results for the functional response of Monte Carlo averages and empirical standard errors in parentheses in Example \ref{ex:one_dimensional} for $p=500$.}
			\label{table:MSEs_Cs_high}
			
			\vspace{2ex}
			\tabcolsep 4pt
			\renewcommand{\arraystretch}{0.8}
			\begin{tabular}{ccccccccccccc}
				\hline
				\hline
				$\rho_1$  &  $\rho_2$   & MSE$_{C_1}$ & MSE$_{C_2}$ & MSE$_{C_3}$ & MSE$_{C_4}$ & MSE$_{C_5}$ & PMSE$_{Z}$\\
				\hline 
0.3	&	0	&	0.004(0.003) & 	0.005(0.004) & 	0.006(0.004) & 	0.006(0.005) & 	0.006(0.005) & 	1.643(0.195) & 	\\
	&	0.2	&	0.005(0.003) & 	0.006(0.004) & 	0.005(0.004) & 	0.006(0.004) & 	0.005(0.004) & 	1.634(0.191) & 	\\
	&	0.5	&	0.006(0.004) & 	0.007(0.005) & 	0.006(0.004) & 	0.006(0.005) & 	0.006(0.005) & 	1.635(0.192) & 	\\
	&	0.7	&	0.005(0.003) & 	0.006(0.005) & 	0.006(0.004) & 	0.005(0.004) & 	0.006(0.004) & 	1.638(0.192) & 	\\
\hline										
0.5	&	0	&	0.007(0.004) & 	0.008(0.006) & 	0.007(0.006) & 	0.007(0.005) & 	0.007(0.005) & 	1.633(0.192) & 	\\
	&	0.2	&	0.006(0.004) & 	0.008(0.005) & 	0.007(0.006) & 	0.007(0.005) & 	0.007(0.005) & 	1.628(0.190) & 	\\
	&	0.5	&	0.005(0.003) & 	0.008(0.006) & 	0.009(0.006) & 	0.007(0.005) & 	0.007(0.006) & 	1.627(0.189) & 	\\
	&	0.7	&	0.006(0.004) & 	0.008(0.005) & 	0.008(0.006) & 	0.008(0.006) & 	0.007(0.005) & 	1.630(0.189) & 	\\
				\hline \hline  		
			\end{tabular}
		\end{center}
	\end{table}

{\bf Example \ref{ex:divide and conquer} (Continued).}
In this example, we evaluate the proposed divide‐and‐conquer procedures under the two‐dimensional design of Example~\ref{ex:2D}, with \(\rho_1=0.2\) and \(\rho_2=0.3\).

First, we split the \(n\) observations into two equal subsamples to estimate the outcome model. Figure~\ref{fig:divide_y_z} (top row) displays boxplots of the outcome‐prediction error and the MSEs for \(\mathbf{B}(t_1,t_2)\) and \(\beta\). Even with half the data, the divide‐and‐conquer estimator matches the accuracy and predictive performance of the full‐sample fit.

Next, we apply the region‐based divide‐and‐conquer scheme to estimate the functional coefficients \(C_1,\dots,C_5\). We tile the imaging domain into \(10\times10\) subregions and again split the sample into two subsets. Figure~\ref{fig:divide_y_z} (bottom row) shows boxplots of the MSE for each \(C_\ell(t_1,t_2)\). The subregion‐based estimates incur negligible efficiency loss compared to the full data, demonstrating that both divide‐and‐conquer strategies scale effectively without sacrificing accuracy.

     \begin{figure}


 	\centering
 	% Requires \usepackage{graphicx}
 	\includegraphics[height=4cm, width=5cm]{./figs/prediction_divide.png}
       \includegraphics[height=4cm, width=5cm]{./figs/MSE_beta_divide.png}
        \includegraphics[height=4cm, width=5cm]{./figs/MSE_gamma_divide.png} \\   	\includegraphics[height=4cm, width=5cm]{figs/PMSEX_divide.png}
 	\includegraphics[height=4cm, width=10cm]{figs/MSE_C_divide.png}
 	\vspace{-0.1 in}
 	\caption{\small Boxplots of prediction errors and MSEs of Estimates for parameters in the outcome model (the first row), and for the functional model (the second row).  }
 	\label{fig:divide_y_z}
 	\vspace{-0.1 in}
 \end{figure}

\section{Additional Real Data Analysis}

Table \ref{table:common snp} presents the SNPs commonly selected in both the fluid intelligence and fMRI screening processes.

 \begin{table}[htbp]
 	\caption{Detailed information of the common SNPs  for fluid intelligence score and fMRI.}
 	\centering
 	\scriptsize
 	\tabcolsep 4pt
 	\renewcommand{\arraystretch}{0.8}
 	\begin{tabular}{lllrrr}
 		\hline \hline
 		\multicolumn{1}{c}{SNP}&\multicolumn{1}{c}{Chr}&\multicolumn{1}{c}{Base Pair} & Allele 1 & Allele 2  \\
 		\hline
 		rs76443071   & 15  &  32943439   & T  &  C    \\  %PVRL2
 		rs118151036   &  15   &  329454622 &  C  & A  \\ %APOE
 		rs143826519   &  15  & 32947334  & T  & C  \\ %APOE
 		\hline \hline
 	\end{tabular}
 	\label{table:common snp}
 \end{table}

% Table \ref{table:common snp} presents the $p$-values from our nullity tests for the effect of fMRI on these cognitive traits, controlling for both SNPs and clinical covariates.

% \begin{table}
%  	\caption{The $p$-values of the nullity test of fMRI on various cognitive scores.}
%  	\centering
%  	\scriptsize
%  	\tabcolsep 4pt
%  	\renewcommand{\arraystretch}{0.8}
%  	\begin{tabular}{ccccccc}
%  		\hline \hline
%  	Cognitive score & Reaction Time &   Trail Making  & Numeric Memory & Matrix completion & Tower Rearranging \\
%     \hline
%     $p$-value   & 0.003 & 0.001  & 0.268  &  0.087 & 0.009 \\
%  		\hline \hline
%  	\end{tabular}
%  	\label{table:common snp}
%  \end{table}

\section{Proofs of the main theorems}

\begin{proof}[\bf Proof of Theorem \ref{thm:invalid_linear}]
%Under the condition that $\int_{\mathcal T} C(t) \bm{B}(t) dt =  \sum_{k=1}^R b_k \bm{c}_ k + O(K^{-2r}) $ for some $r>0$, then the identification of $\bm{B}(t)$ is changed into the identification of $\{b_k\}_{k=1}^R$ by noticing that $\bm{B}(t)  \approx  \sum_{k=1}^R b_k  \varphi_k(t)$ with $\{\varphi_k(t)\}$ represents an orthogonal basis.  
Under the condition
\[
\int_{\mathcal{T}} C(t)\bm{B}(t)\,dt = \sum_{k=1}^R b_k \bm{c}_k + O(R^{-2r})
\]
for some \(r > 0\), the identification of the functional parameter \(\bm{B}(t)\) can be reduced to identifying the finite-dimensional coefficient vector \(\{b_k\}_{k=1}^R\), by noting that \(\bm{B}(t) \approx \sum_{k=1}^R b_k \varphi_k(t)\), where \(\{\varphi_k(t)\}\) denotes an orthogonal basis. 

%We show that $\beta$ is a unique solution if $\{b_k\}_{k=1}^R$ is a unique solution  to \eqref{eq:decomposition_identification}. Suppose that $\{b_k\}_{k=1}^R$  is a unique solution, then for any two solutions $\beta^{(1)}$ and $\beta^{(1)}$, we have 
%\begin{eqnarray*}
%\Gamma^* =
%\beta^{(1)}+  \sum_{k=1}^R b_k \bm{c}_ k \quad \mbox{and}~~
% \Gamma^* =
%\beta^{(2)}+  \sum_{k=1}^R b_k \bm{c}_ k.
%\end{eqnarray*}
%It is easy to see that $\beta^{(1)} = \beta^{(2)} = \Gamma^* - \sum_{k=1}^R b_k \bm{c}_ k$.
%Hence, it suffices to show that $\{b_k\}_{k=1}^R$ is a unique solution.

We establish that \(\beta\) is uniquely identified if and only if \(\{b_k\}_{k=1}^R\) is uniquely determined by the moment conditions in equation~\eqref{eq:decomposition_identification}. Suppose \(\{b_k\}_{k=1}^R\) is uniquely identified. Then, for any two solutions \(\beta^{(1)}\) and \(\beta^{(2)}\), we have:
\[
\Gamma^* = \beta^{(1)} + \sum_{k=1}^R b_k \bm{c}_k
\quad \text{and} \quad
\Gamma^* = \beta^{(2)} + \sum_{k=1}^R b_k \bm{c}_k.
\]
Subtracting the two equations yields \(\beta^{(1)} = \beta^{(2)} = \Gamma^* - \sum_{k=1}^R b_k \bm{c}_k\), confirming the uniqueness of \(\beta\). Therefore, it suffices to show that \(\{b_k\}_{k=1}^R\) is uniquely identifiable.

%First, we prove the sufficient condition. Suppose that $b^m_k=b^{m^\prime}_k$ for all 
%$m, m^\prime \in \{1, \dots, M\} $, but there are two distinct sets of parameter, 
%$\beta^{(1)}, \{  b_k^{(1)}  \}$ and $\beta^{(2)}, \{  b_k^{(2)}  \}$ that solve \eqref{eq:decomposition_identification} such that
%\begin{eqnarray*}
%\Gamma^* =
%\beta^{(1)}+  \sum_{k=1}^R b^{(1)}_k \bm{c}_ k  , \quad
% \Gamma^* =
%\beta^{(2)}+  \sum_{k=1}^R b^{(2)}_k \bm{c}_ k .
%\end{eqnarray*}
%Denote $\mathcal{C}^{(1)}= \{j: \beta_j^{(1)}  \neq 0 \}$ and $\mathcal{C}^{(2)}= \{j: \beta_j^{(2)}  \neq 0 \}$ be the sets of invalid instruments among the relevant instruments for the distinct sets of parameters. 
%If the supports are equal to each other, then for any $S_\ell \in \mathcal{C}^{(1)} = \mathcal{C}^{(2)}$ with $|S_\ell| =K$, we have $b^{(1)}_k=b^{(2)}_k$ and $\beta^{(1)} = \beta^{(2)}$, a contradiction.
%Because the number of invalid instruments $| \mathcal{C}^{(1)} | < U$, the number of valid instruments satisfies $L - | \mathcal{C}^{(1)}  |  > L - U$. It follows that  
%$| (\mathcal{C}^{(1)} )^C | , | (\mathcal{C}^{(2)} )^C|  > L - U$.

%Pick any subsets $(\mathcal{C}^{(1^\prime)})^C$ and $(\mathcal{C}^{(2^\prime)})^C$  of  $(\mathcal{C}^{(1)})^C$ and $(\mathcal{C}^{(2)})^C$ with $|(\mathcal{C}^{(1^\prime)})^C| = |(\mathcal{C}^{(2^\prime)})^C| = L - U+ 1 > K$. These subsets inherit the following property from  $(\mathcal{C}^{(1)})^C$ and $(\mathcal{C}^{(2)})^C$ that
%\begin{eqnarray*}
%\Gamma_j^* =  \sum_{k=1}^R b^{(1)}_k \bm{c}_{jk} , ~ S_j \in (\mathcal{C}^{(1^\prime)})^C, \quad
%\Gamma_k^* =  \sum_{k=1}^R b^{(2)}_k \bm{c}_{\nu k}  , ~ S_\nu \in (\mathcal{C}^{(2^\prime)})^C
%\end{eqnarray*}
%with $|S_j|=K$ and $ | S_\nu |=K$.
%According to the subspace condition, we have $b^{(1)}_k =b^{(2)}_k $, which is a contradiction.
%Hence, the two sets of parameters $\beta^{(1)}, \{  b_k^{(1)}  \}$ and $\beta^{(2)}, \{  b_k^{(2)}  \}$ are identical to each other and the %solution is unique.

First, we prove sufficiency. Assume there exist two solutions 
\((\beta^{(1)}, \{b_k^{(1)}\})\) and \((\beta^{(2)}, \{b_k^{(2)}\})\) to \eqref{eq:decomposition_identification} such that
\[
\Gamma^* = \beta^{(1)} + \sum_{k=1}^R b_k^{(1)} \,\bm c_k
\quad\text{and}\quad
\Gamma^* = \beta^{(2)} + \sum_{k=1}^R b_k^{(2)} \,\bm c_k.
\]
Let 
$
\mathcal C^{(d)} = \{\,j : \beta_j^{(d)} \neq 0\}
$
denote the set of invalid instruments under \(\beta^{(d)}\) for $d=1$ and $2$. Since \(|\mathcal C^{(d)}|<U\), its complement satisfies
\[
\bigl|(\mathcal C^{(d)})^C\bigr| = L - |\mathcal C^{(d)}| > L - U.
\]
Choose subsets 
$
S_1 \subset (\mathcal C^{(1)})^C$ and  $S_2 \subset (\mathcal C^{(2)})^C
$
of size \(|S_1| = |S_2| = L - U + 1 > K\).  For each \(j\in S_1\) and \(\nu\in S_2\), the moment equations reduce to
\[
\Gamma^*_j = \sum_{k=1}^R b_k^{(1)}\,c_{jk}\quad\text{and}\quad 
\qquad
\Gamma^*_\nu = \sum_{k=1}^R b_k^{(2)}\,c_{\nu k}.
\]
By the subspace‐restriction condition, it follows that \(b_k^{(1)} = b_k^{(2)}\) for all \(k\).  Consequently,
\[
\beta^{(1)} 
= \Gamma^* - \sum_{k=1}^R b_k^{(1)}\,\bm c_k
= \Gamma^* - \sum_{k=1}^R b_k^{(2)}\,\bm c_k
= \beta^{(2)},
\]
contradicting the assumption of distinct solutions. Hence, the solution is unique.

%Second, we prove the necessary condition by showing that the subspace conditions on $\Gamma^*$ and $\{ \bm{c}_k\}$ hold if the solution is unique.
%Pick any two sets $\mathcal{C}^{(1)}$, $\mathcal{C}^{(2)} \subseteq \{1, \dots, L\}$ satisfying $|(\mathcal{C}^{(1)})^C| = |(\mathcal{C}^{(2)})^C| = L - U+ 1 > K$. Denote the corresponding constants to be $b_k^{(1)}$ and $b_k^{(2)}$. We next show that 
%$b_k^{(1)} = b_k^{(2)}$ for any pair of two sets.

Second, we prove necessity by showing that the subspace-restriction condition on \(\Gamma^*\) and \(\{\bm c_k\}\) must hold if the solution is unique.  
Such a set always exists by taking a subset of the true valid instruments. Specifically, let \(\mathcal C^*\) denote the true set of invalid instruments.  
Since \(|(\mathcal C^*)^C| > L - U\), we can select a subset \((\mathcal C^{*\prime})^C \subseteq (\mathcal C^*)^C\) of size \(L - U + 1\) such that for each \(j \in (\mathcal C^{*\prime})^C\),
\[
\Gamma_j^* = \sum_{k=1}^R b_k^{(*\prime)} c_{jk}.
\]
If this subset \((\mathcal C^{*\prime})^C\) is unique, the subspace-restriction condition holds trivially since \(\{b_k^{(*\prime)}\}\) must match the true coefficients \(\{b_k\}\).

Now assume there are two or more subsets. Let \(\mathcal C^{(1)}\) and \(\mathcal C^{(2)}\) be any two subsets of \(\{1,\dots,L\}\) whose complements each have size 
\(\bigl|(\mathcal C^{(d)})^C\bigr|=L-U+1>R\) for $d=1,2$.  Denote by \(\{b_k^{(d)}\}\) the coefficients inferred from \((\mathcal C^{(d)})^C\) for $d=1,2$ satisfying $\Gamma_j^* \;=\;\sum_{k=1}^R b^{(d)}_k\,c_{jk}$ for $j \in (\mathcal C^{(d)})^C$. We aim to show that $b_k^{(1)} = b_k^{(1)}$.

For each \(d=1,2\), we construct two solution sets $(\beta^{(d)}, \{ b_k^{(d)} \} ) $ to \eqref{eq:decomposition_identification} for $d=1,2$ by taking $\beta_j^{(d)}=0$ for $j\in(\mathcal C^{(d)})^C$ and $\beta_j^{(d)} = \Gamma_j^* -\sum_{k=1}^R b_k^{(d)}\,c_{jk}$ for $j\in\mathcal C^{(d)}$. It is obvious that the two solution sets satisfy 
\[
\begin{cases}
\Gamma_j^* = \displaystyle\sum_{k=1}^R b_k^{(d)}\,c_{jk}, & j\in(\mathcal C^{(d)})^C,\\[1ex]
\Gamma_j^* = \beta_j^{(d)}+\sum_{k=1}^R b_k^{(d)}\,c_{jk}, & j\in\mathcal C^{(d)},
\end{cases}
\]
where \(\beta^{(d)}\) has fewer than \(U\) nonzero entries. 
Uniqueness of the overall solution implies \(b_k^{(1)}=b_k^{(2)}\).  Since the choice of \(\mathcal C^{(1)}\) and \(\mathcal C^{(2)}\) was arbitrary, we conclude that \(b_k^{(m)}=b_k^{(m')}\) for all \(m,m'\), establishing necessity.

%Denote $\mathcal{C}^*$ as the true set of invalid instruments.
%Notice that $|(\mathcal{C}^*)^C| > L - U$, we can take any subset $(\mathcal{C}^{*^\prime})^C  \subseteq (\mathcal{C}^*)^C$ with $ | (\mathcal{C}^{*^\prime})^C | = L - U +1 > K$.
%Then $\Gamma_j^* =  \sum_{k=1}^R b^{(1)}_k \bm{c}_{jk}$ for $S_j \in (\mathcal{C}^{*^\prime})^C$ with $|S_j|=K$, and $ \{ b_k^{(1)} \}$ corresponds to the true $ \{ b_k \}$.
%If there is exactly one set $\mathcal{C}^{(1)}$, the subspace condition holds automatically. 

%Suppose there are two or more sets and let $\mathcal{C}^{(1)}$, $\mathcal{C}^{(2)} $ be any pair of the sets. By definition,
%\begin{eqnarray*}
%	\Gamma_j^* &=&  \sum_{k=1}^R b^{(1)}_k \bm{c}_{jk}^* , ~ S_j \in (\mathcal{C}^{(1)})^C, 
%	\Gamma_j^* = \beta^{(1)} + \sum_{k=1}^R b^{(1)}_k \bm{c}_{jk}^* , ~ S_j \in (\mathcal{C}^{(1)}) \\
%	\Gamma_k^* &=&  \sum_{k=1}^R b^{(2)}_k \bm{c}_{\nu k}^*  , ~ S_\nu \in (\mathcal{C}^{(2)})^C,
%		\Gamma_k^* = \beta^{(2)} + \sum_{k=1}^R b^{(2)}_k \bm{c}_{\nu k}^* , ~ S_\nu \in (\mathcal{C}^{(2)}) ,
%\end{eqnarray*}
%where the numbers of nonzero elements in $\beta^{(1)}$ and $\beta^{(2)} $ are smaller than $U$. Meanwhile, since the equation has only one unique solution, then $b^{(1)}_k=b^{(2)}_k$.
%Notice that this holds for any two sets $(\mathcal{C}^{(1)})^C$ and $(\mathcal{C}^{(2)})^C$ and cardinality $L-U+1$, one can directly obtain %$b^{(m)}_k=b^{(m^\prime)}_k$ for any $m, m^\prime$.

\end{proof}

\begin{proof}[\bf Proof of Corollary \ref{cor:number of invalid IV}.]
Consider any two sets \( S_m \) and \( S_{m'} \) with corresponding coefficients \( \{ b_k^{(m)} \} \) and \( \{ b_k^{(m')} \} \) satisfying 
\[
\sum_{k=1}^R b_k^{(m)} \bm{c}_{\ell k} = \Gamma_\ell^*, \quad \ell \in S_m,  \quad \text{and} \quad
\sum_{k=1}^R b_k^{(m')} \bm{c}_{\ell k} = \Gamma_\ell^*, \quad \ell \in S_{m'}.
\]
We now show that \( S_m \cap S_{m'} \) contains at least \( R \) elements. Since \( |S_m| = |S_{m'}| = L - U + 1 \geq (L + R - 1)/2 + 1 \), we have
\[
|S_m| + |S_{m'}| \geq L + R + 1 
\]
implying \( |S_m \cap S_{m'}| \geq R \). For each \( \ell \in S_m \cap S_{m'} \), the equality of representations implies \( b_k^{(m)} = b_k^{(m')} \). Since this holds for any such pair \( S_m \) and \( S_{m'} \), the subspace-restriction condition in Theorem~\ref{thm:invalid_linear} must hold whenever \( U \leq (L - R + 1)/2 \), yielding identification as desired.
\end{proof}

\begin{proof}[\bf Proof of Theorem \ref{thm:imaging_approximation_error}]
	Suppose that $\gamma \leq 1$. By iteratively using \eqref{eq:D2 iteration} in Lemma \ref{eq: D2 decomposition},
	we have 
	\bas
	D_2 ( A_R^{k+1} ) 
	& \leq &
	\gamma D_2(  A_R^k ) + \gamma h(J, \lambda_K) \\
	& \leq &
	\gamma ( \gamma D_2(  A_R^{k-1} ) + \gamma h(J, \lambda_K) ) + \gamma h(J, \lambda_K)  \\
	& \leq &  
	\cdots  \\
	& \leq &
	\gamma^{k+1} D_2 (A_R^0 ) + {\gamma \over (1 - \gamma) }h(J, \lambda_K)  \\
	& \leq &
	\gamma^{k+1} \|   \bm{C}^*  \|_2  + {\gamma \over (1 - \gamma) } h(J, \lambda_K) .
	\eas 
	Further, plugging the above inequality into\eqref{eq: beta_iteration},
	it is easy to derive that 
	\bas
	\|  \bm{C}^{k+1} - \bm{C}^*  \|_2 
	& \leq & 
	\Big( 1+ { \theta_{J,J}  tr(K) \over c_-(J) tr(K + \lambda_K I) } \Big)  D_2(A^k)+ 
	 h(J,\lambda_K)   \\
	& \leq & 
\Big( 1+ { \theta_{J,J}  tr(K) \over c_-(J) tr(K + \lambda_K I) } \Big) \big(  \gamma^{k} \|  \bm{C}^*  \|_2  \big) \\
	&&+
	\Big[
	{\gamma \over (1 - \gamma) }  \Big( 1+ { \theta_{J,J}  tr(K) \over c_-(J) tr(K + \lambda_K I) } \Big)  + 1
	\Big] h(J,\lambda_K).
	\eas
 Hence,
 \begin{eqnarray}
 \label{eq:b_1_b_2}
  b_1 = 1+ { \theta_{J,J}  tr(K) \over c_-(J) tr(K + \lambda_K I)} , \quad
 b_2 = {\gamma \over (1 - \gamma) }  \Big( 1+ { \theta_{J,J}  tr(K) \over c_-(J) tr(K + \lambda_K I) } \Big)  +1 .    
 \end{eqnarray}
    The order of $h(J, \lambda_K)$ follows from Lemma \ref{lem: KZ_epsilon concentration}.
	This completes the proof.
\end{proof}

\begin{proof}[\bf Proof of Theorem \ref{thm:consistency_revelant_controls}]
  
  This proof follows directly from the proof of Theorem 2 and Theorem 3 in the supplementary of 
  \cite{li2024partially} by replacing the functional covariate with the estimated functional covariate, and using Lemma \ref{lem: Tn_T},
\begin{eqnarray}
\label{eq:M_3_M_4}
M_3 = c_1 M^2_4 \delta^2_{\max} \| \bm{B}^*  \|^2_\mathcal{K} + \sigma_x^2 n^{-1/2} M_4 \|\bm{B}^*  \|^2_\mathcal{K} ,  \quad
M_4 =   n^{1/2} \lambda ( 2M_1  +1) +  ( 1 + {1 \over \sqrt{ \nu_4}} ) \sqrt{c_1 } tr(T),   
\end{eqnarray}
  and we omit it here.

\end{proof}

\begin{proof}[\bf Proof of Theorem \ref{thm:testing}]
Recall that $\mathcal{M}_{\widehat A} = I_n - X_{\widehat A} ( X_{\widehat A} X_{\widehat A}^\top )^{-1} X_{\widehat A}^\top$.
According to the outcome model,
$$
\mathcal{M}_{\widehat A}  Y = \mathcal{M}_{\widehat A}  X_{A^*} \beta_{A^*}^* + \mathcal{M}_{\widehat A} \int_{\mathcal{T}} \widehat{\bm{Z}} (t) \bm{B}^*(t) dt + \mathcal{M}_{\widehat A} \widehat \nu,
$$
where $\widehat \nu = \int_{\mathcal{T}} (\bm{Z} (t)- \widehat{\bm{Z}}(t) )\bm{B}^*(t) dt + \epsilon. $
 By the definition of $\widehat{\bm{F}} (t)$, under the null hypothesis that $\bm{B}^*(t)=0$, we can obtain that 
 \begin{eqnarray*}
   \sqrt{n} (T_n + \lambda I)^{-1} \widehat{\bm{F}} (t)
 &=&
 n^{-1/2}  ( K^{1/2} \widehat{\bm{Z}}(t) )^\top \mathcal{M}_{\widehat A}  Y 
 =
 n^{-1/2}  ( K^{1/2} \widehat{\bm{Z}}(t) )^\top  ( \mathcal{M}_{\widehat A}  X_{A^*} \beta_{A^*}^*  + \mathcal{M}_{\widehat A} \epsilon ) \\
&=&
  n^{-1/2}  ( K^{1/2} \widehat{\bm{Z}}(t) )^\top  \Big (  ( \mathcal{M}_{\widehat A} - \mathcal{M}_{ A^*} )\epsilon +( \mathcal{M}_{\widehat A} - \mathcal{M}_{ A^*} ) X_{A^*} \beta_{A^*}^*  +  \mathcal{M}_{A^*} \epsilon 
  \Big) \\
  &=&
  S_1 +S_2+S_3
 \end{eqnarray*}
 where $S_1, S_2, S_3$ denote the above three terms, and 
 the third equality follows from that $\mathcal{M}_{ A^*}  X_{A^*} =0 $.

 We first calculate $\|S_1\|_2^2$. For any $\nu >0$, 
 \begin{eqnarray*}
  P( \|S_1\|_2^2 > \nu )  \leq P( \|S_1\|_2^2 > \nu, \widehat A=A^* )  + P(\widehat A \neq A^* )=0
 \end{eqnarray*}
 as $n \rightarrow \infty$.
 The inequality follows from that when $\widehat A=A^*$, $\|S_1\|_2^2=0$ due to 
$\mathcal{M}_{\widehat A} = \mathcal{M}_{ A^*}$. 
Hence $\|S_1\|_2^2=O_p(1)$.
Similarly, $\|S_s\|_2^2=O_p(1)$.

We first analyze its finite-dimensional distributions of $S_3$. For any set of points $t_1, t_2, \dots t_m$, consider the vector
$$
n^{-1/2} \big(\epsilon^\top  \mathcal{M}_{A^*}  K^{1/2} \widehat{\bm{Z}}(t_1)  ,\dots, 
\epsilon^\top  \mathcal{M}_{A^*} K^{1/2} \widehat{\bm{Z}}(t_m) \big)^\top
= H \epsilon,
$$
where $H= n^{-1/2} (\mathcal{M}_{A^*}  K^{1/2} \widehat{\bm{Z}}(t_1)  ,\dots, 
 \mathcal{M}_{A^*}  K^{1/2} \widehat{\bm{Z}}(t_m)  )^\top \in \mathbb{R}^{k \times n}$.
Since $\epsilon_i$ are independent with mean 0 and variance $\sigma^2$, the vector $G \epsilon$ has mean 0 and covariance matrix 
$Cov(H \epsilon)=\sigma^2 G G^\top$.
Notice that by the law of large numbers,
$n^{-1} H H^\top$ converges  in probability to the matrix $\Sigma$
$$
\Sigma_{ij} = n^{-1} K^{1/2}E \big( \tilde{\bm{Z}}^\top (t_i) \mathcal{M}_{A^*}  \tilde{\bm{Z}} (t_j) \big) K^{1/2} \sigma^2
$$
by noticing that $|E \big( \widetilde{\bm{Z}}^\top (t_i) \mathcal{M}_{A^*}  \widetilde{\bm{Z}} (t_j) \big) -E \big( \widehat{\bm{Z}}^\top (t_i) \mathcal{M}_{A^*}  \widehat{\bm{Z}} (t_j) \big) |=o(1)$
which is positive definite.
According to the multivariate central limit theorem, we can obtain that 
\begin{eqnarray*}
\label{eq:finite dimensional}
   H \epsilon \overset{d}{\rightarrow} N(0, \Sigma).  
\end{eqnarray*}

Next, we show the tightness of $S_3$.
Consider the term $\tilde S_3=n^{-1/2} K^{1/2} \widetilde{\bm{Z}}(t)   \epsilon = n^{-1/2} \sum_{i=1}^n  K^{1/2} \widetilde{\bm{Z}}_i (t)   \epsilon_i $.
Since $$ 
E \|  K^{1/2} \widetilde{\bm{Z}}_i (t)   \epsilon_i\|_2^2 = \sigma^2 \int_{\mathcal{T}} \int_{\mathcal{T}} C(s,t) K(s,t) ds dt < \infty,
$$
and $\{ K^{1/2} \widetilde{\bm{Z}}_i (t)   \epsilon_i \}$ are independent and identically distributed random elements, then $\tilde S_3$ converges in  distribution to a Gausasian process according to Theorem 7.7.6 of \cite{hsing2015theoretical}.
It follows directly that $\tilde S_3$ is tight.
Hence, $S_3$ is also tight by noticing that 
$ \sup_t|  K^{1/2} \widetilde{\bm{Z}}(t) \mathcal{M}_{A^*}  \epsilon | <  \sup_t|  K^{1/2} \widetilde{\bm{Z}}(t)   \epsilon | $.

According to limit of the finite-dimensional distributions in \eqref{eq:finite dimensional} and the tightness of $S_3$, Theorem 18.14 in \cite{van2000asymptotic} indicates that $S_3$ converges in distribution to a Gaussian process with the covariance 
$$
 S_3 \overset{d}{\rightarrow} GP \big(0, n^{-1} K^{1/2}E \big( \widetilde{\bm{Z}}^\top (s) \mathcal{M}_{A^*}  \widetilde{\bm{Z}} (t) \big) K^{1/2} \sigma^2 \big).
$$

By the Karhunen-Loeve expansion, the above Gaussian process $G$ admits the decomposition 
$G = \sigma\sum_{j=1}^\infty \sqrt{\tilde s_j} \xi_j \psi_j $, where $\{ \xi_j, j \geq 1 \}$ are independent and identically distributed standard Gaussian random variables, ${\psi_j, j \geq 1}$ are orthogonal basis, and $\{\tilde s_j\}$ are eigen values of the operator $ n^{-1/2} K^{1/2}E \big( \widetilde{\bm{Z}}^\top (s) \mathcal{M}_{A^*}  \widetilde{\bm{Z}} (t) \big) K^{1/2}$. Hence, 
$$
{ \|S_3 \|_2^2 \over \sigma^2 } \overset{d}{\rightarrow}\sum_{j=1}^\infty \tilde s_j \xi_j^2.
$$
Combining the above results with $\|S_1\|_2^2=O_p(1) , \|S_2\|_2^2=O_p(1)$ and $\widehat \sigma^2$ converge in probability to $\sigma^2$ by large law of numbers,
we have 
$$
S_n \overset{d}{\rightarrow}\sum_{j=1}^\infty \tilde s_j \xi_j^2.
$$
This completes the proof.
\end{proof}

\section{Auxiliary Lemmas}
 We first define some notations to facilitate our discussions. 
Recall that $A_R^*$ is the true index set of the nonzero  variables, $\bm{C}^*$ is the true value of the scalar coefficients in the exposure model,
and 
$\{ A_R^k \}_k$ is the sequence of active sets generated by the FSDAR algorithm.
For any given integers $J$ and $J_z^*$ with $J \geq J_z^*$ and $F \subset S$ with $|F|= J- J_z^*$,
let $A^o= A_R^* \bigcup F$ and $I^o = (A^o)^c$. 
Define
\begin{eqnarray*}
\label{eq: D2 and triangle definition}
D_2 ( A_R^k ) = \|  \bm{C}^* |_{A_R^* \backslash A_R^{k} }    \|_2,  \quad \bigtriangleup^k = \bm{C}^{k+1} |_{A_R^k}- \bm{C}^*|_{A_R^k}.
\end{eqnarray*}
The term $D_2 ( A_R^k )$ measures the $L_2$ norm of the false zero coefficients in the $k$-th iteration, namely the coefficients in $A_R^*$ but not in $A_R^{k}$.
The term $\bigtriangleup^k$ measures the bias of the estimated coefficients in $A_R^{k}$.

Then, we let
$$
\begin{aligned}
A_1^k = {A_R^k \cap  A^o},  A_2^k = {A^o \backslash A_1^k},  I_3^k = {A_R^k \cap I^o}, I_4^k ={  I^o \backslash I_3^k },\\
A_{11}^k = { A_1^k   \backslash ( { A^{k+1} \cap A_1^k} ) },A_{22}^k = { A_2^k \backslash ( {A^{k+1} \cap A_2^k} )},
I_{33}^k =  A^{k+1} \cap I_3^k, I_{44}^k =  A^{k+1} \cap I_4^k.
\end{aligned}
$$
At the $k$-th iteration, the sets $A_1^k$ and $I_3^k$ contain the true and false positives in the active set $A_R^k$, respectively, while $A_2^k$ and $I_4^k$ contain the false and true negatives. The sets $A_{11}^k$ and $A_{22}^k$ include indices in $A^o$ that will be removed in the next iteration, and $I_{33}^k$ and $I_{44}^k$ contain indices in $I^o$ that will be added in the next iteration.
Because the sparsity level in the algorithm is $J$, then
$ | A_R^{k} | =| A_R^{k+1}|=J$. Denote the cardinality of $I_3^k$ by $\ell_k = |  I_3^k |$. 
It is obvious that $A_R^k = A_1^k \bigcup I_3^k$,
$ A_2^k = |A^o|- |A_1^k| =  |A^o| - ( | A_R^k | -  |  I_3^k |)= J-( J-\ell_k )= \ell_k$.
By definition,
\bas
| A_{11}^k | + |  A_{22}^k |= |  I_{33}^k |+|  I_{44}^k |, \quad  D_2 ( A^k ) = \|   \bm{C}^* |_{A^o \backslash A_R^{k} }  \|_2 =  \|   \bm{C}^* |_{A_2^k }  \|_2.
\eas

\begin{lemma}
	\label{lem:eigen value}
	Letting $A$ be the subset of $\mathcal{S}$ with $|A|=J$,
	suppose that Assumptions \ref{assum: SRC} holds,  then
	\begin{eqnarray}
		\label{eq:design_matrix_norm}
	0 <	n c_-(J) tr( K + \lambda_K I)  \leq \|  (\bm{X}_A^\top  \bm{X}_A) \otimes (K^\top_{\bm{u}} K_{\bm{u}} )+ \lambda I \|_{op} \leq  n c_+(J) tr( K + \lambda_K I) <  \infty, \\
		\label{eq:design_matrix_inverse_norm}
	0 <	{tr^{-1}( K + \lambda_K I) \over n  c_- (J) }  \leq \| \big[ (\bm{X}_A^\top  \bm{X}_A) \otimes (K^\top_{\bm{u}} K_{\bm{u}} )+ \lambda I \big]^{-1} \|_{op} \leq  {tr^{-1}( K + \lambda_K I) \over n c_+ (J) }   <  \infty
	\end{eqnarray}
\end{lemma}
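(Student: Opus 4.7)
The plan is to combine a Kronecker-product spectral identity with the Sparse Riesz Condition. Set $M := (\bm X_A^\top \bm X_A) \otimes (K_{\bm u}^\top K_{\bm u}) + \lambda I$. Since $M$ is positive semidefinite, $\|M\|_{op} = \lambda_{\max}(M)$ and $\|M^{-1}\|_{op} = 1/\lambda_{\min}(M)$, so it suffices to control the extreme eigenvalues of the Kronecker summand; adding $\lambda I$ simply shifts every eigenvalue by $\lambda \ge 0$, which only strengthens positivity and can be absorbed into the stated constants.

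First I would invoke Assumption~SRC to uniformly control the spectrum of any $J$-restricted Gram matrix: for every $A \subset \mathcal{S}$ with $|A| = J$,
\[
n\, c_-(J) \;\le\; \lambda_{\min}(\bm X_A^\top \bm X_A) \;\le\; \lambda_{\max}(\bm X_A^\top \bm X_A) \;\le\; n\, c_+(J).
\]
Second, I would use the construction of $K_{\bm u}$ from the main text to show that both $\lambda_{\min}(K_{\bm u}^\top K_{\bm u})$ and $\lambda_{\max}(K_{\bm u}^\top K_{\bm u})$ are comparable to $\operatorname{tr}(K + \lambda_K I)$. Combining these via the standard identity $\operatorname{spec}(A \otimes B) = \{\lambda_i(A)\,\mu_j(B)\}_{i,j}$ for positive semidefinite $A, B$ gives
\[
n\, c_-(J)\,\operatorname{tr}(K + \lambda_K I) \;\le\; \lambda_{\max}(M) \;\le\; n\, c_+(J)\,\operatorname{tr}(K + \lambda_K I),
\]
which is \eqref{eq:design_matrix_norm}. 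The inverse bound \eqref{eq:design_matrix_inverse_norm} follows immediately by taking reciprocals of the analogous two-sided bound for $\lambda_{\min}(M)$, obtained by pairing the minimal eigenvalues across both factors.

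The main obstacle I expect is the second ingredient: showing that the spectrum of $K_{\bm u}^\top K_{\bm u}$ is tightly sandwiched by multiples of $\operatorname{tr}(K + \lambda_K I)$, rather than only bounded above (as the crude inequality $\|B\|_{op} \le \operatorname{tr}(B)$ would yield for a PSD matrix $B$). This requires exploiting the specific finite-dimensional structure of $K_{\bm u}$ defined in the main text—likely arising from a truncated Mercer or spline representation of the regularized kernel operator $K + \lambda_K I$—so that the singular values of $K_{\bm u}$ can be related directly to the trace quantity. The Kronecker-product calculus and SRC part of the argument is then routine linear algebra.
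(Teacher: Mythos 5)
Your overall route is the same as the paper's: reduce to the spectrum of the Kronecker summand via $\operatorname{spec}(A\otimes B)=\{\lambda_i(A)\mu_j(B)\}$, use the SRC bounds $n c_-(J)\le\lambda(\bm X_A^\top\bm X_A)\le n c_+(J)$, and convert the kernel factor into the trace quantity $tr(K+\lambda_K I)$ (the paper cites the Kronecker eigenvalue fact from Alizadeh and then bounds the quadratic form $\sup_{\|h\|_{L_2}=1}|\langle[(\bm X_A^\top\bm X_A)\otimes(K_{\bm u}^\top K_{\bm u})+\lambda I]h,h\rangle|\le n c_+(J)\,tr(K+\lambda I)$). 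So there is no divergence in strategy.

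The problem is that your proposal stops exactly at the step that constitutes the content of the lemma. You state that you would ``show that both $\lambda_{\min}(K_{\bm u}^\top K_{\bm u})$ and $\lambda_{\max}(K_{\bm u}^\top K_{\bm u})$ are comparable to $tr(K+\lambda_K I)$,'' acknowledge that the crude PSD inequality $\|B\|_{op}\le tr(B)$ only gives the upper direction, and then defer the rest to unspecified structure of $K_{\bm u}$ (``likely arising from a truncated Mercer or spline representation''). That is not a proof of the lower bound in \eqref{eq:design_matrix_norm}, and it is even more problematic for \eqref{eq:design_matrix_inverse_norm}: your plan there is to pair \emph{minimal} eigenvalues across the two factors, which would require $\lambda_{\min}(K_{\bm u}^\top K_{\bm u})$ to be of the order of $tr(K+\lambda_K I)$ --- generally false for a kernel Gram-type matrix with decaying spectrum, and not something the regularization $\lambda I$ alone rescues in the form you wrote. (The paper itself is terse here, asserting the lower bound and the inverse bound ``follow directly,'' but it at least carries out the quadratic-form computation for the upper bound, which you also leave implicit.) Note also that taking reciprocals of a two-sided bound on $\lambda_{\min}(M)$ puts $c_+(J)$ under the lower bound and $c_-(J)$ under the upper bound, i.e., the constants land in the opposite positions from those displayed in \eqref{eq:design_matrix_inverse_norm}; saying the inverse bound ``follows immediately'' glosses over this mismatch rather than resolving it.
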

\begin{proof}
We first prove \eqref{eq:design_matrix_norm}. According to Lemma 7.2 of \cite{alizadeh1998primal}, the eigenvalue of the Kronecker product of two matrices is the product of the corresponding eigenvalues.
By definition, 
\begin{eqnarray*}
&&	\|  (\bm{X}_A^\top  \bm{X}_A) \otimes (K^\top_{\bm{u}} K_{\bm{u}} )+ \lambda I   \|_{op} 
	=
	\sup_{h: \| h \|_{L_2} =1} | \langle  [(\bm{X}_A^\top  \bm{X}_A) \otimes (K^\top_{\bm{u}} K_{\bm{u}} )+ \lambda I] h, h \rangle    |  \\
	&&\quad \leq  
	n c_+ (J) \sup_{h: \| h \|_{L_2} =1}       \Big|   \int ( K(u,s) + \lambda I) h(s) ds    \Big|  \leq n c_+ (J)  tr(K  + \lambda I).
\end{eqnarray*}

Then it is easy to check that
\begin{eqnarray*}
n c_-(J) tr( K + \lambda I )  \leq \|  (\bm{X}_A^\top  \bm{X}_A) \otimes K^\top_{\bm{u}} K_{\bm{u}} + \lambda  \|_{op} \leq n c_+(J) tr( K + \lambda I)
\end{eqnarray*}
The derivation of \eqref{eq:design_matrix_inverse_norm} follows directly.
%Simple calculations lead to 
%\begin{eqnarray*}
%\|  \tilde X_A^\top \tilde X_A + nm \lambda \Sigma   \|  \leq 
%\|  \tilde X_A^\top \tilde X_A \| + nm \lambda \| \Sigma\|
%\end{eqnarray*}
\end{proof}

\begin{lemma}
	\label{lem: KZ_epsilon concentration}	
	Suppose that Assumption \ref{assump:error of imaging} holds,
	then for any $\nu \in (0,1/2)$, 
		{\footnotesize
\begin{eqnarray}
	\label{eq: concentration of FLR_1}
P \Big( 
h(J, \lambda_K)
\leq\sigma_E  \sqrt{ J \log \left( {p \over \nu } \right)}  \sqrt{ {tr( K (K + \lambda I)^{-2} )  \over n m }+ n^{-1}  } +  J \sqrt{ \lambda_K/2}  
\Big) \geq 1 - \nu,
	\end{eqnarray}  }
 where $h(J, \lambda_K) =\max_{|A|= J}  \| (T_{nm}^{A} + \lambda_K I)^{-1} ( (nm)^{-1}\sum_i \sum_j E_i(u_j)  K_{u_{ij}} X_{i, A} + \lambda_K \bm{C}^*_{A} )\|_2$ with $T_{nm}^{A}$ being an opeartor such that
$T_{nm}^{A}(f)= (nm)^{-1} \sum_i \sum_j \langle X_{i,A}^\top f, K_{u_{ij}} \rangle_{\mathcal{K}}    K_{u_{ij}} X_{i, A}^\top $.
\end{lemma}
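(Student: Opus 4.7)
The plan is to split $h(J,\lambda_K)$ into a stochastic piece driven by the errors $\{E_i(u_j)\}$ and a deterministic bias piece driven by $\bm{C}^*$, bound each separately via RKHS/spectral arguments, and then take a union bound over all $\binom{p}{J}$ subsets of size $J$. Concretely, for each fixed $A$ with $|A|=J$, the triangle inequality gives
\[
\|(T_{nm}^{A}+\lambda_K I)^{-1}(W_A+\lambda_K\bm{C}^*_A)\|_2
\le
\|(T_{nm}^{A}+\lambda_K I)^{-1}W_A\|_2
+
\|\lambda_K(T_{nm}^{A}+\lambda_K I)^{-1}\bm{C}^*_A\|_2,
\]
where $W_A=(nm)^{-1}\sum_i\sum_j E_i(u_j)K_{u_{ij}}X_{i,A}$, so it suffices to treat the two summands separately.

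For the bias term I would diagonalise the self-adjoint operator $T_{nm}^{A}$ and use the elementary functional-calculus inequality $\sup_{\mu\ge 0}\lambda_K^{2}\mu/(\mu+\lambda_K)^{2}\le \lambda_K/2$, combined with the RKHS source condition $\bm{C}^{*}\in\text{Range}(T^{1/2})$ built into Assumption \ref{assump:error of imaging}. This produces a per-coordinate contribution of order $\sqrt{\lambda_K/2}$, and summing over the $J$ coordinates of $\bm{C}^{*}_A$ yields precisely the $J\sqrt{\lambda_K/2}$ piece of the stated bound.

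For the stochastic term I would condition on the design and exploit that the $E_i(u_j)$ are centred and sub-Gaussian with proxy $\sigma_E$, so that $W_A$ is a Hilbert-space-valued sum of independent increments. A vector Bernstein / Pinelis tail bound then produces a deviation whose variance proxy matches, after replacing $T_{nm}^{A}$ by the population operator $T^{A}$,
\[
\E\|(T^{A}+\lambda_K I)^{-1}K_{u}\|^{2}
\;=\;tr\bigl(K(K+\lambda_K I)^{-2}\bigr),
\]
divided by the effective sample size $nm$, with a residual $n^{-1}$ term that absorbs the linearisation error. A union bound over the $\binom{p}{J}\le (ep/J)^{J}$ index sets then supplies the $\sigma_E\sqrt{J\log(p/\nu)}$ prefactor with overall probability at least $1-\nu$.

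The main obstacle is the coupling between $T_{nm}^{A}$ sitting inside the inverse and $W_A$ in the numerator: a direct application of a scalar concentration inequality is blocked because both quantities depend on the same random sample. I plan to decouple them by swapping $(T_{nm}^{A}+\lambda_K I)^{-1}$ for $(T^{A}+\lambda_K I)^{-1}$ using the resolvent identity $(A+\lambda_K I)^{-1}-(B+\lambda_K I)^{-1}=(A+\lambda_K I)^{-1}(B-A)(B+\lambda_K I)^{-1}$, controlling the perturbation uniformly in $A$ through Lemma \ref{lem:eigen value}, then applying the vector Bernstein bound to genuinely independent increments, and finally transferring back via a second triangle inequality; the residual $n^{-1}$ appearing inside the variance proxy is exactly the cost of this linearisation step.
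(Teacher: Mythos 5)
Your decomposition is the same as the paper's (triangle inequality splitting $h(J,\lambda_K)$ into the noise term and the $\lambda_K$-bias term, with the spectral inequality $\lambda_K\kappa/(\lambda_K+\kappa)^2\le 1/2$ giving the $J\sqrt{\lambda_K/2}$ piece), but the concentration mechanics differ genuinely. The paper never touches the random resolvent or a union bound over subsets: it first bounds the group norm coordinate-wise, $\|\cdot\|_2\le\sqrt{J}\max_{j\le p}\|\cdot_j\|$, which both produces the $\sqrt{J}$ factor and reduces the problem to $p$ single-coordinate statistics $G_j$ whose relevant operator is the deterministic $(K+\lambda_K I)^{-1}$ (using $X_j^\top X_j=1$); it then computes $\Var(G_j)=\sigma_E^2\bigl(tr(K^2(K+\lambda_K I)^{-2})/(nm)+n^{-1}\bigr)$ (citing an external calculation) and applies a scalar sub-Gaussian tail with a union bound over the $p$ coordinates only. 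You instead keep the group structure, use a Hilbert-space vector Bernstein bound, union over $\binom{p}{J}$ subsets, and decouple the empirical resolvent $(T_{nm}^A+\lambda_K I)^{-1}$ from the numerator via the resolvent identity controlled through Lemma \ref{lem:eigen value}; this is heavier machinery but yields the same $\sigma_E\sqrt{J\log(p/\nu)}$ order, and it is arguably more honest about the dependence between the inverse and the noise, which the paper sidesteps by its coordinate-wise reduction. Two details to tighten if you execute your route: in the bias step you diagonalise $T_{nm}^A$ but invoke a source condition stated relative to $K$ (the paper expands $\bm{C}^*_\ell$ in the eigenbasis of $K$ and bounds $\|\lambda_K(K+\lambda_K I)^{-1}\bm{C}^*_\ell\|_2$ directly), so you either need the same reduction to the deterministic kernel resolvent or an explicit perturbation step there as well; and the $n^{-1}$ term in the variance proxy comes in the paper from the within-subject covariance structure of the errors in $\Var(G_j)$, not from a linearisation cost, so your accounting of that term should be rewritten accordingly.
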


\begin{proof}
Recall that $X_j=(X_{1j}, \dots, X_{nj})^\top$ , denote
\begin{eqnarray*}
h(J) &=&\max_{ |A| \leq T }  \|  ( (X_A^\top X_A) \otimes K_{\bm{u}}^\top K_{\bm{u}}  + \lambda I )^{-1} {1 \over n m} \sum_{i=1}^n \sum_{j=1}^m K_{u_j} X_{ij} E_i(u_j) \|_{L_2} \\
& \leq &
\sqrt{|A|} \max_{j \in \{1, \dots, p\}}  \|  (X_j^ \top X_j ) \otimes K_{\bm{u}}^\top K_{\bm{u}}  + \lambda I )^{-1} {1 \over n m} \sum_{i=1}^n \sum_{j=1}^m K_{u_j} X_{ij} E_i(u_j) \|_{L_2} \\
&:=&
\sqrt{|A|} \max_{j \in \{1, \dots, p\}}  {1 \over nm} G_j. 
\end{eqnarray*}

Due to the fact $X_j^ \top X_j=1$, 
direct calculations lead to $\Mean( G_j )=0$ and 
\begin{eqnarray}
\label{eq:C_eset_variance}
	\Var( G_j ) &=& 
	\Mean \Big[ 
	( K + \lambda_K I )^{-1} {1 \over n m} \sum_{i=1}^n \sum_{j=1}^m K_{u_j} X_{ij} E_i(u_j) 
	\Big]^2 
	=
\sigma_E^2 \Big( {tr( K^2 (K + \lambda_K I)^{-2} )  \over n m }+ n^{-1}  \Big)
\end{eqnarray}
following a similar spirit to (S6.8) in \cite{li2025semi}.
As pointed out in \cite{zhang2005learning}, $tr( K^2 (K + \lambda_K I)^{-2} ) \leq tr( K (K + \lambda_K I)^{-1} )$ and the right right hand side term is simpler.
	Since $E_i(u_j)$ is sub-Gaussian, and by the union bound, for any $\nu \in (0,1/2)$, we have
$$
\prob( \max_j {1 \over nm } G_j > t ) \leq 2 p \exp( - { n^2 m^2 t^2 \over  2 \Var(G_j) } ) \leq \nu,
$$
where we take $t$ as
$$
t = \sqrt{ 4 \log(2 p /\nu) \Big[ \sigma_E^2 \Big( (nm)^{-1} {tr( K (K + \lambda_K I)^{-1} )   }+ n^{-1}  \Big)  \Big]  }.
$$
%and $\| \lambda_K (K + \lambda_K I)^{-1} \bm{C}^*_{\ell} \|_2 = O(\sqrt{\lambda_K})$ can be derived similar to equation (7.8) of  \cite{zhao2016partially},
For the bias term, denote $\{ \kappa_k\}$ and $\{\phi_k(t)\}$ as the eigenvalues and eigenfunctions of the kernel $K$. For any $\ell$, $\bm{C}^*_{\ell}$ admits the expansion $\bm{C}^*_{\ell}(t) = \sum_{k=1}^\infty \bm{C}^*_{\ell k} \phi_k(t)$ and 
$\sum_{k=1}^\infty {\bm{C}^*_{\ell k}}^2/ \kappa_k <\infty$ as $\bm{C}^*_{\ell}$ resides in the RKHS.
It is easy to deduce that
\begin{eqnarray}
\label{eq:C_eset_bias}
   \| \lambda_K (K + \lambda_K I)^{-1} \bm{C}^*_{\ell} \|^2_2
=
\lambda_K  \sum_{k=1}^\infty { \lambda_K  \kappa_k \over (\lambda_K + \kappa_k)^2}  { {\bm{C}^*_{\ell k} }^2 \over \kappa_k }
\leq 
{\lambda_K  \over 2} \sum_{k=1}^\infty {{\bm{C}^*_{\ell k} }^2 \over \kappa_k }
=
{\lambda_K  \over 2}  \|\bm{C}^*_{\ell}\|^2_{\mathcal{K}}.
\end{eqnarray}

Recall that $|A|=J$, combining \eqref{eq:C_eset_variance} and \eqref{eq:C_eset_bias}
we can derive that 
\begin{eqnarray*}
h(J, \lambda_K) &=&\max_{|A|= J}  \| (T_{nm}^{A} + \lambda_K I)^{-1} ( (nm)^{-1}\sum_i \sum_j E_i(u_j)  K_{u_{ij}} X_{i, A} + \lambda_K \bm{C}^*_{A} )\|_2  \\
& \leq &
h(J) + J \max_{\ell} \| \lambda_K (K + \lambda_K I)^{-1} \bm{C}^*_{\ell} \|_2 \\
& \leq &
\sqrt{J \log ({p /\nu}) \sigma_E^2 \Big( (nm)^{-1} {tr( K (K + \lambda_K I)^{-1} )   } + n^{-1}  \Big)} + J \sqrt{ \lambda_K/2}. 
% &\leq&
% \sqrt{J \log ({p /\nu}) \sigma_E^2 \Big( (nm)^{-1} {tr( K (K + \lambda_K I)^{-2} )   } + n^{-1} +\lambda_K \Big)}
\end{eqnarray*}
This completes the proof.
\end{proof}

\begin{lemma}
\label{lem:minimizer}
	Let $C^\diamond$ be a coordinate-wise minimizer of the loss function with group L$_0$ penalty. Then, $C^\diamond$ satisfies that
\begin{eqnarray}
\label{eq:minimizer}
   			 d_\ell^\diamond = P^{-1}_{\ell}  \check{\bm{X}}^\top ( Z - \check{\bm{X}}  C^\diamond  )  - nm \lambda_{\lambda_K}  P^{-1}_{\ell} \Sigma  C^\diamond, \text{ and } \ 	C^\diamond = H_\lambda ( C^\diamond  + d^\diamond ),
\end{eqnarray}
	where $H_\lambda ( \cdot )$ is the hard thresholding operator that
	\begin{align}
		( H_\lambda (C))_ \ell = 
		\left \{ 
		\begin{array}{c}
			0, \quad if~  C_\ell^\top (P_\ell /nm ) C_\ell < 2 \lambda,  \\
			C_\ell, \quad if~ C_\ell^\top (P_\ell /nm ) C_\ell  \geq 2 \lambda,
		\end{array}
		\right.
	\end{align}
	where $P_\ell = \check X_\ell^\top \check X_\ell + nm \lambda_K \Sigma $.
	Conversely, if $ C^\diamond$ and $d^\diamond$ satisfy \eqref{eq:minimizer}, then $C^\diamond$ is a local minimizer of \eqref{eq:functional response loss}.
\end{lemma}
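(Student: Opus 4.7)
The plan is to read the statement as a block-coordinate KKT characterization for the penalized quadratic loss
\[
L(C)=\tfrac{1}{2}\|Z-\check{\bm X}C\|^2+\tfrac{nm\lambda_K}{2}C^\top\Sigma C+nm\lambda\sum_\ell I\{C_\ell\neq 0\},
\]
restricted one block at a time. Fixing $\{C_j^\diamond\}_{j\neq\ell}$, the block-$\ell$ subproblem consists of a strictly convex quadratic smooth part with Hessian $P_\ell=\check{\bm X}_\ell^\top\check{\bm X}_\ell+nm\lambda_K\Sigma$ plus the fixed cost $nm\lambda$ whenever $C_\ell\neq 0$. Block-coordinate minimization therefore reduces to comparing just two candidates: $C_\ell=0$, or the conditional smooth minimum $C_\ell^\star$ defined by the normal equation
\[
P_\ell C_\ell^\star=\check{\bm X}_\ell^\top\bigl(Z-\textstyle\sum_{j\neq\ell}\check{\bm X}_j C_j^\diamond\bigr).
\]

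For the forward direction I would first establish the algebraic identity $C_\ell^\star=C_\ell^\diamond+d_\ell^\diamond$. Interpreting the expressions in the lemma blockwise (so that $\check{\bm X}^\top(\cdot)$ and $\Sigma C^\diamond$ act as $\check{\bm X}_\ell^\top(\cdot)$ and $\Sigma C_\ell^\diamond$), one adds and subtracts $P_\ell C_\ell^\diamond=\check{\bm X}_\ell^\top\check{\bm X}_\ell C_\ell^\diamond+nm\lambda_K\Sigma C_\ell^\diamond$ inside the bracket, cancels the $\Sigma$ contributions, and recognises the leftover $\check{\bm X}_\ell^\top\bigl(Z-\sum_{j\neq\ell}\check{\bm X}_j C_j^\diamond\bigr)$ as the right-hand side of the normal equation, giving the claimed identity. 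Next I would compare the two candidate losses directly: the drop in the smooth part from moving $C_\ell$ off zero to $C_\ell^\star$ equals $\tfrac12 (C_\ell^\star)^\top P_\ell C_\ell^\star$, so the nonzero branch is preferred exactly when this exceeds the fixed penalty cost $nm\lambda$. Rearranging gives the threshold $C_\ell^\top (P_\ell/nm) C_\ell$ versus $2\lambda$ stated in the lemma, and selecting $C_\ell^\diamond=C_\ell^\star$ above threshold and $C_\ell^\diamond=0$ below is exactly the rule $C^\diamond=H_\lambda(C^\diamond+d^\diamond)$. The converse simply reverses this chain: given the fixed-point equations, for each $\ell$ the vector $C_\ell^\diamond+d_\ell^\diamond$ is the conditional smooth minimum, and the hard-threshold rule enforces that $C_\ell^\diamond$ is the minimizer of $L$ in block $\ell$ with the other blocks held fixed, so $C^\diamond$ is a coordinate-wise (hence local) minimizer.

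The main obstacle is notational rather than conceptual: the symbols $\check{\bm X}^\top(Z-\check{\bm X} C^\diamond)$ and $\Sigma C^\diamond$ in the definition of $d_\ell^\diamond$ must be read as their $\ell$-th block components for the telescoping cancellation that produces $C_\ell^\star=C_\ell^\diamond+d_\ell^\diamond$ to go through; if $\Sigma$ carries nonzero cross-block entries one must check that those cross terms are absorbed consistently with the block-wise normal equation. Once this bookkeeping is pinned down, strict convexity of the smooth part in each block (from positive definiteness of $P_\ell$) delivers both implications immediately from the elementary two-candidate comparison, with no further analytic input needed.
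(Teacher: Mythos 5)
Your forward direction is essentially the paper's own argument: fix the other blocks, complete the square with respect to $P_\ell=\check X_\ell^\top\check X_\ell+nm\lambda_K\Sigma$, identify $C_\ell^\diamond+d_\ell^\diamond$ as the conditional ridge minimizer of the smooth part, and compare its objective value with that of $C_\ell=0$, which yields exactly the $2\lambda$ threshold and the fixed-point identity $C^\diamond=H_\lambda(C^\diamond+d^\diamond)$; the blockwise reading of $\check{\bm X}^\top(Z-\check{\bm X}C^\diamond)$ and $\Sigma C^\diamond$ that you flag is indeed how the paper interprets $d_\ell^\diamond$, so that part is fine.

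The gap is in the converse, at the step ``so $C^\diamond$ is a coordinate-wise (hence local) minimizer.'' Coordinate-wise minimality does not imply local minimality in general: since $\check{\bm X}^\top\check{\bm X}$ couples the blocks, per-block stationarity only controls perturbations confined to one block, not joint perturbations that move several blocks simultaneously or that activate blocks currently equal to zero (for quadratics with cross terms a coordinate-wise minimizer can even be a saddle point). The implication is true here, but proving it is precisely the content of the paper's converse, which your one-line assertion skips. The needed argument is: (i) on the active set $A^\diamond=\{\ell:(C_\ell^\diamond+d_\ell^\diamond)^\top(P_\ell/nm)(C_\ell^\diamond+d_\ell^\diamond)\ge 2\lambda\}$ the fixed-point equations force $d^\diamond_{A^\diamond}=0$, i.e.\ \emph{joint} stationarity of the smooth ridge objective restricted to that support, so by convexity of the smooth part $C^\diamond_{A^\diamond}$ minimizes it over the whole subspace $\{h:h_{(A^\diamond)^c}=0\}$, and for $\|h\|$ small the $L_0$ penalty is unchanged there; (ii) for any $h$ with $h_{(A^\diamond)^c}\neq 0$ the penalty jumps by at least $\lambda$, while the smooth part can decrease by at most $|h^\top d^\diamond|$ up to higher-order terms, giving $L_\lambda(C^\diamond+h)-L_\lambda(C^\diamond)\ge \lambda-|h^\top d^\diamond|>0$ for $h$ sufficiently small. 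Without this two-case perturbation analysis (exploiting joint convexity of the smooth part together with the discrete jump of the $L_0$ penalty), the final sentence of your converse remains an unproved claim rather than a proof of local minimality.
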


\begin{proof}
	Suppose $C^\diamond$ is a coordinate-wise minimizer of $L_\lambda$ in \eqref{eq:functional response loss}. 
%     Recall the operator $K_{\bm{u}}f= ( \langle K(u_1), f \rangle_{\mathcal{K}}, \dots,  \langle K(u_m), f \rangle_{\mathcal{K}} )^\top  = ( f(u_1), \dots, f(u_m))^\top \in \mathbb{R}^m$. Denote $\bm{X} \otimes K_{\bm{u}}= ( X_1 \otimes K(\bm{u}), \dots, X_p \otimes  K_{\bm{u}} )   $ with $X_\ell = (X_{1\ell}, \dots, X_{n \ell})^\top$.
% 	Then
% 	\begin{eqnarray*}
% 	C^\diamond_\ell & \in & \argmin_{t \in \mathbb{R}^n } {1 \over 2 n m } \|  \bm{Z} -  \bm{X} \otimes  K_{\bm{u}} C^\diamond -  X_\ell \otimes K(\bm{u})(t - C^\diamond_\ell) \|_2^2 + {\lambda_K \over 2 } \|t\|^2_{ \mathcal{K} }+ \lambda \|  C_\ell \|_0.
% 	\end{eqnarray*}
% 	If $C^\diamond_\ell \neq 0$, according to the   derivative with respect to $C_\ell$, we have
% 	\begin{eqnarray*}
% 	-{1 \over nm}  \big( X_\ell^\top \otimes  K^\top_{\bm{u}}\big) [  \bm{Z} -  \bm{X} \otimes  K_{\bm{u}} C^\diamond -  \big( X_\ell \otimes K_{\bm{u}}\big) (t - C^\diamond_\ell)  ] + \lambda_K t =0.
% 	\end{eqnarray*}
% Notice that  $\big( X_\ell^\top \otimes  K^\top_{\bm{u}}\big)  \big( X_\ell \otimes K_{\bm{u}}\big) =   \big( X_\ell^\top X_\ell \big) \otimes \big(  K^\top_{\bm{u}} K_{\bm{u}}  \big)$, and $X_\ell$ is $\sqrt{n}$-normalized columns. Then, the solution $t$ satisfies 
% 	\begin{eqnarray*}
% (  m^{-1} T_m +  \lambda_K )	t &=& m^{-1} T_m C^\diamond + {1 \over nm}  \big( X_\ell^\top \otimes  K^\top_{\bm{u}}\big) [  \bm{Z} -  \bm{X} \otimes  K_{\bm{u}} C^\diamond ] \\
% &=&
% (m^{-1} T_m +\lambda_K ) C^\diamond +  {1 \over nm}  \big( X_\ell^\top \otimes  K^\top_{\bm{u}}\big) [  \bm{Z} -  \bm{X} \otimes  K_{\bm{u}} C^\diamond ]  - \lambda_K  C^\diamond,
% 	\end{eqnarray*}
% 	where the linear operator $T_m: \mathcal{H}(K) \rightarrow \mathcal{H}(K)$ satisfies
% 	$T_m f= K^\top_{\bm{u}}f (\bm{u}) =  K^\top_{\bm{u}} \langle  K_{\bm{u}}, f \rangle_{\mathcal{K}}$.
%The loss function can be rewritten as. 
Denote $\check X_\ell$ to be the the $\ell$-th column of $\check{\bm{X}}$. Then 
	\begin{eqnarray*}
		C^\diamond_\ell & \in & \arg \min_{t \in \mathbb{R}^m } {1 \over 2 n m } \|Z -  \check{\bm{X}} C^\diamond - \check X_l(t + C^\diamond_l) \|^2 + {\lambda_K \over 2 } t^\top \Sigma t + \lambda \|  t^\top K(u) \|_0  \\
		\Rightarrow  
		C^\diamond_\ell & \in & \arg \min_{t \in \mathbb{R}^m } {1 \over 2 n m }  \Big[ 
		t^\top ( \check X_\ell^\top \check X_\ell  + n m \lambda_K \Sigma ) t - 2 t^\top \check X_\ell^\top  ( \check X_\ell  C^\diamond  _\ell +  Z - \check X C^\diamond   )
		\Big] + \lambda \|  t^\top K(u) \|_0 \\
		\Rightarrow  
		C^\diamond_\ell & \in & {1 \over 2 } \Big[  t -  P_\ell ^{-1}  \check X_\ell^\top  ( \check X_\ell  C^\diamond  _\ell +  Z - \check X C^\diamond   ) \Big]^\top {P_\ell \over nm}
		\Big[  t -  P_\ell ^{-1}  \check X_\ell^\top  ( \check X_\ell  C^\diamond  _\ell +  Z - \check X C^\diamond   ) \Big] + \lambda \|  C_\ell^\top K(t) \|_0.
	\end{eqnarray*}
	where $P_\ell =( \check X_\ell^\top \check X_\ell  + n m \lambda_K \Sigma )  $. By the definition of the hard thresholding operator, we have
	$$
	C^\diamond_\ell  =  P_\ell ^{-1}  \check X_\ell^\top  ( \check X_\ell  C^\diamond  _\ell +  Z - \check X C^\diamond   )= C^\diamond  _\ell - nm \lambda_K  P_\ell ^{-1}   \Sigma  C^\diamond  _\ell 
	+ P_\ell ^{-1}  \check X_\ell^\top (Z - \check X C^\diamond   ) = H_\lambda( C^\diamond_\ell  + d^\diamond_\ell ),
	$$
	which shows \eqref{eq:minimizer} holds.
	
	Conversely, suppose that \eqref{eq:minimizer} holds. Let 
	$$
	A^\diamond = \{\ell \in S:  ( C^\diamond_\ell  + d_\ell  )^\top (P_\lambda  / n m ) ( C^\diamond_\ell  + d_\ell  ) \geq 2 \lambda  \}.
	$$
	By \eqref{eq:minimizer} and the definition of $H_\lambda (\cdot )$, we deduce that for $\ell \in A^\diamond, {C^\diamond_\ell }^\top (P_\lambda  / n m )  C^\diamond_\ell  \geq 2 \lambda $. Furthermore, $d^\diamond_{A^\diamond}=0$, which is equivalent to 
	$$
	C^\diamond_{A^\diamond} \in \argmin {1 \over 2 nm} 
	\sum_{i=1}^n \sum_{j=1}^m (  \bm{Z}_i (t_{j} ) -  \sum_{l \in {A^\diamond}} X_{i \ell }  K(t_{ij})^\top  C_\ell )^2
	+  {\lambda_K \over 2 }  \sum_{l \in {A^\diamond} } C_\ell^\top \Sigma C_l. 
	$$
	Next we show that $L_\lambda ( C^\diamond + h ) \geq L_\lambda ( C^\diamond) $ if $h$ is small enough .
	We consider two cases. If $h_{(A^\diamond)^c} \neq 0$, notice that $(C_\ell+h)^\top \Sigma (C_\ell + h) > C_\ell^\top \Sigma C_l$, then
	\begin{eqnarray*}
		L_\lambda ( C^\diamond + h ) -  L_\lambda ( C^\diamond)  &\geq& {1 \over 2 nm } \|  \bm{Z} - \check{\bm{X}} C^\diamond -  \check{\bm{X}}  h \|_2^2 -  {1 \over 2 nm } \|  \bm{Z} - \check{\bm{X}}  C^\diamond  \|_2^2
		+ \lambda + {\lambda_K  \over 2} \sum_\ell C_\ell \Sigma h  \\
		& \geq &
		\lambda - |  h^\top d^\diamond  |
	\end{eqnarray*}
	is positive for sufficiently small $h$. If $h_{(A^\diamond)^c} = 0$, by the minimizing property of $C^\diamond_{A^\diamond} $, we deduce that 
	$L_\lambda ( C^\diamond + h ) \geq L_\lambda ( C^\diamond) $. This complete the proof.
\end{proof}

\begin{lemma}
	\label{lem:iteration error}
	Suppose the conditions in Lemma \ref{eq:design_matrix_norm} holds,
	\begin{eqnarray}
	\label{eq: triangle_k}
\| \bigtriangleup^k \|_2  & \leq  & { \theta_{J,J}  tr(K) \over c_-(J) tr(K + \lambda_K I) } \| \bm{C}^*_{A_2^k}  \|_2 + 
h(J, \lambda_K ), \\
\label{eq: beta_iteration}
\|  \bm{C}^{k+1} - \bm{C}^*  \|_2 & \leq &
\Big( 1+ { \theta_{J,J}  tr(K) \over c_-(J) tr(K + \lambda_K I) } \Big)  D_2(A^k)+ 
h(J, \lambda_K ) ,
	\end{eqnarray}
 where $h(J, \lambda_K ) =\max_{|A|= J}  \| (T_{nm}^{A} + \lambda_K I)^{-1} ( (nm)^{-1}\sum_i \sum_j E_i(u_j)  K_{u_{ij}} X_{i, A} + \lambda_K \bm{C}^*_{A} )\|_2 $.
	\end{lemma}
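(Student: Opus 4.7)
The plan is to derive \eqref{eq: triangle_k} from the restricted normal equation at iteration $k$, decompose the resulting residual into a cross-support piece driven by $\bm{C}^*|_{A_2^k}$ and a noise-plus-regularization piece captured by $h(J,\lambda_K)$, and then combine these using a Pythagorean identity to obtain \eqref{eq: beta_iteration}.

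First I would write down the first-order condition for the kernel-ridge functional regression restricted to the active set $A_R^k$,
\[
(T_{nm}^{A_R^k}+\lambda_K I)\,\bm{C}^{k+1}|_{A_R^k}
 = (nm)^{-1}\sum_{i,j} Z_i(u_j)\,K_{u_{ij}}\,X_{i,A_R^k}.
\]
Substituting the model $Z_i(u_j)=X_i^\top \bm{C}^*(u_j)+E_i(u_j)$ and using that $\bm{C}^*$ is supported on $A_R^*\subseteq A_R^k\cup A_2^k$, the right-hand side splits as $T_{nm}^{A_R^k}\bm{C}^*|_{A_R^k} + T_{nm}^{A_R^k,A_2^k}\bm{C}^*|_{A_2^k} + (nm)^{-1}\sum_{i,j} E_i(u_j) K_{u_{ij}} X_{i,A_R^k}$, where $T_{nm}^{A_R^k,A_2^k}$ denotes the cross operator. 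Subtracting $(T_{nm}^{A_R^k}+\lambda_K I)\bm{C}^*|_{A_R^k}$ from both sides gives
\[
\bigtriangleup^k
= (T_{nm}^{A_R^k}+\lambda_K I)^{-1}\Bigl[\,T_{nm}^{A_R^k,A_2^k}\bm{C}^*|_{A_2^k}
 - \lambda_K \bm{C}^*|_{A_R^k}
 + (nm)^{-1}\!\!\sum_{i,j} E_i(u_j) K_{u_{ij}} X_{i,A_R^k}\Bigr].
\]

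Next I would apply the triangle inequality. The last two summands, once passed through $(T_{nm}^{A_R^k}+\lambda_K I)^{-1}$, are exactly of the form appearing inside the maximum defining $h(J,\lambda_K)$ (with $A=A_R^k$) and so contribute at most $h(J,\lambda_K)$. For the cross term, I would use Lemma \ref{lem:eigen value} to bound $\|(T_{nm}^{A_R^k}+\lambda_K I)^{-1}\|_{op}\leq [n c_-(J)\,tr(K+\lambda_K I)]^{-1}$, and invoke the SRC/mutual-coherence hypothesis to bound $\|T_{nm}^{A_R^k,A_2^k}\|_{op}$ by $n\,\theta_{J,J}\,tr(K)$. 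Multiplying the two produces the coefficient $\theta_{J,J}\,tr(K)/[c_-(J)\,tr(K+\lambda_K I)]$ in front of $\|\bm{C}^*|_{A_2^k}\|_2$, yielding \eqref{eq: triangle_k}.

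For \eqref{eq: beta_iteration}, I would use that $\bm{C}^{k+1}$ is supported on $A_R^k$ and $\bm{C}^*$ is supported on $A_R^*$ to decompose
\[
\|\bm{C}^{k+1}-\bm{C}^*\|_2^2
= \|\bigtriangleup^k\|_2^2 + \|\bm{C}^*|_{A_R^*\setminus A_R^k}\|_2^2
= \|\bigtriangleup^k\|_2^2 + D_2(A^k)^2,
\]
then apply $\sqrt{a^2+b^2}\leq a+b$ and substitute \eqref{eq: triangle_k}. The main obstacle will be justifying the cross-operator bound $\|T_{nm}^{A_R^k,A_2^k}\|_{op}\leq n\,\theta_{J,J}\,tr(K)$: this requires viewing the cross operator as the tensor product of the off-diagonal $(A_R^k,A_2^k)$ block of $\bm{X}^\top \bm{X}$ with the empirical kernel operator $K_{\bm{u}}^\top K_{\bm{u}}$, and then invoking the sparse mutual-coherence assumption on the off-diagonal sub-matrix. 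Once this bound is in place, the remainder is direct linear algebra built from Lemma \ref{lem:eigen value} and Lemma \ref{lem: KZ_epsilon concentration}.
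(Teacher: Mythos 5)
Your proposal is correct and follows essentially the same route as the paper: the restricted normal equation for the kernel-ridge fit on $A_R^k$, splitting the residual into the cross-support term bounded via Lemma \ref{lem:eigen value} and $\theta_{J,J}$ (giving the factor $\theta_{J,J}\,tr(K)/[c_-(J)\,tr(K+\lambda_K I)]$) and the noise-plus-$\lambda_K$-bias term absorbed into $h(J,\lambda_K)$, then combining with $D_2(A^k)$. The only cosmetic difference is that you obtain the second bound from the exact block-orthogonal split $\|\bm{C}^{k+1}-\bm{C}^*\|_2^2=\|\bigtriangleup^k\|_2^2+D_2(A^k)^2$ followed by $\sqrt{a^2+b^2}\le a+b$, whereas the paper simply applies the triangle inequality; the two steps are equivalent here.
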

\begin{proof}
Denote $T_{nm}^{A_R^k}(f)= (nm)^{-1} \sum_i \sum_j \langle X_{i, A_R^k}^\top f, K_{u_{ij}} \rangle_{\mathcal{K}}    K_{u_{ij}} X_{i, A_R^k}^\top $ and 
$ h_{nm}^{A_R^k} = (nm)^{-1} \sum_i \sum_j Z_{ij}   K_{u_{ij}} X_{i, A_R^k}$, where 
$X_{i, A_R^k}=( X_{i, \ell}, \ell \in A_R^k )^\top$.

Given the active set \( A_R^k \) at the \(k\)th iteration, the estimate at iteration \(k+1\) is updated by minimizing the following loss function, leveraging the reproducing property
$\langle C_{A_R^k}, K_{u_{ij}} \rangle_{\mathcal{K}} = C_{A_R^k}(u_{ij}),$
$$
{ 1 \over nm }\sum_i\sum_j  ( Z_{ij} - X_{i, A_R^k}^\top  \langle C_{A_R^k}, K_{u_{ij}} \rangle_{\mathcal{K}}   )^2 + {\lambda_K \over 2 } \sum_{\ell \in A_R^k} \| C_\ell \|^2_\mathcal{K},
$$
By taking the Fréchet derivative of the loss function with respect to \( C_{A_R^k} \), we obtain the estimate at iteration \(k+1\),
\begin{eqnarray*}
\bm{C}^{k+1}_{A_R^k} 
&=& 
(T_{nm}^{A_R^k} + \lambda_K I)^{-1} h_{nm}^{A_R^k}
=
(T_{nm}^{A_R^k} + \lambda_K I)^{-1}  (nm)^{-1} \sum_i \sum_j ( X_{i, A_1^k}\bm{C}^*_{A_1^k} +  X_{i, A_2^k}\bm{C}^*_{A_2^k}  + E_i(u_j) )  K_{u_{ij}} X_{i, A_R^k}
  \\
\bm{C}^*_{A_R^k} &=& (T_{nm}^{A_R^k} + \lambda_K I)^{-1} (T_{nm}^{A_R^k} + \lambda_K I)\bm{C}^*_{A_R^k} 
=
(T_{nm}^{A_R^k} + \lambda_K I)^{-1} (T_{nm}^{A_R^k} + \lambda_K I)\bm{C}^*_{A_1^k} \\
&=&
(T_{nm}^{A_R^k} + \lambda_K I)^{-1}  (nm)^{-1} \sum_i \sum_j \langle X_{i, A_1^k}^\top \bm{C}^*_{A_1^k}, K_{u_{ij}} \rangle_{\mathcal{K}}    K_{u_{ij}} X_{i, A_1^k}^\top
+
 \lambda_K (T_{nm}^{A_R^k} + \lambda_K I)^{-1} \bm{C}^*_{A_1^k}.
\end{eqnarray*}
According to Lemma \ref{lem:eigen value}, and the fact that $\|Tf\|_2 \leq \|T\|_{op} \|f\|_2$ for a linear operator $T$ and $f \in L_2 (\mathcal{T})$, it follows
{\footnotesize
\begin{eqnarray*}
\| \bigtriangleup^k \|_2
&=&
\| \bm{C}^{k+1}_{A_R^k}  - \bm{C}^*_{A_R^k}  \|_2
= 
\left \| (T_{nm}^{A_R^k} + \lambda_K I)^{-1}
\Big(
(nm)^{-1} \sum_i \sum_j ( X_{i, A_2^k}\bm{C}^*_{A_2^k}  + E_i(u_j) )  K_{u_{ij}} X_{i, A_R^k} 
+ \lambda_K \bm{C}^*_{A_1^k}
 \Big)
 \right \|_2 \\
 & \leq &
 { \theta_{J,J}  tr(K) \over c_-(J) tr(K + \lambda_K I) } \| \bm{C}^*_{A_2^k}  \|_2 + 
{ h(J,\lambda_K)  },
\end{eqnarray*} }
where $h(J, \lambda_K ) =\max_{|A|= J}  \| (T_{nm}^{A} + \lambda_K I)^{-1} ( (nm)^{-1}\sum_i \sum_j E_i(u_j)  K_{u_{ij}} X_{i, A} + \lambda_K \bm{C}^*_{A} )\|_2 $.

For the second inequality, according to the definition of $A_2^k$ and by the triangle inequality, we have
\begin{eqnarray*}
\|  \bm{C}^{k+1} - \bm{C}^*  \|_2
& \leq &
\|  \bm{C}^{k+1}_{A_R^k} - \bm{C}^*_{A_R^k}  \|_2 + \|  \bm{C}^*_{A^o \backslash A_R^k}  \|_2
=
\| \bigtriangleup^k \|_2 + \|  \bm{C}^*_{A_2^k}  \|_2 \\
&=&
\Big( 1+ { \theta_{J,J}  tr(K) \over c_-(J) tr(K + \lambda_K I) } \Big)  D_2(A_R^k)+ 
h(J,\lambda_K)   .
\end{eqnarray*}
This completes the proof.
\end{proof}

\begin{lemma}
If the conditions in Lemma \ref{eq:design_matrix_norm} holds, then
\begin{eqnarray}
\label{eq: D2 decomposition}
D_2 ( A_R^{k+1}  ) 
& \leq& 
\|   \bigtriangleup^k_{A_{11}^k  }   \|_2  + \|   \bm{C}^{k+1}_{A_{11}^k}   \|_2 \\
&&+
{1 \over c_-(J) } \Big(
 ( 2 \theta_{J,J} + \lambda) \|    \bigtriangleup^k_{A_R^k  } \|_2 +  \theta_{J,J} D_2 ( A_R^{k}  ) \Big) + {m \over  c_-(J)  tr(K+\lambda I)} (\|  d_{A_{22}^{k}}^k \|_2 + h(J)),   \nonumber  \\
\label{eq: d_I_44}
\|  d_{I_{44}^{k} }^{k+1}  \|_2
&\leq &
\big( m^{-1} tr(K) \theta_{J,J}  + \lambda \big)  \| \Delta_{A_R^k}^k  \|_2  + 
m^{-1} tr(K)  \theta_{J,J} D_2 (A_R^k)  +
h(J) + \lambda,  \\
\label{eq:D2 iteration}
D_2 ( A_R^{k+1}  )  &\leq &\gamma D_2(A_R^k) + \gamma m  h(J)  / \theta_{J,J} tr(K+\lambda I)  + \lambda.
\end{eqnarray}
\end{lemma}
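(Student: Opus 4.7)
The plan is to prove the three bounds in sequence, leaning on the partition of the index universe into $A_1^k,A_2^k,I_3^k,I_4^k$ (and the sub-partitions $A_{11}^k,A_{22}^k,I_{33}^k,I_{44}^k$) together with the KKT/thresholding characterization of a coordinate-wise minimizer supplied by Lemma \ref{lem:minimizer} and the operator-norm bounds from Lemma \ref{lem:eigen value}. The key organizing observation is that $A^o \setminus A_R^{k+1} = A_{11}^k \cup A_{22}^k$, because $A_{11}^k$ collects entries of $A_1^k$ that the algorithm \emph{drops} at the next step, while $A_{22}^k$ collects entries of $A_2^k$ that the algorithm \emph{fails to pick up}. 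This gives the decomposition that drives \eqref{eq: D2 decomposition}.

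For inequality \eqref{eq: D2 decomposition}, I would split $D_2(A_R^{k+1}) \le \|\bm{C}^*|_{A_{11}^k}\|_2 + \|\bm{C}^*|_{A_{22}^k}\|_2$. On $A_{11}^k \subseteq A_R^k$ I write $\bm{C}^*_\ell = \bm{C}^{k+1}_\ell - \Delta^k_\ell$ and apply the triangle inequality, yielding the first two terms. On $A_{22}^k \subseteq I^o \setminus A_R^k$ I substitute the explicit form $d^k_\ell = P_\ell^{-1}\check X_\ell^\top(Z - \check X \bm{C}^k) - nm\lambda_K P_\ell^{-1}\Sigma \bm{C}^k$ from Lemma \ref{lem:minimizer}, expand $Z = X_{A^o}\bm{C}^*_{A^o}+ \text{noise}$, and solve for $\bm{C}^*|_{A_{22}^k}$. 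The sparse-Riesz control of Assumption \ref{assum: SRC} converts the off-diagonal cross-products into $\theta_{J,J}$ factors multiplying $\|\Delta^k_{A_R^k}\|_2$ and $D_2(A_R^k)$, while $\|P_\ell^{-1}\|$ produces the prefactor $1/(c_-(J) tr(K+\lambda_K I))$; the stochastic residual collapses into $h(J)$.

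For inequality \eqref{eq: d_I_44}, I would compute $d^{k+1}_{I_{44}^k}$ from the same representation, but now at iteration $k+1$ and restricted to indices in $I_{44}^k$ that were newly activated. After substituting the residual decomposition and using $\mathrm{tr}(K)/m$ as the operator-norm bound on the diagonal blocks of $P_\ell$, the cross-terms produce $(m^{-1}\mathrm{tr}(K)\theta_{J,J}+\lambda)\|\Delta^k_{A_R^k}\|_2$ and $m^{-1}\mathrm{tr}(K)\theta_{J,J}D_2(A_R^k)$, with $h(J)+\lambda$ absorbing the noise and regularization bias.

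The main obstacle is the contraction \eqref{eq:D2 iteration}. The idea is to exploit the swap balance $|A_{11}^k|+|A_{22}^k|=|I_{33}^k|+|I_{44}^k|$ together with the hard-thresholding criterion: indices $\ell \in A_{22}^k$ were \emph{not} selected at step $k+1$, so $(\bm{C}^{k+1}_\ell + d^{k+1}_\ell)^\top(P_\ell/nm)(\bm{C}^{k+1}_\ell + d^{k+1}_\ell) < 2\lambda$, whereas the symmetric indices in $I_{44}^k$ \emph{were} selected, giving the reverse inequality. Summing these relations and matching like-sized subsets lets me trade $\|\bm{C}^*|_{A_{22}^k}\|_2$ against $\|d^{k+1}_{I_{44}^k}\|_2$ via \eqref{eq: d_I_44}; combined with \eqref{eq: triangle_k} for $\|\Delta^k\|_2$, the entire right-hand side of \eqref{eq: D2 decomposition} is bounded by $\gamma D_2(A_R^k) + \gamma m h(J)/(\theta_{J,J} tr(K+\lambda I)) + \lambda$. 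The delicate bookkeeping is to identify $\gamma := (1 + \text{ratios of }\theta_{J,J},c_\pm(J),\mathrm{tr}(K))$ and verify $\gamma<1$ under the sparse-Riesz regime in which $\theta_{J,J}$ is strictly dominated by $c_-(J)$; this is the step that actually makes the algorithm converge.
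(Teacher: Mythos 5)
Your plan follows essentially the same route as the paper's proof: the split $A^o\setminus A_R^{k+1}=A_{11}^k\cup A_{22}^k$ plus the triangle inequality on $A_{11}^k$ (which the paper imports as Lemma 24 of \cite{huang2018constructive}), the explicit lower/upper bounds on $d^{k+1}_{A_{22}^k}$ and $d^{k+1}_{I_{44}^k}$ via the sparse-Riesz constants $c_-(J),\theta_{J,J}$ to obtain \eqref{eq: D2 decomposition} and \eqref{eq: d_I_44}, and the hard-thresholding comparison between dropped and newly added indices (the content of Lemmas 25--26 there) combined with \eqref{eq: triangle_k} to close the recursion \eqref{eq:D2 iteration}. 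Apart from the minor slip that $A_{22}^k\subseteq A^o\setminus A_R^k$ rather than $I^o\setminus A_R^k$, and the remark that verifying $\gamma<1$ belongs to Theorem \ref{thm:imaging_approximation_error} rather than to this lemma, your outline matches the paper's argument.
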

\begin{proof}
	According to Lemma 24 of \cite{huang2018constructive}, we have 
\begin{eqnarray}
\label{eq: D2 ineuqalities}
D_2 ( A_R^{k+1}  )  \leq  \| \bigtriangleup^k_{A_R^k  } \|_2 + \|  \bm{C}^{k+1}_{A_{11}^k}   \|_2+ \|   \bm{C}^*_{A_{22}^k}   \|_2  . 
\end{eqnarray}
Recall that 
$d_{A_{22}^{k}}^{k+1}= -(nm)^{-1} (X_{A_{22}^{k}}^\top \otimes K_{\bm{u}}^\top  ) [\bm{Z} - X_{A_R^{k} } \otimes  K_{\bm{u}} \bm{C}_{A_R^k}^{k+1} ] + \lambda  \bm{C}_{ A_{22}^{k+1} }$
and $ \Delta_{A_R^k}^k  = \bm{C}_{A_R^k}^{k+1} - \bm{C}_{A_R^k}^* $,
one can derive that 
\begin{eqnarray*}
&& \|d_{A_{22}^{k}}^{k+1}\|_2 \\
&=&
\|  (nm)^{-1} (X_{A_{22}^{k}}^\top \otimes K_{\bm{u}}^\top  ) [  X_{A_R^{k} } \otimes  K_{\bm{u}} \Delta_{A_R^k}^k 
+ 
X_{A_R^{k} } \otimes  K_{\bm{u}} \bm{C}_{A_R^k}^*  
- X_{A^o } \otimes  K_{\bm{u}} \bm{C}_{A}^o - \bm{E} ] + \lambda_K  \bm{C}_{ A_{22}^{k+1} }  \|_2 \\
&=&
\|  (nm)^{-1} (X_{A_{22}^{k}}^\top \otimes K_{\bm{u}}^\top  ) [  X_{A_R^{k} } \otimes  K_{\bm{u}} \Delta_{A_R^k}^k 
-
X_{A_{2}^{k} } \otimes  K_{\bm{u}} \bm{C}_{A_{2}^k}^*  
 - \bm{E} ] + \lambda_K  \bm{C}_{ A_{22}^{k+1} }  \|_2 \\
 & = &
 \|  (nm)^{-1} (X_{A_{22}^{k}}^\top \otimes K_{\bm{u}}^\top  ) [  X_{A_R^{k} } \otimes  K_{\bm{u}} \Delta_{A_R^k}^k 
 - 
 X_{A_{22}^{k} } \otimes  K_{\bm{u}} \bm{C}_{A_{22}^k}^*  
 -
 X_{A_2^k \backslash  A_{22}^{k} } \otimes  K_{\bm{u}} \bm{C}_{A_2^k \backslash  A_{22}^k}^*  
 - \bm{E} ] + \lambda_K  \bm{C}_{ A_{22}^{k+1} }  \|_2 \\
 & \geq &
 m^{-1 } tr(K+\lambda_K I) \Big( c_-( J )  \| \bm{C}_{A_{22}^k}^*   \|_2 -
 (\theta_{J,J} +\lambda) \|\Delta_{A_R^k}^k \|_2 - \theta_{J,J}  \| \Delta_{A_R^k}^k  \|_2 - \theta_{J,J} D_2(A_R^k)
 \Big) 
 - h(J),
\end{eqnarray*}
where the first equality follows from the definition of $d^{k+1}$ and $\bm{Z}$; the second equality follows from the definition of $\Delta^k$; the third equality follows from simple algebra, and the last inequality follows from triangle inequality and the monotonicity property of $c_-(\cdot)$, $\theta_{a,b}$ and the definition of $h(J)=\max_{ A \subset S: |A|=J }  (nm)^{-1} \| ( X_A^\top \otimes K_{\bm{u}}^\top  )  \bm{E}  \|_2$. Plugging the above results into \eqref{eq: D2 ineuqalities}, we  have that \eqref{eq: D2 decomposition} holds.

Similarly, it is easy to deduce that for \eqref{eq: d_I_44}, 
\begin{eqnarray*}
 \|d_{I_{44}^k}^{k+1}\|_2 
	&=&
		\|  (nm)^{-1} ( X_{ I_{44}^k}^\top \otimes K_{\bm{u}}^\top  ) [  X_{A_R^{k} } \otimes  K_{\bm{u}} \Delta_{A_R^k}^k 
	-
	X_{A_2^{k} } \otimes  K_{\bm{u}} \bm{C}_{A_2^k}^*  
	- \bm{E} ] + \lambda_K  (\bm{C}_{ I_{44}^k} -  \bm{C}_{ I_{44}^k}^* ) + \lambda_K \bm{C}_{ I_{44}^k}^*  \|_2  \\
	& \leq & 
	m^{-1} tr(K+\lambda_K I) \big( \theta_{J,J} \| \Delta_{A_R^k}^k  \|_2  +   \theta_{J,J} D_2 (A_R^k)   \big) +h(J).
\end{eqnarray*}

Similar to Lemma 25 and Lemma 26 of \cite{huang2018constructive}, we  have 
$\|   \bm{C}^{k+1}_{A_{11}^k}   \|_2 +  \|  d_{A_{22}^{k+1} } \|_2 \leq \sqrt{2} ( \|    \bm{C}^{k+1}_{I_{33}^k}  \|_2 +  \|  d_{I_{44}^{k+1} } \|_2)$ and
$ \|   \bm{C}^{k+1}_{I_{33}^k}  \|_2    \leq \| \bigtriangleup^k_{I_{33}^k } \|_2$. It follows \eqref{eq: d_I_44} that
\begin{eqnarray*}
  &&  \| \bigtriangleup^k_{I_{33}^k } \|_2 +  \|  d_{I_{44}^{k+1} } \|_2
    \leq
    \|\Delta^k \|_2 + \|  d_{I_{44}^{k+1} } \|_2   \\
    &\leq& 
    (m^{-1} tr(K+\lambda_K I)  \theta_{J,J}   +1) \|\Delta^k \|_2 + m^{-1} tr(K+\lambda_K I)  \theta_{J,J}  D_2(A_R^k) + h(J).
%     &\leq &
%     { \theta_{J,J}  tr(K) \over c_-(J) tr(K + \lambda_K I) }  (m^{-1} tr(K+\lambda_K I)  \theta_{J,J}   +1) D_2 (A_R^k)  +  (m^{-1} tr(K+\lambda_K I)  \theta_{J,J}   +1)h(J, \lambda_K)\\
%     &+&
% m^{-1} tr(K+\lambda I)   \theta_{J,J} D_2 (A_R^k)   +h(J),
\end{eqnarray*}
It then follows that 
\begin{eqnarray*}
D_2 ( A_R^{k+1}  ) 
& \leq& 
({ 2 \theta_{J,J} + \lambda \over c_-(J) } +1)
  \|    \bigtriangleup^k_{A_R^k  } \|_2 +  {\theta_{J,J} \over  c_-(J) } D_2 ( A_R^{k}  ) \\
  &&+ {m \over  c_-(J)  tr(K+\lambda I)}  h(J) +\max \{1,{m \over  c_-(J)  tr(K+\lambda I)} \}  
  ( \|   \bm{C}^{k+1}_{A_{11}^k}   \|_2 + \|d_{A_{22}^k}^k\|_2) \\
  &\leq &
 \Big( \max \{\sqrt{2},{m\sqrt{2} \over  c_-(J)  tr(K+\lambda I)} \}   (m^{-1} tr(K+\lambda_K I)  \theta_{J,J}   +1) + ({ 2 \theta_{J,J} + \lambda \over c_-(J) } +1) \Big) \|    \bigtriangleup^k_{A_R^k  } \|_2 \\
 &&+
 \Big( \max \{\sqrt{2},{m\sqrt{2} \over  c_-(J)  tr(K+\lambda I)} \}   m^{-1} tr(K+\lambda_K I) \theta_{J,J}  + {\theta_{J,J} \over  c_-(J) } \Big) D_2 ( A_R^{k}  ) \\
 &&+
\Big( \max \{\sqrt{2},{m\sqrt{2} \over  c_-(J)  tr(K+\lambda I)} \} +  {m \over  c_-(J)  tr(K+\lambda I)}   \Big) h(J)
\end{eqnarray*}
According to \eqref{eq: triangle_k} that
$\| \bigtriangleup^k_{A_R^k  } \|_2 \leq \Big( 1+ { \theta_{J,J}  tr(K) \over c_-(J) tr(K + \lambda_K I) } \Big)  D_2(A^k)+ 
h(J, \lambda_K ) $, then
$$
D_2 ( A_R^{k+1}  ) \leq \gamma D_2 ( A_R^{k}  )  + \gamma h(J, \lambda_K),
$$
where
\begin{eqnarray}
\label{eq:gamma_C_def}
\gamma&=& \Big( \max \{\sqrt{2},{m\sqrt{2} \over  c_-(J)  tr(K+\lambda I)} \}   (m^{-1} tr(K+\lambda_K I)  \theta_{J,J}   +1) + ({ 2 \theta_{J,J} + \lambda \over c_-(J) } +1) \Big)  \nonumber \\
&&\cdot \Big( 1+ { \theta_{J,J}  tr(K) \over c_-(J) tr(K + \lambda_K I) } \Big).
\end{eqnarray}

\end{proof}

\begin{lemma}
	\label{lem: Tn_T}
	If the conditions in Assumptions \ref{assum: J}-\ref{assump:error of imaging} are satisfied, and data splitting is conducted such that the outcome model and the exposure model are estimated using different samples.
    For $T=K^{1/2} E( \tilde{\bm{Z}}(t) \tilde{\bm{Z}}(s)  ) K^{1/2}$ and $T_n=K^{1/2}  C_n(s,t) K^{1/2}$ with $C_n(s,t)=n^{-1/2} \sum_{i=1}^n \widehat{\bm{Z}}_i(s) \widehat{\bm{Z}}_i(t) $, then for
    any $\nu \in (0, 1)$
	\ba
	\label{eq: T_half_Tn_T_op}
	P \Big( \|  ( T + \lambda I  )^{-1/2} ( T_n  - T)     \|_{op}  \leq  ( 1 + {1 \over \sqrt{ \nu}} ) \sqrt{ {c_1 \over n} tr(T) tr( T(T+\lambda )^{-1} ) } \Big) 
	\geq 
	1-\nu, \\
	\label{eq: Tn_T_op}
	P \Big( \|  ( T_n  - T)     \|_{op}  \leq  ( 1 + {1 \over \sqrt{ \nu}} ) \sqrt{ {c_1 \over n} tr^2(T) } \Big) 
	\geq 
	1-\nu.
	\ea
\end{lemma}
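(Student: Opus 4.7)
The plan is to control both operator norms by the Hilbert--Schmidt norm and then apply Markov's inequality to the expected squared Hilbert--Schmidt norm. Data splitting is essential here: conditionally on the subsample used to produce $\widehat{\bm{Z}}_i$, the predictors $\widehat{\bm{Z}}_i$ are treated as deterministic functions while the evaluations/replicates used to build $T_n$ are still i.i.d.\ across $i$, so the summands of $T_n$ form an i.i.d.\ average of centered rank-one operators after subtracting their mean.

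The first key step is to write the target operator as an empirical mean. Setting $\xi_i = K^{1/2} \widehat{\bm{Z}}_i \otimes K^{1/2}\widehat{\bm{Z}}_i$ (a rank-one operator on $L_2$), I would show $T_n - T = n^{-1}\sum_{i=1}^n (\xi_i - \Mean \xi_i) + \bigl(\Mean \xi_1 - T\bigr)$, where the second term is a deterministic bias that captures the discrepancy between using $\widehat{\bm{Z}}$ versus $\widetilde{\bm{Z}}$. For the stochastic term, independence and the rank-one identity $\|A(u\otimes u)\|_{HS}^2 = \|Au\|^2 \|u\|^2$ yield
\begin{eqnarray*}
\Mean \bigl\|(T+\lambda I)^{-1/2}(T_n - \Mean T_n)\bigr\|_{HS}^2
 & \leq & n^{-1} \Mean\bigl[\|(T+\lambda I)^{-1/2} K^{1/2}\widehat{\bm{Z}}_1\|^2 \, \|K^{1/2}\widehat{\bm{Z}}_1\|^2\bigr] \\
 & \leq & c_1 n^{-1}\, \mbox{tr}(T)\, \mbox{tr}\bigl(T(T+\lambda I)^{-1}\bigr),
\end{eqnarray*}
where the last step uses the moment bound from Assumption \ref{assump:error of imaging} together with $\Mean\|K^{1/2}\widetilde{\bm{Z}}\|^2 = \mbox{tr}(T)$ and $\Mean\|(T+\lambda I)^{-1/2} K^{1/2}\widetilde{\bm{Z}}\|^2 = \mbox{tr}(T(T+\lambda I)^{-1})$. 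Markov's inequality then gives $\nu^{-1/2}$ times this bound with probability at least $1-\nu$.

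For the bias piece $\Mean \xi_1 - T$, I would bound $\|(T+\lambda I)^{-1/2}(\Mean \xi_1 - T)\|_{op}$ by the same quantity $\sqrt{c_1 n^{-1} \mbox{tr}(T)\, \mbox{tr}(T(T+\lambda I)^{-1})}$ using Theorem \ref{thm:imaging_approximation_error}, which controls $\Mean\|\widehat{\bm{Z}} - \widetilde{\bm{Z}}\|$ at the appropriate rate. A triangle inequality then yields the factor $(1 + 1/\sqrt{\nu})$ that appears in \eqref{eq: T_half_Tn_T_op}. The bound in \eqref{eq: Tn_T_op} follows from the same argument with $\lambda = 0$ in the preconditioner, in which case $\mbox{tr}(T(T+\lambda I)^{-1})$ is replaced by $\mbox{tr}(T)/\mbox{tr}(T) \cdot \mbox{tr}(T) = \mbox{tr}(T)$, producing the $\mbox{tr}^2(T)$ term.

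The main obstacle I anticipate is the bias piece: since $T$ is defined via $\widetilde{\bm{Z}}$ but $T_n$ is built from $\widehat{\bm{Z}}$, one must verify that the imaging estimation error from the held-out sample propagates at the correct operator-norm rate, and in particular at the same rate as the stochastic fluctuation so that it contributes only a constant $+1$ to the prefactor rather than dominating. This is where data splitting combined with Theorem \ref{thm:imaging_approximation_error} does the essential work, since it permits treating $\widehat{\bm{Z}}$ as independent of the outcome-model sample and lets the approximation error be quantified uniformly via $h(J,\lambda_K)$.
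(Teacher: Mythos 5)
Your overall engine---bounding the operator norm by a (weighted) Hilbert--Schmidt quantity, computing a second moment, and converting it into a high-probability bound with a Markov/Chebyshev step that produces the $1/\sqrt{\nu}$ factor---is the same as the paper's, and your variance computation for the centered fluctuation essentially reproduces the paper's bound $c_1 s_j s_k/n$ for the weighted cross moments. The genuine gap is your bias term $\Mean\xi_1 - T$. You assert that $\|(T+\lambda I)^{-1/2}(\Mean\xi_1-T)\|_{op}\leq \sqrt{c_1 n^{-1}\,tr(T)\,tr(T(T+\lambda I)^{-1})}$ ``using Theorem \ref{thm:imaging_approximation_error},'' but that theorem only controls $\|\widehat{\bm C}-\bm C^*\|_2$ at the rate $b_1\gamma^k\|\bm C^*\|_2+b_2\,h(J,\lambda_K)$, where $h(J,\lambda_K)$ carries a $\sqrt{J\log (p/\nu)}$ inflation and a smoothing-bias term of order $J\sqrt{\lambda_K}$; this is in general strictly larger than the $n^{-1/2}$ fluctuation scale you need, and it does not vanish with $n$ alone. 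Moreover, $\Mean\xi_1-T$ contains a cross term that is \emph{linear} in $\widehat{\bm C}-\bm C^*$ (of the form $K^{1/2}\Mean[X^\top(\widehat{\bm C}-\bm C^*)(\cdot)\,{\bm C^*}^\top X]K^{1/2}$ plus its adjoint), so the bias inherits that slower rate; the conclusion that it ``contributes only a constant $+1$ to the prefactor'' therefore does not follow as stated. The paper avoids needing any rate for $\widehat{\bm C}$ here: it splits $T_n-T=(T_n-\tilde T_n)+(\tilde T_n-T)$ with $\tilde T_n$ built from the true $\widetilde{\bm Z}_i$, keeps the discrepancy term random, rewrites it as the main term multiplied by relative errors $(\widehat{\bm C}-\bm C^*)/\bm C^*$ that are uniformly $o_p(1)$ (only consistency is invoked), and then bounds the weighted second moment of that product by $c_1 s_js_k/n$ via the fourth-moment Assumption \ref{assum: 4-th order}, applying the same Chebyshev step to each of the two pieces. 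To rescue your split you would have to either impose and verify extra rate conditions making $h(J,\lambda_K)$ comparable to $n^{-1/2}\sqrt{tr(T)tr(T(T+\lambda I)^{-1})}$, or repackage the discrepancy multiplicatively as the paper does.

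A secondary slip: you derive \eqref{eq: Tn_T_op} by ``setting $\lambda=0$ in the preconditioner,'' but with $\lambda=0$ the preconditioner is $T^{-1/2}$, not the identity, and $tr\big(T(T+0\cdot I)^{-1}\big)=tr(I)$ diverges in infinite dimensions. The correct route, which is what the paper does, is to drop the preconditioner entirely, so the spectral weights $1/(s_j+\lambda)$ become $1$ and the double sum $\sum_{j,k}s_js_k$ gives $tr^2(T)$. Also note that your fluctuation bound still needs the same transfer from moments of $\widehat{\bm Z}$ to moments of $\widetilde{\bm Z}$ (again via the uniform relative-error argument), which you should make explicit rather than citing Assumption \ref{assump:error of imaging}, which concerns the imaging measurement error, not the fourth-moment condition actually used.
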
	
\begin{proof}
	Recall that by Mercer's Theorem, the operator kernel $T$ admits the spectra decomposition $ T (s, t) = \sum_{k=1}^\infty s_k \varphi_k(s)  \varphi_k(t)$,
	where $s_1 > s_2 > \cdots$ are eigenvalues of $T$, and $\{ \varphi_k \}$ are the eigenfunctions.
	Then for $h \in L_2 (\mathcal{T}), \| h \|_{L_2} =1$, one can have $ h (t)= \sum_{k=1}^\infty h_k \varphi_k (t)$.
	
	By definition, we can see that 
	\bas
	&& \|  ( T + \lambda I  )^{-1/2} ( T_n  - T)     \|_{op}
	=
	\sup_{h: \| h \|_{L_2} =1} | \langle  ( T + \lambda I  )^{-1/2} h, ( T_n  - T)h \rangle    |  \\
	& =&
	\sup_{h: \| h \|_{L_2} =1}  \sum_{j,k} { h_j h_k \over ( s_j + \lambda )^{1/2} } \langle \varphi_j, (T_n- T) \varphi_k   \rangle  \\
	& \leq &
	\sup_{h: \| h \|_{L_2} =1}  ( \sum_{j,k} h_j^2 h_k^2 )^{1/2} 
	\big(    \sum_{j,k}   { 1\over  s_j + \lambda }  \langle \varphi_j, (T_n- T) \varphi_k   \rangle ^2    \big)^{1/2} \\
	& \leq &
	\big(    \sum_{j,k} { 1\over  s_j + \lambda  }  \langle \varphi_j, (T_n- T) \varphi_k   \rangle ^2    \big)^{1/2} := H_n,
	\eas	
	where $H_n$ corresponds to the last term.
	
	Denote $\tilde C_n(s,t)=n^{-1/2} \sum_{i=1}^n \tilde{\bm{Z}}_i(s) \tilde{\bm{Z}}_i(t)$ and $\tilde T_n= K^{1/2} \tilde C_n K^{1/2}$,Take the expectation of $H_n$, then
	\bas
	E( H_n ) &\leq&  \big(    \sum_{j,k} { 1\over  s_j + \lambda  } E  \langle \varphi_j, (T_n- T) \varphi_k   \rangle ^2    \big)^{1/2} \\
    &=&
    \big(    \sum_{j,k} { 1\over  s_j + \lambda  } E  \langle \varphi_j, ( T_n- \tilde T_n  \rangle ^2    \big)^{1/2}
    +\big(    \sum_{j,k} { 1\over  s_j + \lambda  } E  \langle \varphi_j, (\tilde T_n - T) \varphi_k   \rangle ^2    \big)^{1/2} \\
    &=&
    H_{n1} + H_{n2},
	\eas
    where $H_{n1}$ and $H_{n2}$ denote the above two terms, respectively.
    
	Direct calculations lead to 
	\bas
	&& E  \langle \varphi_j, (T_n- \tilde T_n) \varphi_k   \rangle ^2
	=
	E \Big(
	\int_{\mathcal{T}} \int_{\mathcal{T}} K^{1/2} \varphi_j(s)( C_n(s, t) - \tilde C_n(s, t) ) K^{1/2} \varphi_k(t) ds dt \Big)^2 \\
	& =&
	{1\over n} E \Big(
	\int_{\mathcal{T}} \int_{\mathcal{T}} K^{1/2} \varphi_j(s) [ \widehat{\bm{Z}}(s) \widehat{ \bm{Z} }(t) -\widetilde{ \bm{Z} }(s) \widetilde{\bm{Z}}(t)  ]K^{1/2} \varphi_k(t) ds dt 
	\Big)^2   \\
    &=&
    {1\over n} E \Big(
	\int_{\mathcal{T}} \int_{\mathcal{T}} K^{1/2} \varphi_j(s) [ X_i^\top \widehat{\bm{C}}(t) \widehat{\bm{C}}^\top(s) X_i -X_i^\top {\bm{C}}(t) {\bm{C}}^\top(s) X_i  ]K^{1/2} \varphi_k(t) ds dt \\
    &=&
   {1\over n} E \Big(
	\int_{\mathcal{T}} \int_{\mathcal{T}} K^{1/2} \varphi_j(s) X_i^\top [  \big( \widehat{\bm{C}}(t) -\bm{C}(t) \big) \big(\widehat{\bm{C}}(s) -\bm{C}(s) \big)^\top +   2 \big(\widehat{\bm{C}}(t) -\bm{C}(t) \big) \bm{C}^\top(s)   ] X_i K^{1/2} \varphi_k(t) ds dt  \\ 
	& = &
	{1\over n} E \Big(
	\int_{\mathcal{T}} \int_{\mathcal{T}} K^{1/2} \varphi_j(s) \widetilde{\bm{Z}}(s) [  \big( \frac{\widehat{\bm{C}}(t) -\bm{C}(t)}{\bm{C}(t)} \big) \big(\frac{ \widehat{\bm{C}}(s) -\bm{C}(s)} {\bm{C}(s)} \big)^\top +   2 \big( \frac{ \widehat{\bm{C}}(t) -\bm{C}(t)}{\bm{C}(t)} \big)    ]
    \widetilde{\bm{Z}} (t)  K^{1/2} \varphi_k(t) ds dt 
	\Big)^2    \\
	& \leq &
	{1\over n} E ^{1/2} \Big(
	\int_{\mathcal{T}} \int_{\mathcal{T}} K^{1/2} \varphi_j(t) \widetilde{\bm{Z}}(t) dt \Big)^4 
	E ^{1/2} \Big(
	\int_{\mathcal{T}} \int_{\mathcal{T}} K^{1/2} \varphi_k(t) \widetilde{\bm{Z}}(t) dt \Big)^4     \\
	& \leq & 
	{c_1 \over n}
	E \Big(
	\int_{\mathcal{T}} \int_{\mathcal{T}} K^{1/2} \varphi_j(t) \widetilde{\bm{Z}}(t) dt \Big)^2 
	E \Big(
	\int_{\mathcal{T}} \int_{\mathcal{T}} K^{1/2} \varphi_k(t) \widetilde{\bm{Z}}(t) dt \Big)^2 ,
	\eas
	where the first inequality follows from that $\sup_t (\widehat{\bm{C}}(t) -\bm{C}(t))/\bm{C}(t) =o_p(1)$ due to $\|\widehat{\bm{C}}(t) -\bm{C}(t) \|_2^2=o_p(1)$, and
    the last inequality follows directly from Assumption \ref{assum: 4-th order}.
	It is obvious that $ E \Big(
	\int_{\mathcal{T}} \int_{\mathcal{T}} K^{1/2} \varphi_j(t) \widetilde{\bm{Z}} Z(t) dt \Big)^2 =\langle T \varphi_j, \varphi_j \rangle = s_j  $,
	then 
	\bas
	E  \langle \varphi_j, (T_n- \tilde T_n) \varphi_k   \rangle ^2 \leq  c_1 s_j s_k/n. 
	\eas
	Hence, we obtain that
	\bas
	E( H_{n1} ) 
	& \leq &   ({ c_1 \over n})^{1/2}  \Big( \sum_{j,k} {   s_j s_k \over  s_j + \lambda  }  \Big)^{1/2} =
	({ c_1 \over n})^{1/2} \Big(  tr(T) tr( T(T + \lambda I)^{-1} ) \Big)^{1/2} ,  \\
	E(  H_{n1}^2 )  &\leq &
	{ c_1 \over n}  tr(T) tr( T(T + \lambda I)^{-1} ) .
	\eas
	According to the concentration inequity that
	\bas
	P( |  H_{n1} - E  H_{n1}| \geq t ) \leq {E  H_{n1}^2 \over t^2},
	\eas
	taking $t = {1 \over \sqrt{ \nu}} \sqrt{ c_1 tr(T) tr( T(T + \lambda I)^{-1} ) /n }$, we have
	\bas
	P \Big( H_{n1}  \geq  ( 1 + {1 \over \sqrt{ \nu}} ) \sqrt{ {c_1 \over n} tr(T) tr( T(T+\lambda )^{-1} ) }  \Big)  
	\leq 
	\nu.
	\eas
    Similarly, we can show that 
    \bas
	P \Big( H_{n2}  \geq  ( 1 + {1 \over \sqrt{ \nu}} ) \sqrt{ {c_1 \over n} tr(T) tr( T(T+\lambda )^{-1} ) }  \Big)  
	\leq 
	\nu.
    \eas
	Then it follows with probability at least $1 - \nu$, 
	\bas
	\|  ( T + \lambda I  )^{-1/2} ( T_n  - T)     \|_{op} \leq H_n
	\leq
	( 1 + {1 \over \sqrt{ \delta}} ) \sqrt{ {c_1 \over n} tr(T) tr( T(T+\lambda )^{-1} ) }  .
	\eas
	Hence, \eqref{eq: T_half_Tn_T_op} holds. By similar steps, we can prove that \eqref{eq: Tn_T_op} holds and we omit here. 
\end{proof}

\bibliographystyle{chicago} 
\bibliography{RefFLSEM}